\newtheorem{Definition}{Definition}
\newtheorem{Theorem}{Theorem}
\newtheorem{Example}{Example}[section]
\newtheorem{Remark}{Remark}
\newtheorem{Definition and Notation}{Definition and Notation}
\newtheorem{Lemma}{Lemma}
\newtheorem{Proposition}{Proposition}
\newtheorem{Corollary}{Corollary}
\newenvironment{Proof}[1][Proof]{\noindent {\bf Proof.\;}}{\qed}
\journal{}
\begin{document}

\begin{frontmatter}

\title{Galois hulls of cyclic serial codes over a finite chain ring}

\author[USTHB]{Sarra Talbi}\ead{talbissarra@gmail.com}
\author[USTHB]{Aicha Batoul}\ead{a.batoul@hotmail.fr}
\author[UN]{Alexandre Fotue Tabue}\ead{alexfotue@gmail.com}
\author[UVa]{Edgar Mart\'inez-Moro\tnoteref{t1}}\ead{edgar.martinez@uva.es}

\tnotetext[t1]{This author is partially funded by the Spanish
Research Agency (AEI) under Grant PGC2018-096446-B-C21. }

\address[USTHB]{Faculty of Mathematics USTHB, University of Science and Technology of Algiers, Algeria}
\address[UN]{Department of Mathematics, HTTC Bertoua, The University of Ngaoundéré, Cameroon}
\address[UVa]{Institute of Mathematics, University of Valladolid, Castilla, Spain}

\begin{abstract}
In this paper we explore  some properties of Galois hulls of
cyclic serial codes over a chain ring  and  we devise an algorithm
for computing all the possible parameters of the Euclidean hulls
of that codes. We also establish
 the average $p^r$-dimension of the Euclidean
hull, where $\mathbb{F}_{p^r}$ is the residue field of $R$, and we
provide some results of its relative growth.
\end{abstract}

\begin{keyword} Finite chain ring, cyclotomic coset, cyclic serial code, Galois dual code,
average $p^r$-dimension.

\emph{AMS Subject Classification 2010:}   13B02, 94B05.
\end{keyword}

\end{frontmatter}

\section{Introduction}

The Euclidean hull of a linear code is defined to be the
intersection of a code and its Euclidean dual. It was
originally introduced by Assume and Key\cite{AK90} to classify
finite projective planes. Knowing the hull of a linear code is a
key point to determine the complexity of some algorithms for
investigating permutation of two linear codes and computing the
automorphism group of the code, see \cite{Leo82, Pet97, Sen00}. In
general, those algorithms have been proved to be very effective if
the size of the Euclidean hull is small. In the case of codes over
finite fields, Sendrier \cite{Sen97} established the number of
linear codes of length $n$ with a fix dimension Euclidean hull,
also Skersys\cite{Ske03} discussed the average dimension of the
Euclidean hull of cyclic codes. Later, Sangwisut et al.
\cite{SJLU15} determined the dimension of the Euclidean hull of
cyclic and negacyclic codes of length $n$ over a finite field.
Furthermore, Jitman and Sangwisut \cite{JS18} gave the average
Euclidean hull dimension of negacyclic codes over a finite field.
Recently, the concept of the Euclidean hulls has been generalized
to cyclic codes of odd length over $\mathbb{Z}_4$ by Jitman et al.
\cite{JSU19} where the authors provided  an algorithm to determine
the type of the Euclidean hull of cyclic codes over
$\mathbb{Z}_4$.

An important class of linear codes over rings is the class of
cyclic codes.
 They have been studied in a
series of papers ( see \cite{DL05, FM19, AGG, HL00, Rua, NS00}). In
particular, Dinho and Permouth \cite{DL05} gave algebraic
structure of simple root cyclic codes over finite chain rings $R$.
Mart\'inez and R\'{u}a \cite{Rua} generalized these results to
multivariable cyclic codes over $R$. Free cyclic serial codes have
been determined by using cyclotomic cosets and trace map over
finite chain rings $R$ by Fotue and Mouaha in \cite {FM19}. It is
clear that the Euclidean hull of cyclic codes is also cyclic, two
special families of cyclic codes are of great interest, namely
linear complementary dual codes, which are codes whose Euclidean
hull is trivial (see for example \cite{BFMBB20}) and
self-orthogonal codes, which are linear codes whose Euclidean
hulls is the whole code (see for example \cite{SKS18}. These works
motivate us to study the Galois hulls of cyclic codes over finite
chain rings. In this paper, we focus on the study of the Galois
hulls of cyclic codes of length $n$ over a finite chain ring $R$
of parameters $(p, r, a, e, r)$ such that $n$ and $p$ are coprime.
This is the serial case stated in \cite{Rua}, i.e. the cyclic
codes over $R$ whose length $n$ is coprime with $p$ are serial
modules over $R$.  We will generalize the techniques used in
\cite{JSU19}( for $\mathbb{Z}_4$ ) to obtain the parameters and
the average $p^r$-dimensions Euclidean hull of cyclic serial codes
over finite chain rings.

The paper is organized as follows. In Section~\ref{sec:prel}, some
preliminary concepts and some basic results are recalled. In
Section 3, we characterize Galois hulls of cyclic serial code over
finite chain rings. Section 4  shows  the parameters and the
$p^r$-dimensions of the Euclidean hull of cyclic serial codes.
Finally, the average dimension of the Euclidean hull of cyclic
serial codes is computed in Section 5.

\section{Preliminaries}\label{sec:prel}

\subsection{Chain rings}
For an account on the results on finite rings in this section
check \cite{M74}. Throughout  this paper, $p$ is a prime number,
$a, e, r$ are positive integers and $\mathbb{Z}_{p^a}$ is the
residue ring of integers modulo $p^a$.  $R$ will denote a finite
commutative chain ring of characteristic $p^a$ and we will denote
its maximal ideal by $\texttt{J}(R)$ and   $R^\times$ will denote
its multiplicative group. Note that since $R$ is a chain ring  it
is a principal ideal ring, we will denote as $\theta\in R$  a
generator of $\texttt{J}(R)$ and  the ideals of $R$ form a chain
under inclusion $\{ 0\} = \texttt{J}(R)^s\subsetneq
\texttt{J}(R)^{s-1}\subsetneq \cdots\subsetneq
\texttt{J}(R)\subsetneq R$ and $\texttt{J}(R)= \theta^t R$ for $0
\leq t < s$.

The ring epimorphism $\pi : R\rightarrow
 R/\texttt{J}(R)\simeq\mathbb{F}_{p^r}$
naturally extends a ring epimorphism of polynomial rings from
$R[X]$  to $\mathbb{F}_{p^r}[X]$ and on the other hand it
naturally induces an $R$-module epimorphism from $R^n$ to
$(\mathbb{F}_{p^r})^n$, we will abuse the notation and  we will
denote both mappings by $\pi$.

A polynomial $f$ is \emph{basic-irreducible} over $R$ if $\pi(f)$
is irreducible over $\mathbb{F}_{p^r}$. We will denote by denoted
$\texttt{GR}(p^a,r)$ the \emph{Galois ring} of characteristic
$p^a$ and cardinality $p^{ra}$, which is  the quotient ring
$\mathbb{Z}_{p^a}[X]/\langle\,f\,\rangle$ where
$\langle\,f\,\rangle\in\mathbb{Z}_{p^a}[X]$ is the ideal generated
by a monic basic-irreducible polynomial $f$ of degree $r$ over
$\mathbb{Z}_{p^a}$. The  ring  $\texttt{GR}(p^a,r)$  is uniquely
determined up to ring-isomorphism and  for all positive integers
$r_1$ and $r_2$, the  ring $\texttt{GR}(p^a,r_1)$ is a subring of
$\texttt{GR}(p^a,r_2)$ if and only if $r_1$ divides $r_2$.

From the above setting for a given finite chain ring $R$ there is
a 5-tuple $(p, a, r, e, s)$ of positive integers, the so-called
parameters of $R$, such that $R=\texttt{GR}(p^a,r)[\theta]$, and
$\langle\theta\rangle=\texttt{J}(R)$,  $\theta^e\in
p(\mathbb{Z}_{p^a}[\theta])^\times$ and $\theta^{s-1} \neq
\theta^s = 0_R$. From now on, we will denote as $S_d$ the subring
of $R$ such that $S_d:=\texttt{GR}(p^a, d)[\theta]$ and $d$ is a
divisor of $r$. The \emph{Teichmüller set} of $R$ will be denoted
as $\Gamma(R)$ and it is defined as $\Gamma(R)=\{ 0 \}\cup\{a\in
R\,:\,a^{p^r-1}=1\}$. It is the only cyclic subgroup of $R^\times$
isomorphic to the multiplicative group of $\mathbb{F}_{p^r}$. For
each element $a$ in $R$, there is a unique $ (a_0, a_1, \cdots,
a_{s-1})$ in $\Gamma(R)^s$ such that $a =
a_0+a_1\theta+\cdots+a_{s-1}\theta^{s-1}$.

Let $R$ and $S$ be two finite commutative chain rings such that
$S\subset R$ and $1_R=1_S$, then we say that $R$ is an extension
of $S$. Provided that $J(R)$ and  $J(S)$ are their maximal ideals,
we say that the extension is separable if $J(S)R=J(R)$. The group
$G$ of all automorphism $\gamma$ of $R$ such that $\gamma_{|S}$ is
the identity is known as the Galois group of the extension. A
separable extension is called Galois if $\{r\in R\,:\,
\gamma(r)=r,\forall \gamma\in G\}=S$. This condition is equivalent
to the condition $R$ is ring-isomorphic to $S[X]/\langle
f\rangle$, where $f$ is a monic basic irreducible polynomial in
$S[X]$, see \cite[Section 4]{W92}\cite[Theorem XIV.8]{M74}.

We will denote by  $\texttt{Aut}_{S}(R)$  the Galois group of the
Galois extension $S|R$. Let $d$ be positive divisor of $r$, and
let us consider $S=\mathbb{Z}_{p^a}[\theta]$, $R=
\texttt{GR}(p^a,r)[\theta]$, $S_d=\texttt{GR}(p^a,d)[\theta]$, and
$$\texttt{GSub}(S|R):=\{S_d\;:\; d \text{ is a divisor of }r
\text{ and }\mathbb{Z}_{p^a}[\theta]\subseteq S_d\}.$$ It is well
known that $\texttt{Aut}_{S}(R)$ is a cyclic group generated by
the \emph{Frobenius automorphism} $\sigma: R\rightarrow R$ given
by
$\sigma\left(\sum\limits_{t=0}^{s-1}a_t\theta^t\right)=\sum\limits_{t=0}^{s-1}a_t^p\theta^t$,
and therefore, the set $\texttt{Sub}(\texttt{Aut}_{S}(R))$ of
subgroups of $\texttt{Aut}_{S}(R)$ is given by
$$\texttt{Sub}(\texttt{Aut}_{S}(R))=\{\langle\sigma^d\rangle\;:\;
d \text{ is a divisor of }r\}.$$ 

There is a   Galois
correspondence $(\texttt{Stab}; \texttt{Fix})$ between
$\texttt{GSub}(S|R)$ and $\texttt{Sub}(\texttt{Aut}_{S}(R))$ as
follows $\texttt{Stab}
:\texttt{GSub}(S|R)\rightarrow\texttt{Sub}(\texttt{Aut}_{S}(R))$
and  $\texttt{Fix}
:\texttt{Sub}(\texttt{Aut}_{S}(R))\rightarrow\texttt{GSub}(S|R)$
where $\texttt{Stab}(S_d)=\langle\sigma^d\rangle$ and
$\texttt{Fix}(\langle\sigma^d\rangle)=S_d$, where $d$ is a divisor
of $r$ (see \cite{FM19}). The diagram
\begin{center}
\includegraphics[width=6cm]{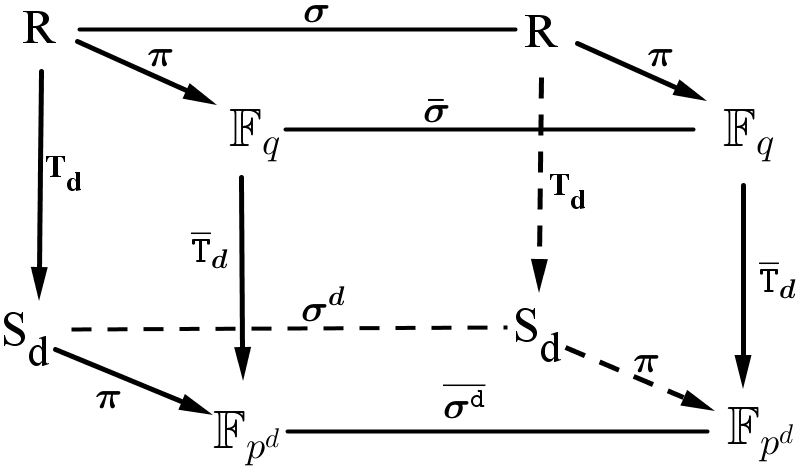}
\end{center}
commutes, where $\overline{\sigma}$ denotes a generator
$\texttt{Aut}_{\mathbb{F}_{p}}(\mathbb{F}_{p^r})$,
$\texttt{T}_d=\sum\limits_{i=0}^{\frac{r}{d}-1}\sigma^{id}$ and
$\overline{\texttt{T}}_d=\sum\limits_{i=0}^{\frac{r}{d}-1}\overline{\sigma}^{id}$.
For any $\textbf{x} = (x_1, \cdots , x_n) \in R^n$, and any matrix
$\mathrm{G} = (a_{i;j})_{k\times n}$ over $R$, $\sigma$ will act
on them component-wise as follows: $\sigma(\textbf{x}) =
(\sigma(x_1), \cdots , \sigma(x_n))$, $\sigma(\mathrm{G}) =
(\sigma(a_{i;j} ))_{k\times n}$.

\subsection{Codes over chain rings}

A \emph{linear code} $C$ of length $n$ over the ring  $R$  is
defined to be a submodule of the $R$-module $R^n$.  We will denote
by
 $\{\textbf{0}\}$, the zero-submodule where $\textbf{0}=(0,0,\ldots,0)\in R^n$. A
linear code $C$ over $R$ is free if $C\cong R^k$ as $R$-modules
for some positive integer $k$.  The residue code of a linear code
$C$ over $R$  is the linear code $\pi(C)$ over $\mathbb{F}_q$
$$\pi(C)=\left\{(\pi(c_0), \pi(c_1), \cdots , \pi(c_{n-1})\,:\,
(c_0, c_1, \cdots, c_{n-1})\in C\right\}.$$ In \cite{FM19}, the
authors introduced and studied the Galois closure of a linear code
$C$ over $R$ of length $n$ as follows,
$\texttt{Cl}_d(C)=\texttt{Ext}(\texttt{T}_d(C))$, where
$\texttt{Ext}(\texttt{T}_d(C))$ is the linear code over $R$ of all
$R$-combinations of codewords in the linear code $\texttt{T}_d(C)$
over $S_d$. A linear code $C$ over $R$ is
$\langle\sigma^d\rangle$-invariant, if $\sigma^d(C)=C$, where $d$
is a divisor of $r$. Recall that for any linear code $C$ over $R$
of length $n$, its subring subcode is given by $\texttt{Res}_d(C)=
C \cap(S_d)^n$. In \cite{FM19} it is shown that
$\langle\sigma^d\rangle$-invariant, if and only if,
$\texttt{T}_d(C)=\texttt{Res}_d(C)$ if and only if,
$C=\texttt{Ext}(\texttt{Res}_d(C))$. For $\ell\in\{0, 1, \ldots ,
r-1\}$ we equip $R^n$ with the $\ell$-Galois inner-product defined
as follows:
$$\langle\,\textbf{u},\,
\textbf{v}\rangle_\ell=\sum\limits_{j=0}^{n-1}u_j\sigma^\ell(v_j),
\quad \hbox{ for all } \textbf{u},\textbf{v}\in  R^n.$$ When
$\ell=0$ it is just the usual Euclidean inner-product and if $r$
is even and $r = 2\ell$ it is the Hermitian inner-product. The
$\ell$-Galois dual of a linear code $C$ over $R$ of length $n$,
denoted $C^{\perp_\ell}$, is defined to be the linear code
$$C^{\perp_\ell}=\left\{\textbf{u}\in R^n\;:\;
 \langle\,\textbf{u},\, \textbf{c}\rangle_\ell=0_R\text{
for all }\textbf{c}\in C\right\}.$$ If $C \subseteq
C^{\perp_\ell}$, then $C$ is said to $\ell$-Galois
\emph{self-orthogonal}. Moreover, $C$ is to be $\ell$-Galois
self-dual if $C = C^{\perp_\ell}$.  The two statements in
Proposition \ref{sidc} below follow immediately from the identity
$$\langle\,\textbf{u},\,
\textbf{v}\rangle_\ell=\langle\,\textbf{u},\,
\sigma^h(\textbf{v})\rangle_{\ell-h}=\sigma^h\left(\langle\,\sigma^{\ell-h}(\textbf{v}),\,
\textbf{u}\rangle_{r-h}\right),$$  for all $0\leq h\leq\ell$,
$\textbf{u}$ and $\textbf{v}$ in $R^n$ where the action is taken
componentwise $\sigma^\ell(\textbf{v})=(\sigma^\ell(v_0),\cdots,
\sigma^\ell(v_{n-1}))$. The following proposition is a generalized
Delsarte's Theorem.

\begin{Proposition}\label{fm19}(\cite[Theorem 3.3]{FM19})
Let $C$ be a linear code over $R$ of length $n$. Then for any
$\ell\in\{0, 1,\cdots, r-1\}$,
$\texttt{T}_d(C^{\perp_\ell})=\left(\texttt{Res}_d(C)\right)^{\perp_\ell}$.
\end{Proposition}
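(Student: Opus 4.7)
The plan is to prove both inclusions, leveraging an adjunction identity between $\texttt{T}_d$ and the inclusion $(S_d)^n\hookrightarrow R^n$ with respect to the $\ell$-Galois inner product.

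First I would establish the core identity
$$\langle \texttt{T}_d(\mathbf{u}),\mathbf{v}\rangle_\ell = \texttt{T}_d\!\bigl(\langle \mathbf{u},\mathbf{v}\rangle_\ell\bigr)$$
for every $\mathbf{u}\in R^n$ and every $\mathbf{v}\in (S_d)^n$. This is a direct computation: after expanding $\texttt{T}_d=\sum_{i=0}^{r/d-1}\sigma^{id}$ and swapping the two summations, each summand rewrites as $\sigma^{id}(u_j)\sigma^\ell(v_j)=\sigma^{id}\!\bigl(u_j\sigma^\ell(v_j)\bigr)$ by using $\sigma^{id}(v_j)=v_j$ (so that $\sigma^{id+\ell}(v_j)=\sigma^\ell(v_j)$), which holds precisely because $v_j\in S_d=\texttt{Fix}(\langle\sigma^d\rangle)$.

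The inclusion $\texttt{T}_d(C^{\perp_\ell})\subseteq(\texttt{Res}_d(C))^{\perp_\ell}$ is then immediate: for $\mathbf{u}\in C^{\perp_\ell}$ and $\mathbf{c}\in\texttt{Res}_d(C)\subseteq C\cap(S_d)^n$, the adjunction yields $\langle\texttt{T}_d(\mathbf{u}),\mathbf{c}\rangle_\ell=\texttt{T}_d(\langle\mathbf{u},\mathbf{c}\rangle_\ell)=\texttt{T}_d(0)=0$, while $\texttt{T}_d(\mathbf{u})$ already lies in $(S_d)^n$. For the reverse inclusion I would argue by cardinality through the auxiliary $S_d$-bilinear pairing $B(\mathbf{u},\mathbf{v}):=\texttt{T}_d(\langle\mathbf{u},\mathbf{v}\rangle_\ell)$ on $R^n$. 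Non-degeneracy of the trace pairing $(x,y)\mapsto\texttt{T}_d(xy)$ on the Galois extension $S_d\hookrightarrow R$ lifts to non-degeneracy of $B$, and --- using that $C$ is $R$-linear together with the bijectivity of $\sigma^\ell$ on $R$ --- one verifies that the $B$-orthogonal of $C$ in $R^n$ coincides with $C^{\perp_\ell}$; the adjunction in turn identifies $(\ker\texttt{T}_d)^n$ with the $B$-orthogonal of $(S_d)^n$. Hence $C^{\perp_\ell}\cap(\ker\texttt{T}_d)^n$ is the $B$-orthogonal of the $S_d$-submodule $C+(S_d)^n$, whose order follows from the standard size formula for non-degenerate forms on free modules over a chain ring. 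Feeding this into the short exact sequence
$$0\to C^{\perp_\ell}\cap(\ker\texttt{T}_d)^n\longrightarrow C^{\perp_\ell}\stackrel{\texttt{T}_d}{\longrightarrow}\texttt{T}_d(C^{\perp_\ell})\to 0$$
together with $|C^{\perp_\ell}|=|R|^n/|C|$ (obtained from $C^{\perp_\ell}=(\sigma^\ell(C))^{\perp_0}$ and Euclidean duality over the chain ring) pins down $|\texttt{T}_d(C^{\perp_\ell})|$, and a parallel count inside $(S_d)^n$ matches it with $|(\texttt{Res}_d(C))^{\perp_\ell}|$; combined with the forward inclusion this forces the desired equality.

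The hard part is the non-degeneracy of the trace pairing on the Galois extension $S_d\hookrightarrow R$ of chain rings, which is what ultimately drives the entire cardinality argument in the reverse inclusion. I would justify it using the presentation $R\simeq S_d[X]/\langle f\rangle$ with $f$ a monic basic irreducible polynomial (so the extension is separable Galois) together with the associated trace-duality theory, citing \cite{M74,W92,FM19}; once this structural ingredient is in place, the rest of the argument is order bookkeeping via the standard duality on free modules over chain rings and closely mirrors the classical Delsarte theorem over finite fields.
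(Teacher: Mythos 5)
The paper does not actually prove this proposition: it is imported verbatim from \cite[Theorem 3.3]{FM19} as a known generalized Delsarte theorem, so there is no internal argument to compare yours against. Judged on its own, your proof is correct and complete in outline. The adjunction $\langle\texttt{T}_d(\mathbf{u}),\mathbf{v}\rangle_\ell=\texttt{T}_d\left(\langle\mathbf{u},\mathbf{v}\rangle_\ell\right)$ for $\mathbf{v}\in(S_d)^n$ is exactly right (it only uses that $\sigma^{id}$ fixes $\sigma^\ell(v_j)\in S_d$), and it yields the easy inclusion. For the converse, your cardinality scheme goes through: $B(\mathbf{u},\mathbf{v})=\texttt{T}_d(\langle\mathbf{u},\mathbf{v}\rangle_\ell)$ is a perfect $S_d$-bilinear form on the free $S_d$-module $R^n$ because the trace form of the unramified Galois extension $S_d\subseteq R$ has unit discriminant; the identity $C^{B}=C^{\perp_\ell}$ does require, as you note, both the $R$-linearity of $C$ and the bijectivity of $\sigma^\ell$, and the orthogonal must consistently be taken in the first slot of the (non-symmetric) pairing to match the paper's convention $C^{\perp_\ell}=\{\mathbf{u}\,:\,\langle\mathbf{u},\mathbf{c}\rangle_\ell=0\}$. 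With that, $(C+(S_d)^n)^{B}=C^{\perp_\ell}\cap(\ker\texttt{T}_d)^n$, quasi-Frobenius duality over $S_d$ gives $|M|\cdot|M^{B}|=|R^n|$ for $S_d$-submodules, and the resulting count $|\texttt{T}_d(C^{\perp_\ell})|=|S_d|^n/|\texttt{Res}_d(C)|=|(\texttt{Res}_d(C))^{\perp_\ell}|$ (the last dual being taken inside $(S_d)^n$, which is the only reading under which the stated equality can hold) closes the argument. This is the classical Delsarte proof transported to chain rings, with module-order bookkeeping replacing the dimension count over a field; the one ingredient you leave as a citation, non-degeneracy of the trace form for a separable Galois extension of finite chain rings, is indeed standard and correctly attributed.
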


 The following proposition is a natural
generalization to finite chain rings of \cite[Proposition
2.2]{HX20}.

\begin{Proposition}\label{sidc}
Let $C$ be a linear code over $R$ of length $n$. Then
\begin{enumerate}
    \item $\left(\sigma^h(C)\right)^{\perp_{\ell}}=\sigma^h(C^{\perp_{\ell}})$, and $C^{\perp_\ell}=\sigma^h(C^{\perp_{\ell-h}})$,
for any $0\leq h\leq\ell$;
    \item $(C^{\perp_\ell})^{\perp_{h}}=\sigma^{2r-\ell-h}(C)$, for all $0\leq \ell, h\leq r-1$.
\end{enumerate}
\end{Proposition}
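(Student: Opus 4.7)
The plan is to reduce both claims to the Euclidean case, where the double-dual identity $(C^{\perp_0})^{\perp_0} = C$ holds for every linear code $C$ over the finite chain ring $R$ thanks to the Frobenius property of $R$. The two workhorse identities recorded just above the statement, namely $\langle u, v\rangle_\ell = \langle u, \sigma^h(v)\rangle_{\ell-h}$ and $\langle u, v\rangle_\ell = \sigma^h(\langle \sigma^{\ell-h}(v), u\rangle_{r-h})$, will be used to shuttle between different values of $\ell$ and between the two slots of the inner product.

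For item (1), the equation $(\sigma^h(C))^{\perp_\ell} = \sigma^h(C^{\perp_\ell})$ is immediate from the fact that $\sigma$ is a ring automorphism of $R$, so $\sigma^h(\langle u, c\rangle_\ell) = \langle \sigma^h(u), \sigma^h(c)\rangle_\ell$; hence the bijection $u \mapsto \sigma^h(u)$ of $R^n$ carries $C^{\perp_\ell}$ onto $(\sigma^h(C))^{\perp_\ell}$. The equation $C^{\perp_\ell} = \sigma^h(C^{\perp_{\ell-h}})$ then follows by combining this with the first identity: $u \in C^{\perp_\ell}$ if and only if $\langle u, \sigma^h(c)\rangle_{\ell-h} = 0$ for all $c \in C$, i.e.\ $u \in (\sigma^h(C))^{\perp_{\ell-h}} = \sigma^h(C^{\perp_{\ell-h}})$.

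For item (2), I would first take $h = \ell$ in (1)(b) to reduce the inner dual, obtaining $C^{\perp_\ell} = \sigma^\ell(C^{\perp_0})$, and then use (1)(a) to pull the outer twist outside: $(C^{\perp_\ell})^{\perp_h} = \sigma^\ell((C^{\perp_0})^{\perp_h})$. The residual factor $(C^{\perp_0})^{\perp_h}$ is then handled by the second identity in the display, which, by swapping the two arguments of the inner product and shifting the index to $r-h$, converts the Galois $h$-dual of a Euclidean dual into a Euclidean dual of a $\sigma$-shifted code. Invoking $(C^{\perp_0})^{\perp_0} = C$ and collecting the accumulated $\sigma$-twists then yields the claimed formula $\sigma^{2r-\ell-h}(C)$.

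The main obstacle will be the careful bookkeeping of $\sigma$-twists through two successive duality operations: because the inner product is sesquilinear rather than symmetric, keeping track of which argument is being twisted and in what direction requires the swap identity, which is precisely what delivers the exponent $2r-\ell-h \equiv -(\ell+h) \pmod r$ rather than its positive counterpart. A minor auxiliary point is to justify the Euclidean double-duality $(C^{\perp_0})^{\perp_0} = C$ for arbitrary (not necessarily free) linear codes over $R$, which follows from the Frobenius property of finite chain rings.
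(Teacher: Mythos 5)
Your treatment of item (1) is correct and is essentially what the paper intends: both equalities follow from the automorphism property $\sigma^h(\langle \textbf{u},\textbf{c}\rangle_\ell)=\langle\sigma^h(\textbf{u}),\sigma^h(\textbf{c})\rangle_\ell$ together with the shift identity $\langle \textbf{u},\textbf{v}\rangle_\ell=\langle \textbf{u},\sigma^h(\textbf{v})\rangle_{\ell-h}$, and your appeal to $(C^{\perp_0})^{\perp_0}=C$ over a finite Frobenius ring is legitimate and worth making explicit, since the paper merely asserts that both items ``follow immediately'' from the displayed identity.

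The gap is in the last step of item (2), where you announce that ``collecting the accumulated $\sigma$-twists yields $\sigma^{2r-\ell-h}(C)$'' without carrying the collection out. If you do carry it out along the very route you describe, the twists accumulate to $\ell+h$, not to $-(\ell+h)$: item (1)(b) gives $C^{\perp_\ell}=\sigma^{\ell}(C^{\perp_0})$, item (1)(a) gives $(C^{\perp_\ell})^{\perp_h}=\sigma^{\ell}\bigl((C^{\perp_0})^{\perp_h}\bigr)$, and applying (1)(b) once more, now to the code $C^{\perp_0}$, gives $(C^{\perp_0})^{\perp_h}=\sigma^{h}\bigl((C^{\perp_0})^{\perp_0}\bigr)=\sigma^{h}(C)$; the swap identity is never actually needed at this point, and no remaining step can reverse the sign, so your chain produces $(C^{\perp_\ell})^{\perp_h}=\sigma^{\ell+h}(C)$. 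One can confirm this directly: for $\textbf{c}\in C$ and $\textbf{d}\in C^{\perp_\ell}$ one has $\langle\sigma^{\ell+h}(\textbf{c}),\textbf{d}\rangle_h=\sum_j\sigma^{\ell+h}(c_j)\sigma^h(d_j)=\sigma^h\bigl(\sum_j d_j\sigma^{\ell}(c_j)\bigr)=0$, so $\sigma^{\ell+h}(C)\subseteq(C^{\perp_\ell})^{\perp_h}$, and cardinality forces equality. The expression $\sigma^{2r-\ell-h}(C)$ agrees with $\sigma^{\ell+h}(C)$ only when $\sigma^{2(\ell+h)}$ fixes $C$ (for instance when $\ell+h\equiv 0\pmod r$, which covers the Euclidean and Hermitian self-duality cases and makes the discrepancy easy to overlook). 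So you must either produce an actual derivation of the exponent $2r-\ell-h$ under the paper's convention $\langle\textbf{u},\textbf{v}\rangle_\ell=\sum_j u_j\sigma^{\ell}(v_j)$ --- which the computation above rules out --- or record that with this convention the exponent should be $\ell+h$ (the stated exponent would be the one obtained if the Galois twist were placed on the first argument of the inner product). As written, the final sentence of your item (2) is an assertion, not a proof, and it asserts the wrong exponent for the convention in force.
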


From Proposition \ref{sidc} and \cite[Theorem 3.1]{HL00}, we
obtain the following result.

\begin{Corollary}\label{sum-inter}
Let $C$ and $C'$ be linear codes  over $R$ of length $n$. Then
\begin{enumerate}
    \item $(C+C')^{\perp_\ell}=C^{\perp_\ell}\cap C'^{\perp_\ell}$;
    \item $(C\cap C')^{\perp_\ell}=C^{\perp_\ell}+ C'^{\perp_\ell}$.
\end{enumerate}
\end{Corollary}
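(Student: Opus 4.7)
The plan is to reduce the $\ell$-Galois identities to the Euclidean ($\ell=0$) case, where the two formulas are already available from \cite[Theorem 3.1]{HL00}. The enabling observation is Proposition~\ref{sidc}(1): specializing to $h=\ell$ gives
\[
C^{\perp_\ell}=\sigma^\ell(C^{\perp_0})
\]
for every linear code $C$ over $R$ of length $n$. In other words, the $\ell$-Galois dual is nothing but the coordinate-wise $\sigma^\ell$-image of the usual Euclidean dual.

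Next, I would record the elementary fact that, because $\sigma$ is a ring automorphism of $R$, its coordinate-wise action on $R^n$ is an additive bijection that maps $R$-submodules to $R$-submodules (via the twisted scalar multiplication, but in particular preserving the underlying sets). From this it follows that $\sigma^\ell$ commutes with both intersections and sums of $R$-submodules of $R^n$, i.e. $\sigma^\ell(A\cap B)=\sigma^\ell(A)\cap\sigma^\ell(B)$ and $\sigma^\ell(A+B)=\sigma^\ell(A)+\sigma^\ell(B)$.

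With these two ingredients, both statements are a one-line transport. For part (1),
\[
(C+C')^{\perp_\ell}=\sigma^\ell\bigl((C+C')^{\perp_0}\bigr)=\sigma^\ell(C^{\perp_0}\cap C'^{\perp_0})=\sigma^\ell(C^{\perp_0})\cap\sigma^\ell(C'^{\perp_0})=C^{\perp_\ell}\cap C'^{\perp_\ell},
\]
where the second equality is the Euclidean case of \cite[Theorem 3.1]{HL00} (in fact immediate from bilinearity of $\langle\cdot,\cdot\rangle_0$). For part (2), the same pattern applies:
\[
(C\cap C')^{\perp_\ell}=\sigma^\ell\bigl((C\cap C')^{\perp_0}\bigr)=\sigma^\ell(C^{\perp_0}+C'^{\perp_0})=\sigma^\ell(C^{\perp_0})+\sigma^\ell(C'^{\perp_0})=C^{\perp_\ell}+C'^{\perp_\ell},
\]
now invoking the nontrivial direction of \cite[Theorem 3.1]{HL00}, which is the statement that genuinely uses the chain-ring hypothesis to handle $(C\cap C')^{\perp_0}=C^{\perp_0}+C'^{\perp_0}$ (this fails for arbitrary modules).

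The main obstacle is therefore not structural but bookkeeping: one has to check that Proposition~\ref{sidc}(1) is applied with the correct index substitution so that the inner dual is genuinely Euclidean, and that the Frobenius transport $\sigma^\ell$ preserves both operations. Once that is in place, the substantive content is entirely absorbed into \cite[Theorem 3.1]{HL00} and no further computation is needed.
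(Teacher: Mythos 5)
Your proposal is correct and takes essentially the same route the paper intends: the paper offers no written argument beyond the one-line remark that the corollary follows from Proposition~\ref{sidc} and \cite[Theorem 3.1]{HL00}, and your writeup --- transporting the Euclidean identities through $C^{\perp_\ell}=\sigma^\ell(C^{\perp_0})$ together with the compatibility of the coordinate-wise $\sigma^\ell$ with sums and intersections of submodules --- is exactly the expansion of that citation. The only point worth stating explicitly is that $\sigma^\ell(A)$ is again a genuine $R$-submodule because $\sigma$ is a ring automorphism of $R$, which you do note.
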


\begin{Definition}\label{hull} Let $C$ be a linear code over $R$. The $\ell$-\emph{Galois hull} of $C$
will be denoted as $\mathcal{H}_\ell(C)$ and is the intersection
of $C$ and its $\ell$-Galois dual $$\mathcal{H}_{\ell}(C)=C\cap
C^{\perp_\ell}.$$
\end{Definition}

A linear code $C$ over $R$ is said to $\ell$-\emph{Galois LCD} if
$\mathcal{H}_\ell(C)=\{\textbf{0}\}$, and $C$ is said be
$\ell$-\emph{Galois self-orthogonal} if $\mathcal{H}_\ell(C)=C$.
If  we denote that for all $0\leq \ell; h\leq r-1$, we have
$\sigma^h(\mathcal{H}_\ell(C))=\mathcal{H}_\ell(\sigma^h(C))$, and
$\mathcal{H}_\ell(C)=\mathcal{H}_{r-\ell}(C^{\perp_{\ell}})$. From
generalized Delsarte's Theorem in Proposition~\ref{fm19} it
follows that
$\texttt{T}_d(\mathcal{H}_\ell(C))=\left(\texttt{Res}_d(\mathcal{H}_{r-\ell}(C))\right)^{\perp_\ell}$.
Note that if $C$ is $\langle\,\sigma^\ell\,\rangle$-invariant,
then $\mathcal{H}_\ell(C)=\mathcal{H}_0(C)$.

From \cite[Proposition 3.2 and Theorem 3.5]{NS00}, for any linear
code $C$ over $R$ of length $n$, there is a unique $s$-tuple
$(k_0, k_1, \cdots, k_{s-1})$ of positive integers, such that $C$
has a generator matrix in standard form
\begin{align*} \left(
\begin{array}{ccccccc}
  \mathrm{I}_{k_0} & \mathrm{G}_{0,1} & \mathrm{G}_{0,2} & \cdots & \mathrm{G}_{0,s-2 } & \mathrm{G}_{0,s-1} & \mathrm{G}_{0,s} \\
  \mathbf{0} & \theta\mathrm{I}_{k_1} & \theta\mathrm{G}_{1,2} & \cdots & \theta\mathrm{G}_{1,s-2} & \theta\mathrm{G}_{1,s-1} & \theta\mathrm{G}_{1,s} \\
  \cdots & \cdots & \cdots & \cdots & \cdots & \cdots & \cdots \\
  \mathbf{0}  & \mathbf{0} &\mathbf{0}  & \cdots & \mathbf{0} &  \theta^{s-1}\mathrm{I}_{k_{s-1}} & \theta^{s-1}\mathrm{G}_{s-1,s}
\end{array}
\right)\mathrm{U},
\end{align*}
where $\mathrm{U}$ is a permutation matrix and $\mathbf{0}$ all zero matrices of suitable size. The elements
in the $s$-tuple $(k_0, k_1, \cdots, k_{s-1})$ are called
parameters of $C$ and the rank of $C$ is $k_0+ k_1+ \cdots+
k_{s-1}$. From \cite[Theorem 3.10]{NS00}, the parameters of
$C^{\perp_\ell}$ are $(n-k, k_{s-1},\cdots, k_2, k_1)$, where
$k=\texttt{rank}_{R}(C)$. Note that $C$ is free if and only if
$\texttt{rank}_{R}(C)=k_0$ and $k_1= \cdots =k_{s-1}=0$. The
$q$-dimension of a linear code $C$ over $R$, denoted
$\texttt{dim}_q(C)$, is defined to be $\texttt{log}_q(|C|)$. Thus
the $q$-dimension of a linear code $C$ over $R$ of parameters
$(k_0, k_1, \cdots, k_{s-1})$ is
$\sum\limits_{t=0}^{s-1}(s-t)k_t$. Since $R$ is also a Frobenius
ring, it follows that
$\texttt{dim}_q(C)+\texttt{dim}_q(C^{\perp_\ell})=sn$.

\begin{Proposition}
Let $C$ and $C'$ be two linear codes over $R$ of the same length.
Then
$$\texttt{dim}_q(C+C')=\texttt{dim}_q(C)+\texttt{dim}_q(C')-\texttt{dim}_q(C\cap
C').$$ Moreover
$\texttt{dim}_q(\mathcal{H}_\ell(C))=\texttt{dim}_q(\mathcal{H}_{r-\ell}(C))$.
\end{Proposition}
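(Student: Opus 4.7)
The plan is to prove the two identities separately: a short exact sequence of $R$-modules handles the first, and then the second is reduced to the first via the observation that the $\ell$-Galois dual is a twisted Euclidean dual.

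For the first identity, I would consider the short exact sequence of $R$-modules
$$0\longrightarrow C\cap C'\longrightarrow C\oplus C'\longrightarrow C+C'\longrightarrow 0,$$
with arrows $x\mapsto(x,x)$ and $(c,c')\mapsto c-c'$. This yields $|C|\cdot|C'|=|C\cap C'|\cdot|C+C'|$, and applying $\log_q$ to both sides produces the inclusion-exclusion formula. Nothing beyond finiteness of $R$ is used.

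For the second identity the key observation is that $C^{\perp_\ell}=(\sigma^\ell(C))^{\perp_0}$, which follows directly from $\langle u,c\rangle_\ell=u\cdot\sigma^\ell(c)$. Because $\sigma$ is a ring automorphism, $\texttt{dim}_q(\sigma^t(D))=\texttt{dim}_q(D)$ and $\sigma^t(D^{\perp_0})=(\sigma^t(D))^{\perp_0}$ for any code $D$ and any $t$. Combining the first identity just proved with $\texttt{dim}_q(A)+\texttt{dim}_q(A^{\perp_0})=sn$ (valid since $R$ is Frobenius, as noted before the proposition) and Corollary~\ref{sum-inter}(1), I would compute
\begin{align*}
\texttt{dim}_q(\mathcal{H}_\ell(C))
&= \texttt{dim}_q\bigl(C\cap\sigma^\ell(C)^{\perp_0}\bigr) \\
&= \texttt{dim}_q(C)+\texttt{dim}_q(\sigma^\ell(C)^{\perp_0})-\texttt{dim}_q\bigl(C+\sigma^\ell(C)^{\perp_0}\bigr) \\
&= \texttt{dim}_q(C)+\bigl(sn-\texttt{dim}_q(C)\bigr)-\bigl(sn-\texttt{dim}_q(C^{\perp_0}\cap\sigma^\ell(C))\bigr) \\
&= \texttt{dim}_q\bigl(C^{\perp_0}\cap\sigma^\ell(C)\bigr).
\end{align*}
Applying the dimension-preserving automorphism $\sigma^{-\ell}$ to the last intersection and invoking the compatibility of $\sigma$ with Euclidean duality, one gets
$$\texttt{dim}_q\bigl(C^{\perp_0}\cap\sigma^\ell(C)\bigr)=\texttt{dim}_q\bigl(\sigma^{-\ell}(C)^{\perp_0}\cap C\bigr)=\texttt{dim}_q(C\cap C^{\perp_{r-\ell}})=\texttt{dim}_q(\mathcal{H}_{r-\ell}(C)).$$

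The main obstacle is purely bookkeeping: keeping straight the commutation of the various powers $\sigma^\ell$ and $\sigma^{-\ell}=\sigma^{r-\ell}$ with Euclidean duality, and ensuring that the Frobenius identity $\texttt{dim}_q(A)+\texttt{dim}_q(A^{\perp_0})=sn$ is applied to codes (not just free ones), which it legitimately is in this setting.
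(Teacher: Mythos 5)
Your proof is correct, and although it follows the same broad strategy as the paper's (multiplicativity of cardinalities for the first identity, then inclusion--exclusion combined with the Frobenius relation $\texttt{dim}_q(A)+\texttt{dim}_q(A^{\perp_0})=sn$ for the second), both halves are executed by a genuinely different route. For the first identity the paper writes $C=\langle C\backslash C'\rangle\oplus(C\cap C')$ and $C+C'=\langle C\backslash C'\rangle\oplus C'$ and compares cardinalities; your short exact sequence $0\to C\cap C'\to C\oplus C'\to C+C'\to 0$ reaches the same count $|C|\cdot|C'|=|C\cap C'|\cdot|C+C'|$ without invoking any such splitting, which is in fact the safer route: over a chain ring $C\cap C'$ need not be a direct summand of $C$, and the submodule generated by $C\backslash C'$ can already contain $C\cap C'$ (take $C=\mathbb{Z}_4$ and $C'=2\mathbb{Z}_4$ with $n=1$), so the paper's claimed direct-sum decomposition does not literally hold, while your exact-sequence count does. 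For the second identity the paper dualizes the hull itself, writing $\mathcal{H}_\ell(C)=(C+C^{\perp_{r-\ell}})^{\perp_\ell}$ via Corollary~\ref{sum-inter} and Proposition~\ref{sidc} and then applying the first identity to the pair $C$, $C^{\perp_{r-\ell}}$; you instead normalize everything to Euclidean duals, pass through the intermediate code $C^{\perp_0}\cap\sigma^\ell(C)$, and transport by the dimension-preserving automorphism $\sigma^{-\ell}=\sigma^{r-\ell}$. Both versions rest on the same ingredients (the paper uses $(C^{\perp_{r-\ell}})^{\perp_\ell}=\sigma^{r}(C)=C$, you use $(D^{\perp_0})^{\perp_0}=D$ together with $\sigma^t(D^{\perp_0})=(\sigma^t(D))^{\perp_0}$), so neither is more general; yours costs one extra duality step but makes the role of the Frobenius twist explicit and, for the first half, closes a gap in the published argument.
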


\begin{proof} We have $C=\langle C\backslash C'\rangle\oplus (C\cap
C')$ and $C+C'=\langle C\backslash C'\rangle\oplus C'$. It follows
that  $|C|=|\langle C\backslash C'\rangle|\times|C\cap C'|$ and
$|C+C'|=|\langle C\backslash C'\rangle|\times|C'|$. Thus
$|C+C'|=\frac{|C|}{|C\cap C'|}\times|C'|$. Therefore
$\log_q(|C+C'|)=\log_q(|C|)-\log_q(|C\cap C'|)+\log_q(|C'|).$ From
the definition of $q$-dimension of a linear code we have that
$\texttt{dim}_q(C+C')=\texttt{dim}_q(C)+\texttt{dim}_q(C')-\texttt{dim}_q(C\cap
C').$  Moreover,
\begin{eqnarray*}
  \texttt{dim}_q(\mathcal{H}_\ell(C)) &=& \texttt{dim}_q((C+C^{\perp_{r-\ell}})^{\perp_\ell}), \text{ from Corollary \ref{sum-inter} }; \\
   &=& sn-\texttt{dim}_q(C+C^{\perp_{r-\ell}}),  \text{ since } \texttt{dim}_q(C+C^{\perp_{r-\ell}})+\texttt{dim}_q((C+C^{\perp_{r-\ell}})^{\perp_\ell})=sn; \\
   &=& sn-\left(\texttt{dim}_q(C)+\texttt{dim}_q(C^{\perp_{r-\ell}})-\texttt{dim}_q(\mathcal{H}_{r-\ell}(C))\right);  \\
   &=& \texttt{dim}_q(\mathcal{H}_{r-\ell}(C)), \text{ since } \texttt{dim}_q(C)+\texttt{dim}_q(C^{\perp_{r-\ell}})=sn.
\end{eqnarray*}
\end{proof}

\begin{Proposition}
Let $C$ be a free code over $R$ of length $n$ and $\ell$ be a
positive integer. Then
\begin{enumerate}
    \item $\texttt{dim}_q(\sigma^{\ell}(C))=s\times\texttt{rank}(\sigma^{\ell}(C))=s\times\texttt{dim}_q(\pi(\sigma^{\ell}(C)))$;
    \item $\pi(C)^{\perp_\ell}=\pi(C^{\perp_\ell})$;
    \item $\pi(\mathcal{H}_\ell(C))=\mathcal{H}_\ell(\pi(C))$.
\end{enumerate}
\end{Proposition}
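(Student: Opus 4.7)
For Part (1) the plan is a direct computation. Since $C$ is free of rank $k$ and $\sigma^{\ell}$ is a ring automorphism fixing $\mathbb{Z}_{p^{a}}$, the image $\sigma^{\ell}(C)$ is also free of rank $k$; using $|R|=q^{s}$ one reads off $\texttt{dim}_{q}(\sigma^{\ell}(C))=sk=s\cdot\texttt{rank}(\sigma^{\ell}(C))$. A Nakayama-style argument shows that any $R$-basis of a free code reduces modulo $\theta$ to an $\mathbb{F}_{q}$-basis of its image, so $\texttt{dim}_{q}\pi(\sigma^{\ell}(C))=k$, and all three quantities agree at $sk$.

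For Part (2) I would establish both inclusions separately. The containment $\pi(C^{\perp_{\ell}})\subseteq\pi(C)^{\perp_{\ell}}$ is immediate from the compatibility $\pi\circ\sigma^{\ell}=\overline{\sigma}^{\ell}\circ\pi$ applied to the identity $\langle\textbf{u},\textbf{c}\rangle_{\ell}=0$. For the reverse inclusion, the standard-form discussion in the preliminaries gives $C^{\perp_{\ell}}$ parameters $(n-k,0,\ldots,0)$ when $C$ is free of rank $k$, hence $C^{\perp_{\ell}}$ is itself free of rank $n-k$; Part (1) then yields $|\pi(C^{\perp_{\ell}})|=q^{n-k}=|\pi(C)^{\perp_{\ell}}|$, so equality follows from the containment plus cardinality.

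For Part (3), the plan is to bootstrap from Part (2): using it we rewrite $\mathcal{H}_{\ell}(\pi(C))=\pi(C)\cap\pi(C)^{\perp_{\ell}}=\pi(C)\cap\pi(C^{\perp_{\ell}})$, and then $\pi(\mathcal{H}_{\ell}(C))\subseteq\pi(C)\cap\pi(C^{\perp_{\ell}})$ is formal. For the reverse inclusion, given $\textbf{x}\in\pi(C)\cap\pi(C^{\perp_{\ell}})$ I would pick some lift $\textbf{c}\in C$ of $\textbf{x}$ and look for $\textbf{a}\in C$ with $\textbf{c}+\theta\textbf{a}\in C^{\perp_{\ell}}$; the hypothesis $\textbf{x}\in\pi(C)^{\perp_{\ell}}$ forces every $\langle\textbf{c},\textbf{u}\rangle_{\ell}$ ($\textbf{u}\in C$) to already sit in $\theta R$, so the equation for $\textbf{a}$ is residue-solvable, and I would iterate this correction modulo $\theta^{2},\theta^{3},\ldots,\theta^{s}$. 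I expect this iterative $\theta$-adic lifting to be the main obstacle: the semilinear Gram form on $C$ carrying the $\ell$-Galois pairing need not have a nullspace whose residue matches $\mathcal{H}_{\ell}(\pi(C))$ naively, so at each stage the remaining defect must be shown to lie in the image of the appropriate correction map, which will use the freeness of both $C$ and $C^{\perp_{\ell}}$ together with Part (2) in an essential way.
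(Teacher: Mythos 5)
Your handling of Items (1) and (2) is sound and essentially the paper's route: the paper also reads Item (1) off the standard-form generator matrix $\left(\mathrm{I}_k \mid \mathrm{A}\right)\mathrm{U}$ of a free code, and for Item (2) it asserts that $\pi(C)^{\perp_\ell}$ and $\pi(C^{\perp_\ell})$ share a parity-check matrix; your variant (one containment from $\pi\circ\sigma^{\ell}=\overline{\sigma}^{\ell}\circ\pi$, the other from the cardinality count $|\pi(C^{\perp_\ell})|=q^{n-k}=|\pi(C)^{\perp_\ell}|$ using that $C^{\perp_\ell}$ is free of rank $n-k$) is a legitimate rearrangement of the same ingredients.

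The genuine problem is Item (3): the $\theta$-adic correction you describe cannot be carried out, and you have correctly located the exact point of failure without being able to remove it. Take $R=\mathbb{Z}_4$ (so $\theta=2$, $s=2$, $q=2$) and the free rank-one code $C=\langle(1,1,0)\rangle\subseteq R^3$. Then $C^{\perp_0}=\langle(1,3,0),(0,0,1)\rangle$ and $\mathcal{H}_0(C)=\{(0,0,0),(2,2,0)\}$, so $\pi(\mathcal{H}_0(C))=\{\mathbf{0}\}$; but $\pi(C)=\{(0,0,0),(1,1,0)\}$ is self-orthogonal over $\mathbb{F}_2$, so $\mathcal{H}_0(\pi(C))=\pi(C)\neq\{\mathbf{0}\}$. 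In your scheme the class $\mathbf{x}=(1,1,0)$ has only the lifts $(1,1,0)$ and $(3,3,0)$ inside $C$, and neither lies in $C^{\perp_0}$: the defect $\langle\mathbf{c},\mathbf{c}\rangle_0=\theta$ cannot be cancelled, because any admissible correction $\theta\mathbf{a}$ must lie in $\theta C$ (freeness of $C$) and hence contributes only $\langle\theta\mathbf{a},\mathbf{c}\rangle_0\in\theta^2R=\{0\}$. The Gram form of $C$ degenerates modulo $\theta$ precisely where your iteration needs it to be surjective, so the residue-solvability you hope for is not available. Consequently the inclusion $\mathcal{H}_\ell(\pi(C))\subseteq\pi(\mathcal{H}_\ell(C))$, and with it Item (3) as stated, fails for general free codes; only the formal containment $\pi(\mathcal{H}_\ell(C))\subseteq\mathcal{H}_\ell(\pi(C))$ is true without further hypotheses. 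Be aware that the paper's own one-line argument for Item (3) rests on the unproved assertion $\texttt{dim}_q(\pi(\mathcal{H}_\ell(C)))=\texttt{dim}_q(\mathcal{H}_\ell(\pi(C)))$, which is exactly what this example refutes; your honest attempt at a lifting argument exposes the obstruction rather than hiding it, but it does not, and cannot, close the gap.
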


\begin{proof} Since $C$ is free, a generator matrix for $\sigma^{\ell}(C)$ is $\left(
\begin{array}{c|c}
  \mathrm{I}_k & \sigma^{\ell}(\mathrm{A}) \\
\end{array}
\right)\mathrm{U}$, where $\mathrm{A}$ is a $k\times (n-k)$-matrix
over $R$ and $\mathrm{U}$ is a permutation matrix. Thus $\left(
\begin{array}{c|c}
  \mathrm{I}_k & \pi(\sigma^{\ell}(\mathrm{A})) \\
\end{array}
\right)\mathrm{U}$ is a generator matrix for $\pi(C)$. It follows
that $|\sigma^{\ell}(C)|=q^{sk}$ and
$\texttt{rank}(\sigma^{\ell}(C))=\texttt{dim}_q(\pi(\sigma^{\ell}(C)))=k$.
This proves Item 1. Now to proves Item 2. The codes
$\pi(C)^{\perp_\ell}$ and $\pi(C^{\perp_\ell})$ have the same
parity matrix, which is $\left(
\begin{array}{c|c}
  \mathrm{I}_k & \pi(\sigma^{\ell}(\mathrm{A})) \\
\end{array}
\right)\mathrm{U}$. Hence
$\pi(C)^{\perp_\ell}=\pi(C^{\perp_\ell})$. Item 3. is a
consequence of the fact that the above diagram commutes,
$\pi(\mathcal{H}_\ell(C))\subseteq\mathcal{H}_\ell(\pi(C))$ and
$\texttt{dim}_q(\pi(\mathcal{H}_\ell(C)))=\texttt{dim}_q(\mathcal{H}_\ell(\pi(C)))$.
\end{proof}

\section{Galois hulls of cyclic serial codes}

Let $\mathbb{N}$ be the set of nonnegative integers and $n$ be a
positive integer such that $\texttt{gcd}(n, q)=1$. Set
$[|a;b|]=\{a,a+1,\cdots,b\}$ where $(a,b)\in\mathbb{N}^2$ such
that $a< b$. Let $\mathrm{A}, \mathrm{B}$ be two subsets in
$[|0;n-1|]$, as usual, the opposite of $\mathrm{A}$ is
$-\mathrm{A}=\{n-z\;:\; z\in\mathrm{A}\}$ and its complementary
 is $ \overline{\mathrm{A}}=\{z\in[|0;n-1|]\;:\;
z\not\in\mathrm{A}\}$. The set $\mathrm{A}$ is symmetric, if
$\mathrm{A}=-\mathrm{A}$ and the pair $\{\mathrm{A},\mathrm{B}\}$
is asymmetric, if $\mathrm{B}=-\mathrm{A}$.  If
$u\in\mathbb{N}\backslash\{0\}$ then
$u\mathrm{A}=\left\{i\in[|0;n-1|]\,:\,(\exists z\in\mathrm{A})(
uz\equiv i(\texttt{mod}\, j)\right\}$. A subset $\mathrm{Z}$ of
$[|0;n-1|]$ is a $q$-closed set modulo $n$, if
$\mathrm{Z}=q\mathrm{Z}$. The smallest $q$-closed set modulo $n$,
contained a subset $\mathrm{Z}$ of $[|0;n-1|]$ is
$\bigcup_{i\in\mathbb{N}}q^i\mathrm{Z}$ and we will denote it by
$\complement_q(\mathrm{Z})$.  The $q$-cyclotomic cosets modulo $n$
are $\complement_q(\{z\})$, where $z\in[|0;n-1|]$.   We will take
$\complement_q(\emptyset)=\emptyset$ by convention.  It is clear
that the $q$-cyclotomic cosets modulo $n$ form a partition of
$[|0;n-1|]$, and any $q$-closure set modulo $n$ is a union of
$q$-cyclotomic cosets modulo $n$. Denote $[|0;n-1|]_q$, a subset
of $[|0;n-1|]$ such that
$[|0;n-1|]=\bigcup_{z\in[|0;n-1|]_q}\complement_q(z)$. Let $j$ be
a divisor of $n$ such that $\texttt{gcd}(j,q)=1$, we will use the
following notation
\begin{itemize}
    \item $\phi(\,.\,)$ is the Euler totient function;
    \item $\texttt{ord}_j(q)$ the multiplicative order of $q$ modulo $j$;
    \item $\omega(n;q)$ the number of $q$-cyclotomic cosets modulo $n$;
    \item $\mathcal{N}_q = \left\{d \in\mathbb{N}\backslash\{0\} \,:\, (\exists i\in\mathbb{N}\backslash\{0\})(d \,\text{ divides }\,q^i+1) \right\}$;
    \item $\Lambda_j$ the set of symmetric $q$-cyclotomic cosets modulo $n$ of size $\texttt{ord}_j(q)$;
    \item $\overline{\Lambda}_j$ the set of asymmetric pairs of $q$-cyclotomic cosets modulo $n$ of size $\texttt{ord}_j(q)$.
\end{itemize}

The following result is straight forward from Hensel's Lemma \cite{M74}  the uniqueness of this basic-irreducible
factorization.

\begin{Lemma} Let $\delta$ be a generator of the cyclic multiplicative subgroup
$\Gamma(\texttt{GR}(p^a,m))\backslash\{0\}$ of
$(\texttt{GR}(p^a,m))^\times$ where $m=\texttt{ord}_n(q)$. The map
\begin{align}\begin{array}{cccc}
  \Omega : & \left\{\complement_q(\mathrm{Z})\,:\, \mathrm{Z}\subseteq[|0;n-1|]_q \right\} & \rightarrow &  \left\{ f\in  \texttt{GR}(p^a,r)[X]\,:\, f \text{ is monic and } f| X^n-1  \right\}   \\
    & \mathrm{A} & \mapsto &
     \prod\limits_{a\in\mathrm{A}}(X-\delta^a)
\end{array}
\end{align} where $\Omega(\emptyset)=1$, is bijective map. Moreover, for any
$z\in [|0;n-1|]$ and for any sets $\mathrm{A}$ and $\mathrm{B}$
form by union of $q$-cyclotomic cosets modulo $n$ we have
\begin{enumerate}
    \item $\Omega\left(\complement_q(\{z\})\right)$ is a monic
basic-irreducible polynomial over $\texttt{GR}(p^a,r)$ of degree
$\left|\complement_q(\{z\})\right|$;
    \item
$\texttt{lcm}\left(\Omega\left(\mathrm{A}\right),
\Omega\left(\mathrm{B}\right)\right)=\Omega\left(\mathrm{A}\cup\mathrm{B}\right)$
and $\texttt{gcd}\left(\Omega\left(\mathrm{A}\right),
\Omega\left(\mathrm{B}\right)\right)=\Omega\left(\mathrm{A}\cap\mathrm{B}\right)$;
    \item if $\mathrm{A}\cap\mathrm{B}=\emptyset$, then
    $\Omega\left(\mathrm{A}\cup\mathrm{B}\right)=\Omega\left(\mathrm{A}\right)\Omega\left(\mathrm{B}\right)$.
\end{enumerate}
\end{Lemma}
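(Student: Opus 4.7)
The plan is to reduce the entire lemma to the classical Hensel-lift of the factorization of $X^n-1$ over the residue field. First I would check that the linear factors appearing in the definition of $\Omega$ are pairwise coprime: since $\gcd(n,p)=1$, for $a\not\equiv b\pmod{n}$ the difference $\delta^a-\delta^b=\delta^b(\delta^{a-b}-1)$ is a unit in $\texttt{GR}(p^a,m)$ (indeed, $\delta^b$ is a Teichm\"uller unit, and $\pi(\delta)$ has order $n$ so $\pi(\delta^{a-b}-1)\neq 0$). Consequently, for any subset $\mathrm{A}\subseteq [|0;n-1|]$ the polynomial $\Omega(\mathrm{A})=\prod_{a\in\mathrm{A}}(X-\delta^a)$ is monic, divides $X^n-1$ in $\texttt{GR}(p^a,m)[X]$, and distinct choices of $\mathrm{A}$ yield polynomials with distinct root sets.

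Next I would verify that $\Omega(\complement_q(\{z\}))$ actually lies in $\texttt{GR}(p^a,r)[X]$ and is basic-irreducible of degree $|\complement_q(\{z\})|$. The Frobenius $\sigma^r$ fixes $\texttt{GR}(p^a,r)$ and sends each Teichm\"uller unit $\delta^a$ to $\delta^{aq}$, so by definition of a $q$-cyclotomic coset it permutes the set $\{\delta^a : a\in\complement_q(\{z\})\}$; the elementary symmetric functions of these roots are therefore $\sigma^r$-fixed, which shows $\Omega(\complement_q(\{z\}))\in\texttt{GR}(p^a,r)[X]$. Reducing modulo $\texttt{J}(R)$ yields the minimal polynomial of $\pi(\delta)^z$ over $\mathbb{F}_{p^r}$, which is irreducible of the stated degree, so $\Omega(\complement_q(\{z\}))$ is basic-irreducible; this proves Item~1.

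For the bijectivity of $\Omega$, I would apply Hensel's Lemma to the residue-field factorization $X^n-1=\prod_{z\in [|0;n-1|]_q}\pi(\Omega(\complement_q(\{z\})))$, whose factors are pairwise coprime irreducibles: it lifts uniquely to $X^n-1=\prod_{z\in [|0;n-1|]_q}\Omega(\complement_q(\{z\}))$ in $\texttt{GR}(p^a,r)[X]$. Every monic divisor of $X^n-1$ in this ring is then uniquely a product of a subfamily of these basic-irreducible factors, and such subfamilies correspond bijectively to the unions of $q$-cyclotomic cosets modulo $n$, yielding bijectivity of $\Omega$. Item~3 is immediate from the definition of $\Omega$ on a disjoint union, and the $\gcd/\texttt{lcm}$ identities in Item~2 follow at once from the pairwise coprimality of the basic-irreducibles. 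The only step that is not pure bookkeeping is the uniqueness part of Hensel's Lemma over a finite chain ring, but this is precisely what the McDonald reference provides.
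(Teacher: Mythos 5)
Your proposal is correct and follows the same route as the paper, which simply asserts that the lemma is straightforward from Hensel's Lemma and the uniqueness of the basic-irreducible factorization of $X^n-1$; you have merely filled in the details (coprimality of the linear factors, Frobenius-invariance of the coefficients, and the lift of the residue-field factorization) that the paper leaves implicit. The only nitpick is that $\pi(\delta)$ should be taken as (a power of a generator giving) a primitive $n$-th root of unity so that $\pi(\delta^{a-b})\neq 1$ for $a\not\equiv b\pmod n$, but this matches the paper's intended reading of $\delta$.
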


\begin{Proposition}{\cite[Subsection 2.2]{SJLU15}} Let $j$ be a divisor of $n$ such that $\texttt{gcd}(j,q)=1$. Then
\[\gamma(j;q)=|\Lambda_j|=\left\{%
\begin{array}{ll}
     \frac{\phi(j)}{\texttt{ord}_j(q)}, & \hbox{if $j\in\mathcal{N}_q$;} \\
     0, & \hbox{otherwise,}
\end{array}
\right. \textrm{ and } \beta(j;q)=|\overline{\Lambda}_j|=\left\{%
\begin{array}{ll}
     \frac{\phi(j)}{2\texttt{ord}_j(q)}, & \hbox{if $j\not\in\mathcal{N}_q$,} \\
     0, & \hbox{otherwise.}
\end{array}%
\right. \] Moreover,
$\omega(n;q)=\underset{\underset{i\in\mathcal{N}_q}{i\,|\,n}}{\sum}\gamma(i;q)+2\underset{\underset{j\not\in\mathcal{N}_q}{j\,|\,n}}{\sum}\beta(j;q)$.
\end{Proposition}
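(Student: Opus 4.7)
The plan is to partition the $q$-cyclotomic cosets modulo $n$ by a divisor-valued invariant (their \emph{period}), count cosets of each period, and then decide symmetry in terms of $\mathcal{N}_q$.

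First I attach to each $z\in[|0;n-1|]$ the integer $j(z):=n/\gcd(n,z)$, which is a divisor of $n$ coprime with $q$. Because $\gcd(n,q)=1$ forces $\gcd(n,qz)=\gcd(n,z)$, the value $j(z)$ is constant on every $q$-cyclotomic coset; call it the \emph{period} of the coset. A short verification shows $|\complement_q(z)|=\texttt{ord}_{j(z)}(q)$, since $q^iz\equiv z\pmod{n}$ iff $q^i\equiv 1\pmod{j(z)}$. For each divisor $j$ of $n$, the elements $z\in[|0;n-1|]$ of period $j$ are exactly the multiples of $n/j$ of the form $(n/j)k$ with $\gcd(k,j)=1$, so there are $\phi(j)$ of them. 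Hence the number of cosets of period $j$ equals $\phi(j)/\texttt{ord}_j(q)$.

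Next I analyse symmetry: $\complement_q(z)$ is symmetric iff $-z\equiv q^iz\pmod{n}$ for some $i$, iff $q^i\equiv -1\pmod{j(z)}$, iff $j(z)\in\mathcal{N}_q$ by the very definition of $\mathcal{N}_q$. So for a fixed period $j$, all of its cosets share the same symmetry type. If $j\in\mathcal{N}_q$ they are all symmetric, giving $\gamma(j;q)=\phi(j)/\texttt{ord}_j(q)$ and $\beta(j;q)=0$. If $j\notin\mathcal{N}_q$ none of them is symmetric, and the involution $\mathrm{A}\mapsto -\mathrm{A}$ acts fixed-point-freely on the cosets of period $j$, pairing them into $\phi(j)/(2\texttt{ord}_j(q))$ asymmetric pairs.

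The main obstacle I foresee is verifying the integrality of $\phi(j)/(2\texttt{ord}_j(q))$ when $j\notin\mathcal{N}_q$. This is the group-theoretic statement that $[(\mathbb{Z}/j)^\times:\langle q\rangle]$ is even whenever $-1\notin\langle q\rangle$: the subgroup $\langle q,-1\rangle$ then properly contains $\langle q\rangle$ with index $2$, so Lagrange's theorem yields $2\,\texttt{ord}_j(q)\mid \phi(j)$. Finally, because every coset has a unique period and the cosets partition $[|0;n-1|]$, we obtain $\omega(n;q)=\sum_{j\mid n}\bigl(\gamma(j;q)+2\beta(j;q)\bigr)$, which after splitting according to whether $j\in\mathcal{N}_q$ produces exactly the stated identity.
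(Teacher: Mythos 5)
Your proof is correct and complete. Note, however, that the paper offers no proof of this proposition at all: it is quoted verbatim from \cite[Subsection 2.2]{SJLU15}, so there is no in-paper argument to compare yours against. Your route --- stratifying $[|0;n-1|]$ by the period $j(z)=n/\gcd(n,z)$, showing each stratum has $\phi(j)$ elements split into cosets of size $\texttt{ord}_j(q)$, and then deciding symmetry of a whole stratum at once via $-1\in\langle q\rangle$ in $(\mathbb{Z}/j\mathbb{Z})^\times$ --- is exactly the standard argument underlying the cited reference, and every step checks out (including the edge case $i=0$ in the symmetry criterion, which is absorbed by $j\mid 2$ implying $j\mid q+1$ for $q$ odd and $j=1$ trivially). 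Two small remarks: the fixed-point-freeness of the involution $\mathrm{A}\mapsto-\mathrm{A}$ on the cosets of period $j$ already forces $\phi(j)/\texttt{ord}_j(q)$ to be even, so your separate Lagrange argument for the integrality of $\phi(j)/(2\,\texttt{ord}_j(q))$, while correct, is redundant; and your implicit reading of $\Lambda_j$ as the set of symmetric cosets \emph{of period} $j$ (rather than literally ``of size $\texttt{ord}_j(q)$,'' which could conflate distinct divisors with equal multiplicative orders) is the interpretation under which the stated formulas are true, and is the intended one.
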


\noindent We will introduce the following notation
\begin{align}\label{imn1}\mathcal{E}_n(q,s)=\mathcal{I}_n(q,s)\times\left(\mathcal{J}_n(q,s)\right)^2,\end{align}
where
$\mathcal{I}_n(q,s)=\underset{\underset{i\in\mathcal{N}_q}{i\,|\,n}}{\prod}\mathcal{E}_s^{\gamma(i;q)}$
and
$\mathcal{J}_n(q,s)=\underset{\underset{j\not\in\mathcal{N}_q}{j\,|\,n}}{\prod}\mathcal{E}_s^{\beta(j;q)}$,
with
\begin{align}\label{imn2}\mathcal{E}_s=\left\{(x^{(0)},
x^{(1)}, \cdots, x^{(s-1)})\in\{0;1\}^s\;:\;
\sum\limits_{a=0}^{s-1}x^{(a)}\in\{0;1\} \right\}.\end{align} The
elements in $\mathcal{I}_n(q,s)$ are arrays of the form
$(((u_{il}^{(a)})_{0\leq a< s})^{\circ}))$ where
$(u_{il}^{(a)})_{0\leq a< s}$ are in $\mathcal{E}_s$ and the
indices
 $i$ and $j$ satisfy $i\,|\,n, i\in\mathcal{N}_q$ and $1\leq l\leq\gamma(i;q)$,
i.e.,
$$(((u_{il}^{(a)})_{0\leq a< s})^{\circ}))=\left(\left((u^{(a)}_{il})_{0\leq a<s}\right)_{1\leq l\leq\gamma(i;q)}\right)_{i\,|\,n, i\in\mathcal{N}_q}\in\mathcal{I}_n(q,s).$$
Similarly, $(((v^{(a)}_{jh})_{0\leq a< s})^\bullet)=\left(\left(
(v^{(a)}_{jh})_{0\leq a< s})\right)_{1\leq
h\leq\beta(j;q)}\right)_{j\,|\,n,
j\not\in\mathcal{N}_q}\in\mathcal{J}_n(q,s)$. Note that if $s=1,$
then $\mathcal{E}_1=\{0; 1\}$, and in this case, we write
$((u_{il})^{\circ})=(((u_{il}^{(a)})_{0\leq a< 1})^{\circ}))$ and
$((v_{jh})^\bullet)=(((v^{(a)}_{jh})_{0\leq a< 1})^\bullet).$

Let $i$ and $j$ be positive integers such that $i\,|\,n,
i\in\mathcal{N}_q$, and $j\,|\,n, j\not\in\mathcal{N}_q$. In the
sequel, $$\Lambda_i=\{ G_{il}\,:\, 1\leq l \leq
\gamma(i;q)\}\qquad\hbox{ and }\qquad \overline{\Lambda_j}=\{(
F_{jh}, -F_{jh})\,:\, 1\leq h \leq \beta(j;q)\}.$$ Of course,  all
the polynomials in $\{ \Omega(G_{il})\,:\, 1\leq l \leq
\gamma(i;q)\}$ are basic-irreducible in $R[X]$ of degree
$\texttt{ord}_i(q)$, and all the elements in $\{\{\Omega(F_{jh}),
\Omega(-F_{jh})\}\,:\, 1\leq h \leq \beta(j;q)\}$ are pairs of
monic basic-irreducible reciprocal polynomials in $R[X]$ of the
same degree $\texttt{ord}_j(q)$. The basic-irreducible
factorization of $X^n - 1$ in $R[X]$ is given as
\begin{eqnarray}\label{xn}
  X^n-1 &=&
  \underset{\underset{i\in\mathcal{N}_q}{i\,|\,n}}{\prod}\left(\prod\limits_{l=1}^{\gamma(i;q)}\Omega\left(G_{il}\right)\right)\underset{\underset{j\not\in\mathcal{N}_q}{j\,|\,n}}{\prod}\left(\prod\limits_{h=1}^{\beta(j;q)}\Omega\left(F_{jh}\right)\Omega\left(-F_{jh}\right)\right).
\end{eqnarray} Thus, for any monic polynomial $f$ over $R$ dividing $X^n-1$,
there is uniquely $(((u_{il})^\circ), ((v_{jh})^\bullet),
((w_{jh})^\bullet)))$ in $\mathcal{E}_n(q,1)$ such that
\begin{align}\label{d0}
f=
\underset{\underset{i\in\mathcal{N}_q}{i\,|\,n}}{\prod}\left(\prod\limits_{l=1}^{\gamma(i;q)}\Omega\left(G_{il}\right)^{u_{il}}\right)\underset{\underset{j\not\in\mathcal{N}_q}{j\,|\,n}}{\prod}\left(\prod\limits_{h=1}^{\beta(j;q)}\Omega\left(F_{jh}\right)^{v_{jh}}\Omega\left(-F_{jh}\right)^{w_{jh}}\right),\end{align}
and inversely. Denote the right-hand of Eq. (\ref{d0}) by $
\partial(((u_{il})^\circ), ((v_{jh})^\bullet), ((w_{jh})^\bullet))$.
Note that $\partial(((1)^\circ), ((1)^\bullet),
((1)^\bullet))=X^n-1, \partial(((0)^\circ), ((0)^\bullet),
((0)^\bullet))=1$, for all $f_1=\partial(((u_{il})^\circ),
((v_{jh})^\bullet), ((w_{jh})^\bullet))$ and
$f_2=\partial(((u'_{il})^\circ), ((v'_{jh})^\bullet),
((w'_{jh})^\bullet))$, one notes that
\begin{eqnarray*}
  \texttt{lcm}(f_1; f_2) &=& \partial(((\max\{u_{il}, u'_{il}\})^{\circ}),((\max\{v_{jh}, v'_{jh}\})^{\bullet}),((\max\{w_{jh}, w'_{jh}\})^{\bullet})); \\
  \texttt{gcd}(f_1; f_2)&=&\partial(((\min\{u_{il}, u'_{il}\})^{\circ}), ((\min\{v_{jh}, v'_{jh}\})^{\bullet}), ((\min\{w_{jh}, w'_{jh}\})^{\bullet})),
\end{eqnarray*}
and if all $(u_{il}+u'_{il}, v_{jh}+ v'_{jh}, w_{jh}+ w'_{jh})$s
are in $\{0; 1\}^3$ then
\begin{eqnarray*} f_1f_2&=&\partial(((u_{il}+u'_{il})^{\circ}), ((v_{jh}+
v'_{jh})^{\bullet}), ((w_{jh}+
w'_{jh})^{\bullet})).\end{eqnarray*}

A \emph{cyclic code} $C$ of length $n$ over $R$  is a linear code
that is invariant under the transformation $\tau((c_0, c_1, \cdots
, c_{n-1})) = (c_{n-1}, c_0, \cdots , c_{n-2})$. If we denote by
$\langle X^n-1\rangle$ the ideal of $R[X]$ generated by $X^n-1$,
it is well-known that any cyclic code of length $n$  over $R$ can
be represented as an ideal of the quotient ring $R[X]/\langle
X^n-1\rangle$ via the $R$-module isomorphism $ \overline{\Psi}:
R^n \rightarrow R[X]/\langle X^n-1\rangle$, where
$\overline{\Psi}(\textbf{c})=\Psi(\textbf{c})+\langle
X^n-1\rangle$ and
$$
\begin{array}{cccc}
  \Psi: & R^n & \rightarrow & R[X]   \\
    & \textbf{u}=(u_0, u_1,\cdots, u_{n-1}) & \mapsto &  \textbf{u}(X)=u_0+u_1X+\cdots+ u_{n-1}X^{n-1}
\end{array}
$$ which is an $R$-module homomorphism.
We will slightly abuse notation, identifying vectors in $R^n$ as
polynomials in $R[X]$ of degree less than $n$, and vice versa when
the context is clear. It is well-known that $R[X]/\langle
X^n-1\rangle$ is a principal ideal ring and $C$ is a cyclic code
of length $n$ over $R$ if and only if $\overline{\Psi}(C)$ is an
ideal of $R[X]/\langle X^n-1\rangle$, (see \cite{DL05} and
references therein). Thus, the generator polynomial of a cyclic
code $C$ of $R^n$, is the monic polynomial $f$ in $R[X]$ such that
$\overline{\Psi}(C)=\langle\,f(x)\,\rangle$, where
$\langle\,f(x)\,\rangle$ is the ideal of $R[X]/\langle
X^n-1\rangle$ generated by $f$. For a polynomial $f$ of degree $k$
its reciprocal polynomial $X^kf(X^{-1})$ will be denoted by $f^*$
and if $f$ is a factor of $X^n-1$ we denote
$\widehat{f}=\frac{X^n-1}{f}$.   A polynomial $f$ is
\emph{self-reciprocal} if $f=f^\ast$,  otherwise $f$ and $f^\ast$
are called a \emph{reciprocal polynomial pair}.

For any union $\mathrm{A}$ of $q$-cyclotomic cosets modulo $n$,
$\Omega(\mathrm{A})^*=\Omega(-\mathrm{A})$ and
$\widehat{\Omega(\mathrm{A})}=\Omega(\overline{\mathrm{A}})$. The
$(s+1)$-tuple $(\mathrm{A}_0, \mathrm{A}_1, \cdots, \mathrm{A}_s)$
is called to be an ordered $(q,s)$-partition cyclotomic modulo
$n$, if $\mathrm{A}_0, \mathrm{A}_1, \cdots, \mathrm{A}_s$ are
unions of $q$-cyclotomic cosets modulo $n$ whose
$\{\mathrm{A}_t\,:\, \mathrm{A}_t\neq\emptyset,\; \text{ for }
0\leq t\leq s \}$ forms a partition of $[|0;n-1|]$. Denote by
$\Re_n(q,s)$ the set of ordered $(q,s)$-partition cyclotomic
modulo $n$. Note that
$$\Re_n(q,s)=\left\{\left(\complement_q(\lambda^{-1}(\{0\})),
\complement_q(\lambda^{-1}(\{1\})), \ldots,
\complement_q(\lambda^{-1}(\{s\}))\right)\,:\,   \lambda\in
[|0;s|]^{[|0;n-1|]_q} \right\}.$$ It follows that
$|\Re_n(q,s)|=(s+1)^{\omega(n;q)}$.   Let
$\underline{\mathrm{\textbf{A}}}=(\mathrm{A}_0, \mathrm{A}_1,
\cdots, \mathrm{A}_s)$ in $\Re_n(q,s)$. For a positive integer $u$
we denote by
$u\underline{\mathrm{\textbf{A}}}=(u\mathrm{A}_0,u\mathrm{A}_{1},\cdots,
u\mathrm{A}_{s-1})$.  It is easy to see that
$p^\ell\underline{\mathrm{\textbf{A}}}\in\Re_n(q,s)$ for any
$0\leq\ell<r$. From \cite[Theorems 3.4, 3.5 and 3.8]{DL05}, we
have the following result.

\begin{Lemma}\label{lem0}
For any cyclic serial code $C$ over $R$ of length $n$, there is a
unique $(s+1)$-tuple $(\mathrm{A}_0, \mathrm{A}_1, \cdots,
\mathrm{A}_s)$ in $\Re_n(q,s)$ such that
\begin{equation}\label{eq-c2}
   \overline{\Psi}(C) =
   \bigoplus\limits_{t=0}^{s-1}\theta^t\left\langle\,\Omega(\overline{\mathrm{A}_t})(x)\,\right\rangle=  \left\langle\left\{  \theta^t\prod\limits_{a=t+1}^{s}\Omega(\mathrm{A}_{a})\,:\, 0\leq t\leq s-1\right\}\right\rangle.
\end{equation}
  Moreover,  $\overline{\Psi}(C^{\perp_0})=\bigoplus\limits_{t=0}^{s-1}\theta^t\left\langle\,\Omega(-\overline{\mathrm{A}_{s-t}})\,\right\rangle$.
\end{Lemma}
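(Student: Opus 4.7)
The plan is to reduce the lemma to the Chinese Remainder decomposition of $R[X]/\langle X^n-1\rangle$ induced by the basic-irreducible factorization of $X^n-1$ given in Eq.~(\ref{xn}). Since $\gcd(n,q)=1$ and the factors $\Omega(\complement_q(\{z\}))$ for $z\in[|0;n-1|]_q$ are pairwise coprime monic basic-irreducible polynomials, CRT gives a ring isomorphism $R[X]/\langle X^n-1\rangle\;\cong\;\prod_{z\in[|0;n-1|]_q}R[X]/\langle\Omega(\complement_q(\{z\}))\rangle$. Each factor is a Galois extension of $R$ in the sense recalled in Section~\ref{sec:prel}, hence itself a finite chain ring with maximal ideal $\langle\theta\rangle$ and nilpotency index $s$; its ideals form the chain $\{0\}\subsetneq\langle\theta^{s-1}\rangle\subsetneq\cdots\subsetneq\langle\theta\rangle\subsetneq R[X]/\langle\Omega(\complement_q(\{z\}))\rangle$.

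Any ideal of a finite product of local chain rings is a product of ideals of the factors, so $\overline{\Psi}(C)$ corresponds bijectively to a function $\lambda:[|0;n-1|]_q\to\{0,1,\ldots,s\}$ recording, in each CRT component, the $\theta$-exponent of the corresponding local ideal. Setting $\mathrm{A}_t=\bigcup_{\lambda(z)=t}\complement_q(\{z\})$ for $0\le t\le s$ produces the desired unique ordered $(q,s)$-partition cyclotomic modulo $n$ in $\Re_n(q,s)$, and conversely every such partition arises this way. This delivers the existence and uniqueness of $(\mathrm{A}_0,\ldots,\mathrm{A}_s)$.

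Next I translate this data back to $R[X]/\langle X^n-1\rangle$ using the CRT idempotents. For any union $\mathrm{A}$ of cyclotomic cosets, the sum of idempotents corresponding to components lying in $\mathrm{A}$ generates exactly the principal ideal $\langle\Omega(\overline{\mathrm{A}})\rangle$, since $\Omega(\overline{\mathrm{A}})$ is zero on the components in $\overline{\mathrm{A}}$ and a unit on those in $\mathrm{A}$. Applied component by component, this yields the direct sum form $\overline{\Psi}(C)=\bigoplus_{t=0}^{s-1}\theta^t\langle\Omega(\overline{\mathrm{A}_t})\rangle$. For the generating-set expression, observe that $\theta^t\prod_{a=t+1}^{s}\Omega(\mathrm{A}_a)=\theta^t\Omega(\mathrm{A}_{t+1}\cup\cdots\cup\mathrm{A}_s)$ vanishes on CRT components indexed by $\mathrm{A}_{t+1}\cup\cdots\cup\mathrm{A}_s$ and equals $\theta^t$ times a unit elsewhere; letting $t$ range over $0,\ldots,s-1$, these elements hit precisely $\theta^{t_0}\cdot R[X]/\langle\Omega(\complement_q(\{z\}))\rangle$ on each component indexed by $z\in\mathrm{A}_{t_0}$, so they generate $\overline{\Psi}(C)$.

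For the dual, the Euclidean inner product on $R^n$ transports under $\overline{\Psi}$ to the pairing $(f,g)\mapsto $ (constant term of $f(X)g(X^{-1})$) on $R[X]/\langle X^n-1\rangle$, which pairs the CRT component indexed by $z$ non-degenerately with the one indexed by $-z$. The annihilator in such a pairing between two local chain rings of nilpotency $s$ sends the ideal $\theta^{\lambda(z)}$ to $\theta^{s-\lambda(z)}$, so the data of $C^{\perp_0}$ is $z\mapsto s-\lambda(-z)$; equivalently, its ordered partition is $(-\mathrm{A}_s,-\mathrm{A}_{s-1},\ldots,-\mathrm{A}_0)$. Substituting this partition in the first part of the lemma gives $\overline{\Psi}(C^{\perp_0})=\bigoplus_{t=0}^{s-1}\theta^t\langle\Omega(\overline{-\mathrm{A}_{s-t}})\rangle$, and the identity $\overline{-\mathrm{A}}=-\overline{\mathrm{A}}$ rewrites this as the stated formula. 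The main obstacle is the bookkeeping for the dual step: verifying cleanly that complementation and negation commute with the CRT pairing and that the annihilator in each local chain factor is the expected power of $\theta$; these are standard for Galois extensions of chain rings but deserve explicit care.
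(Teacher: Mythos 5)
Your argument is correct. The paper gives no proof of this lemma at all --- it is quoted directly from \cite[Theorems 3.4, 3.5 and 3.8]{DL05} --- and your CRT decomposition of $R[X]/\langle X^n-1\rangle$ into the chain-ring factors $R[X]/\langle\Omega(\complement_q(\{z\}))\rangle$, together with the discrete-Fourier identification of the Euclidean pairing with the $z\leftrightarrow -z$ pairing of components, is precisely the standard machinery underlying those cited results, so there is nothing genuinely different to compare.
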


Let $\mathrm{A}$ be a union of $q$-cyclotomic cosets modulo $n$.
From now on, we will consider the code
\begin{align}\label{cc}\mathcal{C}\left(\mathrm{A}\right)=\left\{\textbf{c}\in
R^n\,:\, \Omega(\overline{\mathrm{A}}) \text{ divides
}\Psi(\textbf{c})\right\},\end{align} thus it is clear that
$\overline{\Psi}(\mathcal{C}\left(\mathrm{A}\right))=\left\langle\,\Omega(\overline{\mathrm{A}})(x)\,\right\rangle$.

\begin{Remark}\label{remfc} Free cyclic serial codes over a finite chain ring have been
    studied in \cite{FM19}  using the cyclotomic cosets and the
    trace map.  Note that $\mathcal{C}\left([|0;n-1|]\right)=\{\textbf{0}\}$ and
$\mathcal{C}\left(\emptyset\right)=R^n$. From Lemma \ref{lem0},
for any free cyclic serial code $C$ of length $n$ over $R$  there
exists a unique set   $\mathrm{A}$ which is a union of
$q$-cyclotomic cosets modulo $n$ such that
$C=\mathcal{C}\left(\mathrm{A}\right)$. Moreover,
$\mathcal{C}\left(\mathrm{A}\right)^{\perp_0}=\mathcal{C}\left(-\overline{\mathrm{A}}\right)$,
the generator matrix of $\mathcal{C}\left(\mathrm{A}\right)$ is
$\Omega(\overline{\mathrm{A}})(x)$, and
$\texttt{rank}_R(\mathcal{C}\left(\mathrm{A}\right))=|\mathrm{A}|$.
\end{Remark}

\begin{Proposition} If $\mathrm{A}$ and $\mathrm{B}$ are unions of $q$-cyclotomic cosets modulo $n$, then
\begin{enumerate}
    \item $\mathrm{A}\subseteq \mathrm{B}$ if and only $\mathcal{C}(\mathrm{A})\subseteq \mathcal{C}(\mathrm{B})$;
    \item $\mathcal{C}(\mathrm{A}\cap\mathrm{B})=\mathcal{C}(\mathrm{A})\cap\mathcal{C}(\mathrm{B})$, and $\mathcal{C}(\mathrm{A}\cup\mathrm{B})=\mathcal{C}(\mathrm{A})+\mathcal{C}(\mathrm{B})$;
    \item $\sigma^\ell\left(\mathcal{C}\left(\mathrm{A}\right)\right)=\mathcal{C}\left(p^\ell\mathrm{A}\right)$ and $\mathcal{C}\left(\mathrm{A}\right)^{\perp_\ell}=\mathcal{C}\left(-p^\ell\overline{\mathrm{A}}\right)$, for all $0\leq \ell \leq r-1$.
\end{enumerate}
\end{Proposition}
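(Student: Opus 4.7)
The overarching strategy is to push everything through the generator-polynomial description: by the defining Eq.~(\ref{cc}), $\overline{\Psi}(\mathcal{C}(\mathrm{A}))=\langle\Omega(\overline{\mathrm{A}})(x)\rangle$, and the earlier Lemma identifies the map $\mathrm{A}\mapsto\Omega(\mathrm{A})$ as a bijection between unions of $q$-cyclotomic cosets modulo $n$ and monic divisors of $X^n-1$ in $R[X]$, converting unions to lcm's, intersections to gcd's, and complements to the ``cofactor'' $\widehat{\Omega(\cdot)}$. Since $R[X]/\langle X^n-1\rangle$ is a principal ideal ring, containment of the corresponding ideals is equivalent to divisibility of the generators. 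Every assertion of the Proposition will follow by reading off the generator polynomial on each side.

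For item~(1), $\mathrm{A}\subseteq\mathrm{B}$ iff $\overline{\mathrm{B}}\subseteq\overline{\mathrm{A}}$, which by the Lemma is equivalent to $\Omega(\overline{\mathrm{B}})\,|\,\Omega(\overline{\mathrm{A}})$, i.e.\ $\langle\Omega(\overline{\mathrm{A}})(x)\rangle\subseteq\langle\Omega(\overline{\mathrm{B}})(x)\rangle$; pulling back by $\overline{\Psi}$ gives $\mathcal{C}(\mathrm{A})\subseteq\mathcal{C}(\mathrm{B})$. The reverse direction uses injectivity of $\Omega$. For item~(2), I would compute the generator of $\mathcal{C}(\mathrm{A})\cap\mathcal{C}(\mathrm{B})$ as $\mathrm{lcm}(\Omega(\overline{\mathrm{A}}),\Omega(\overline{\mathrm{B}}))$ and that of $\mathcal{C}(\mathrm{A})+\mathcal{C}(\mathrm{B})$ as $\gcd(\Omega(\overline{\mathrm{A}}),\Omega(\overline{\mathrm{B}}))$; by the Lemma these equal $\Omega(\overline{\mathrm{A}}\cup\overline{\mathrm{B}})=\Omega(\overline{\mathrm{A}\cap\mathrm{B}})$ and $\Omega(\overline{\mathrm{A}}\cap\overline{\mathrm{B}})=\Omega(\overline{\mathrm{A}\cup\mathrm{B}})$, respectively, which are the generators of $\mathcal{C}(\mathrm{A}\cap\mathrm{B})$ and $\mathcal{C}(\mathrm{A}\cup\mathrm{B})$.

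For item~(3), I would first compute the action of $\sigma^\ell$ on the generator polynomial. Writing $\Omega(\overline{\mathrm{A}})(X)=\prod_{a\in\overline{\mathrm{A}}}(X-\delta^a)$ for $\delta\in\Gamma(\mathrm{GR}(p^a,m))$, the Frobenius fixes $X$ and sends $\delta$ to $\delta^p$, so $\sigma^\ell(\Omega(\overline{\mathrm{A}}))=\prod_{a\in\overline{\mathrm{A}}}(X-\delta^{p^\ell a})=\Omega(p^\ell\overline{\mathrm{A}})$. Because $\gcd(p,n)=1$, multiplication by $p^\ell$ is a bijection of $\mathbb{Z}/n\mathbb{Z}$, so $p^\ell\overline{\mathrm{A}}=\overline{p^\ell\mathrm{A}}$; this identifies $\sigma^\ell(\mathcal{C}(\mathrm{A}))$ with the code having generator $\Omega(\overline{p^\ell\mathrm{A}})$, namely $\mathcal{C}(p^\ell\mathrm{A})$. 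For the Galois dual, I would combine Proposition~\ref{sidc}(1) (taking $h=\ell$ to get $C^{\perp_\ell}=\sigma^\ell(C^{\perp_0})$) with the Euclidean identity $\mathcal{C}(\mathrm{A})^{\perp_0}=\mathcal{C}(-\overline{\mathrm{A}})$ from Remark~\ref{remfc}, and then apply the first half of item~(3) already proved, yielding $\mathcal{C}(\mathrm{A})^{\perp_\ell}=\sigma^\ell(\mathcal{C}(-\overline{\mathrm{A}}))=\mathcal{C}(-p^\ell\overline{\mathrm{A}})$.

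The only delicate point is the bookkeeping in item~(3): making sure that $\sigma$ commutes with the evaluation in the way described (which relies on $\delta$ lying in the Teichm\"uller set and $X$ being fixed), and confirming that $-$ and $p^\ell\cdot$ commute with complementation modulo $n$ when $\gcd(p,n)=1$. Everything else is a routine translation between the code and its generator polynomial via $\overline{\Psi}$.
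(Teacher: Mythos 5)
Your proposal is correct and takes essentially the same route as the paper's own proof: both reduce everything to divisibility of the generator polynomials $\Omega(\overline{\mathrm{A}})$, use the lcm/gcd versus union/intersection dictionary from the Lemma for items (1) and (2), and prove item (3) by the identity $\sigma^\ell(\Omega(\overline{\mathrm{A}}))=\Omega(p^\ell\overline{\mathrm{A}})$ together with Proposition~\ref{sidc} and the Euclidean dual formula $\mathcal{C}(\mathrm{B})^{\perp_0}=\mathcal{C}(-\overline{\mathrm{B}})$ of Remark~\ref{remfc}. The only cosmetic difference is that the paper applies $\sigma^\ell$ before $\perp_0$ while you apply $\perp_0$ first; since these commute by Proposition~\ref{sidc}, the two computations coincide.
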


\begin{proof} Item (1) follows from the definition of $\mathcal{C}(\mathrm{A})$ and $\mathcal{C}(\mathrm{B})$
and the fact that $\mathrm{A}\subseteq \mathrm{B}$ if and only
$\Omega(\overline{\mathrm{B}})$ divides
$\Omega(\overline{\mathrm{A}})$. To prove (2), we note that since
$\mathrm{A}\cap\mathrm{B}\subseteq\mathrm{A}\subseteq\mathrm{A}\cup\mathrm{B}$,
and
$\mathrm{A}\cap\mathrm{B}\subseteq\mathrm{B}\subseteq\mathrm{A}\cup\mathrm{B}$,
from item (1), we have $\mathcal{C}(\mathrm{A}\cap
\mathrm{B})\subseteq\mathcal{C}(\mathrm{A})\cap
\mathcal{C}(\mathrm{B})$ and $\mathcal{C}(\mathrm{A})+
\mathcal{C}(\mathrm{B})\subseteq\mathcal{C}(\mathrm{A}\cup\mathrm{B})$.
Conversely, if $\textbf{c}\in\mathcal{C}(\mathrm{A})\cap
\mathcal{C}(\mathrm{B})$ then $\Omega(\overline{\mathrm{A}})$ and
$\Omega(\overline{\mathrm{B}})$ divide $\Psi(\textbf{c})$. Thus
$\texttt{lcm}(\Omega(\overline{\mathrm{A}})\,,\,\Omega(\overline{\mathrm{B}}))$
divides $\Psi(\textbf{c})$. Now,
$\texttt{lcm}(\Omega(\overline{\mathrm{A}})\,,\,\Omega(\overline{\mathrm{B}}))=\Omega(\overline{\mathrm{A}}\cup\overline{\mathrm{B}})=\Omega(\overline{\mathrm{A}\cap\mathrm{B}})$,
so we have $\mathcal{C}(\mathrm{A})\cap
\mathcal{C}(\mathrm{B})\subseteq\mathcal{C}(\mathrm{A}\cap\mathrm{B})$.
Since
$\texttt{gcd}(\Omega(\overline{\mathrm{A}})\,,\,\Omega(\overline{\mathrm{B}}))=\Omega(\overline{\mathrm{A}}\cap\overline{\mathrm{B}})=\Omega(\overline{\mathrm{A}\cup\mathrm{B}})$,
hence $\mathcal{C}(\mathrm{A})+ \mathcal{C}(\mathrm{B})\supseteq
\mathcal{C}(\mathrm{A}\cup \mathrm{B})$. To finish with the proof
of the item (3), we have
$\sigma^\ell\left(\mathcal{C}\left(\mathrm{A}\right)\right)=\left\{\textbf{c}\in
R^n\;:\;\sigma^\ell\left(\Omega(\overline{\mathrm{A}})\right)
\text{ divides }\Psi(\textbf{c})\right\}$, thus
$\sigma^\ell\left(\mathcal{C}\left(\mathrm{A}\right)\right)=\mathcal{C}\left(p^\ell\mathrm{A}\right)$,
since
$\sigma^\ell\left(\Omega(\overline{\mathrm{A}})\right)=\Omega(p^{\ell}\overline{\mathrm{A}})$.
Finally,  for any $0\leq \ell \leq r-1$ we have
\begin{eqnarray*}
  \mathcal{C}\left(\mathrm{A}\right)^{\perp_\ell} &=& \left(\sigma^\ell\left(\mathcal{C}\left(\mathrm{A}\right)\right)\right)^{\perp_0},\; \text{ from Proposition \ref{sidc}}; \\
    &=& \left(\mathcal{C}\left(p^\ell\mathrm{A}\right)\right)^{\perp_0}; \\
    &=& \mathcal{C}\left(-p^\ell\overline{\mathrm{A}}\right),\;  \text{ from Remark \ref{remfc}}.
\end{eqnarray*}
\end{proof}
Let $\underline{\mathrm{\textbf{A}}}=(\mathrm{A}_0, \mathrm{A}_1,
\ldots, \mathrm{A}_s)$ and
$\underline{\mathrm{\textbf{B}}}=(\mathrm{B}_0, \mathrm{B}_1,
\ldots, \mathrm{B}_s)$ be elements in $\Re_n(q,s)$. We will define
the following set in $R^n$
$$\mathbb{C}(\underline{\mathrm{\textbf{A}}})=\bigoplus\limits_{t=0}^{s-1}\theta^t\mathcal{C}(\mathrm{A}_t).$$
Clearly, $\mathbb{C}(\underline{\mathrm{\textbf{A}}})$ is a cyclic
serial code of length $n$  over $R$ and
$\underline{\mathrm{\textbf{A}}}$ is called the defining multiset
of $\mathbb{C}(\underline{\mathrm{\textbf{A}}})$. The parameters
of $\mathbb{C}(\underline{\mathrm{\textbf{A}}})$  are given by the
entries in $(|\mathrm{A}_0|,|\mathrm{A}_1|, \cdots,
|\mathrm{A}_s|)$  and from Lemma \ref{lem0} it follows that for
any cyclic serial code $C$ over $R$ of length $n$, there is a
unique defining multiset $\underline{\mathrm{\textbf{A}}}$ in
$\Re_n(q,s)$ such that
$C=\mathbb{C}(\underline{\mathrm{\textbf{A}}})$.

Let us denote by
$$\underline{\mathrm{\textbf{A}}}^{\diamond}=(\mathrm{A}_s,\mathrm{A}_{s-1},\ldots,
\mathrm{A}_{0}), \quad
\underline{\mathrm{\textbf{A}}}\sqcup\underline{\mathrm{\textbf{B}}}=(\mathrm{E}_0,
\mathrm{E}_1,\ldots, \mathrm{E}_{s-1})$$ where
$\mathrm{E}_0=\mathrm{A}_0\cup\mathrm{B}_0$, and
$\mathrm{E}_t=(\mathrm{A}_t\cup\mathrm{B}_t)\backslash\left(\bigcup\limits_{i=0}^{t-1}(\mathrm{A}_i\cup\mathrm{B}_i)\right)$
for all $0<t\leq s$. It is easy to see that
$\underline{\mathrm{\textbf{A}}}^{\diamond}$ and
$\underline{\mathrm{\textbf{A}}}\sqcup\underline{\mathrm{\textbf{B}}}$
are in $\Re_n(q,s)$.  Moreover,
$C^{\perp_\ell}=\mathbb{C}(-p^\ell\underline{\mathrm{\textbf{A}}}^{\diamond})$,
and
$\texttt{dim}_q(C)=\sum\limits_{t=0}^{s-1}(s-t)|\mathrm{A}_t|$.
Note that if
$\textbf{\underline{A}}\sqcap\underline{\textbf{B}}=(\underline{\mathrm{\textbf{A}}}^{\diamond}\sqcup\underline{\mathrm{\textbf{B}}}^{\diamond})^{\diamond}=(\mathrm{E}_0,
\mathrm{E}_1, \cdots, \mathrm{E}_s)$, then
$\mathrm{E}_s=\mathrm{A}_s\cup\mathrm{B}_s$ and
$\mathrm{E}_{s-t}=\left(\mathrm{A}_{s-t}\cup\mathrm{B}_{s-t}\right)\backslash\left(\bigcup\limits_{i=0}^{t-1}(\mathrm{A}_{s-i}\cup\mathrm{B}_{s-i})\right)$,
for all $0<t\leq s$. .

\begin{Proposition}{\cite[Theorem 6]{FM19}}\label{interL} Let $\underline{\mathrm{\textbf{A}}}=(\mathrm{A}_0, \mathrm{A}_1, \ldots,
\mathrm{A}_s)$ and $\underline{\mathrm{\textbf{B}}}=(\mathrm{B}_0,
\mathrm{B}_1, \ldots, \mathrm{B}_s)$ in $\Re_n(q,s)$. Then
$\mathbb{C}(\underline{\mathrm{\textbf{A}}})+\mathbb{C}(\underline{\mathrm{\textbf{B}}})
 =\mathbb{C}(\underline{\mathrm{\textbf{A}}}\sqcup\underline{\mathrm{\textbf{B}}})$ and $\mathbb{C}(\underline{\mathrm{\textbf{A}}})\cap\mathbb{C}(\underline{\mathrm{\textbf{B}}})
 =\mathbb{C}(\underline{\mathrm{\textbf{A}}}\sqcap\underline{\mathrm{\textbf{B}}})$.
\end{Proposition}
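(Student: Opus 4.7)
The plan is to prove the two equalities separately, obtaining the sum formula directly from the preceding proposition on $\mathcal{C}(\mathrm{A})$ and then deducing the intersection formula by duality.

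For the sum, I would first distribute:
\[\mathbb{C}(\underline{\mathrm{\textbf{A}}})+\mathbb{C}(\underline{\mathrm{\textbf{B}}})=\sum_{t=0}^{s-1}\theta^t\left(\mathcal{C}(\mathrm{A}_t)+\mathcal{C}(\mathrm{B}_t)\right)=\sum_{t=0}^{s-1}\theta^t\,\mathcal{C}(\mathrm{A}_t\cup\mathrm{B}_t),\]
using part (2) of the preceding proposition. Writing $\underline{\mathrm{\textbf{A}}}\sqcup\underline{\mathrm{\textbf{B}}}=(\mathrm{E}_0,\ldots,\mathrm{E}_s)$, it remains to show this sum equals $\sum_{t=0}^{s-1}\theta^t\mathcal{C}(\mathrm{E}_t)$. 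The inclusion $\supseteq$ is immediate since $\mathrm{E}_t\subseteq \mathrm{A}_t\cup\mathrm{B}_t$ gives $\mathcal{C}(\mathrm{E}_t)\subseteq\mathcal{C}(\mathrm{A}_t\cup\mathrm{B}_t)$. For the other inclusion I would establish by induction on $t$ the combinatorial identity $\bigcup_{i=0}^t\mathrm{E}_i=\bigcup_{i=0}^t(\mathrm{A}_i\cup\mathrm{B}_i)$, which follows directly by unfolding the definition of $\mathrm{E}_t$. This yields $\mathrm{A}_t\cup\mathrm{B}_t\subseteq\bigcup_{i=0}^t\mathrm{E}_i$, hence $\mathcal{C}(\mathrm{A}_t\cup\mathrm{B}_t)\subseteq\sum_{i=0}^t\mathcal{C}(\mathrm{E}_i)$ by another application of part (2). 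Multiplying by $\theta^t$ and using the chain-ring fact $\theta^t\mathcal{C}(\mathrm{E}_i)=\theta^{t-i}(\theta^i\mathcal{C}(\mathrm{E}_i))\subseteq\theta^i\mathcal{C}(\mathrm{E}_i)$ whenever $i\leq t$ (since $\mathcal{C}(\mathrm{E}_i)$ is an $R$-module) completes the desired inclusion.

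For the intersection, instead of repeating the argument I would pass to duals. By Corollary~\ref{sum-inter} and the fact that $(C^{\perp_0})^{\perp_0}=C$ (from Proposition~\ref{sidc}(2) with $\ell=h=0$),
\[\mathbb{C}(\underline{\mathrm{\textbf{A}}})\cap\mathbb{C}(\underline{\mathrm{\textbf{B}}})=\left(\mathbb{C}(\underline{\mathrm{\textbf{A}}})^{\perp_0}+\mathbb{C}(\underline{\mathrm{\textbf{B}}})^{\perp_0}\right)^{\perp_0}.\]
Using the dual formula $\mathbb{C}(\underline{\mathrm{\textbf{X}}})^{\perp_0}=\mathbb{C}(-\underline{\mathrm{\textbf{X}}}^{\diamond})$ recorded just before the statement, and the sum formula just proved,
\[\mathbb{C}(-\underline{\mathrm{\textbf{A}}}^{\diamond})+\mathbb{C}(-\underline{\mathrm{\textbf{B}}}^{\diamond})=\mathbb{C}\bigl((-\underline{\mathrm{\textbf{A}}}^{\diamond})\sqcup(-\underline{\mathrm{\textbf{B}}}^{\diamond})\bigr)=\mathbb{C}\bigl(-(\underline{\mathrm{\textbf{A}}}^{\diamond}\sqcup\underline{\mathrm{\textbf{B}}}^{\diamond})\bigr),\]
where the last equality uses that $-$ distributes over unions, set-differences and thus over the $\sqcup$ operation. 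Applying the dual formula once more and the definition $\underline{\mathrm{\textbf{A}}}\sqcap\underline{\mathrm{\textbf{B}}}=(\underline{\mathrm{\textbf{A}}}^{\diamond}\sqcup\underline{\mathrm{\textbf{B}}}^{\diamond})^{\diamond}$ gives $\mathbb{C}(\underline{\mathrm{\textbf{A}}}\sqcap\underline{\mathrm{\textbf{B}}})$, as required.

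The main obstacle is the forward inclusion in the sum step: one needs to simultaneously exploit the layered partition identity $\bigcup_{i\leq t}\mathrm{E}_i=\bigcup_{i\leq t}(\mathrm{A}_i\cup\mathrm{B}_i)$ and the $\theta$-absorption $\theta^t\mathcal{C}(\mathrm{E}_i)\subseteq\theta^i\mathcal{C}(\mathrm{E}_i)$ for $i\leq t$. The combinatorial piece tells us we can replace each $\mathcal{C}(\mathrm{A}_t\cup\mathrm{B}_t)$ by a sum of $\mathcal{C}(\mathrm{E}_i)$'s with $i\leq t$, and the chain-ring piece then lets us realign the $\theta$-powers so that every resulting term lives in the correct layer of $\mathbb{C}(\underline{\mathrm{\textbf{A}}}\sqcup\underline{\mathrm{\textbf{B}}})$. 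Once the sum part is in hand, the intersection follows mechanically from duality and requires no further structural work.
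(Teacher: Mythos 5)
Your argument is correct. Note, however, that the paper itself does not prove this proposition at all: it is imported verbatim as \cite[Theorem 6]{FM19}, so there is no in-paper proof to compare against. What you have produced is a self-contained derivation from the tools the paper does develop, and both halves check out. The sum half reduces correctly to $\sum_{t}\theta^t\mathcal{C}(\mathrm{A}_t\cup\mathrm{B}_t)$ via item (2) of the proposition on $\mathcal{C}(\cdot)$, and the two ingredients you isolate for the forward inclusion are exactly the right ones: the layered identity $\bigcup_{i\leq t}\mathrm{E}_i=\bigcup_{i\leq t}(\mathrm{A}_i\cup\mathrm{B}_i)$ (immediate by induction from the definition of $\mathrm{E}_t$) converts $\mathcal{C}(\mathrm{A}_t\cup\mathrm{B}_t)$ into $\sum_{i\leq t}\mathcal{C}(\mathrm{E}_i)$, and the absorption $\theta^t\mathcal{C}(\mathrm{E}_i)\subseteq\theta^i\mathcal{C}(\mathrm{E}_i)$ for $i\leq t$ (valid because $\mathcal{C}(\mathrm{E}_i)$ is an $R$-submodule) pushes each term into the correct layer. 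The intersection half by dualizing through Corollary~\ref{sum-inter}, the biduality $(C^{\perp_0})^{\perp_0}=\sigma^{2r}(C)=C$ from Proposition~\ref{sidc}(2), and the formula $\mathbb{C}(\underline{\mathrm{\textbf{X}}})^{\perp_0}=\mathbb{C}(-\underline{\mathrm{\textbf{X}}}^{\diamond})$ is clean, and the compatibility of negation with $\sqcup$ and $\diamond$ that you invoke does hold since $z\mapsto n-z$ is a bijection commuting with unions and set differences. An equally direct alternative for the intersection would have been to run the same layered argument with $\texttt{lcm}$ of generator polynomials in place of $\texttt{gcd}$ (as the paper does in the Corollary immediately following this proposition), but your duality route is shorter given that the sum formula is already established.
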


\begin{Corollary} Let
$\underline{\mathrm{\textbf{A}}}=(\mathrm{A}_0, \mathrm{A}_1,
\ldots, \mathrm{A}_s)$ and
$\underline{\mathrm{\textbf{B}}}=(\mathrm{B}_0, \mathrm{B}_1,
\ldots, \mathrm{B}_s)$ in $\Re_n(q,s)$, and define $g_t
=\prod\limits_{a=t+1}^{s}\Omega(\mathrm{A}_{a})$ and $ h_t
=\prod\limits_{a=t+1}^{s}\Omega(\mathrm{B}_{a})$, for all $0\leq t
< s$. Then
\begin{enumerate}
    \item $\overline{\Psi}\left(\mathbb{C}(\underline{\mathrm{\textbf{A}}})\right)
= \left\langle\{\theta^tg_t(x)\,:\, 0\leq t< s \}\right\rangle$,
and
$\overline{\Psi}\left(\mathbb{C}(\underline{\mathrm{\textbf{B}}})\right)
= \left\langle\{\theta^th_t(x)\,:\, 0\leq t< s \}\right\rangle$;
    \item $\overline{\Psi}\left(\mathbb{C}(\underline{\mathrm{\textbf{A}}}\sqcap\underline{\mathrm{\textbf{B}}})\right)
= \left\langle\{\theta^t\texttt{lcm}(g_t, h_t)(x)\,:\, 0\leq t< s
\}\right\rangle$.
\end{enumerate}
\end{Corollary}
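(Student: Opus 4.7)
The plan is to derive item 1 as essentially a repackaging of Lemma \ref{lem0}, then obtain item 2 by combining item 1 with Proposition \ref{interL} and unfolding the definition of $\underline{\mathrm{\textbf{A}}}\sqcap\underline{\mathrm{\textbf{B}}}$.

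For item 1, I would first observe that, by the definition of $\mathcal{C}(\mathrm{A}_t)$ in (\ref{cc}), one has $\overline{\Psi}(\mathcal{C}(\mathrm{A}_t))=\langle\Omega(\overline{\mathrm{A}_t})(x)\rangle$, so $\overline{\Psi}(\mathbb{C}(\underline{\mathrm{\textbf{A}}}))=\bigoplus_{t=0}^{s-1}\theta^t\langle\Omega(\overline{\mathrm{A}_t})(x)\rangle$. Lemma \ref{lem0} rewrites this same ideal as $\left\langle\{\theta^t\prod_{a=t+1}^{s}\Omega(\mathrm{A}_a)\,:\,0\leq t<s\}\right\rangle=\left\langle\{\theta^t g_t(x)\,:\,0\leq t<s\}\right\rangle$. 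The analogous identity holds for $\underline{\mathrm{\textbf{B}}}$ and $h_t$, which yields item 1.

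For item 2, the key tool is Proposition \ref{interL}, which gives $\mathbb{C}(\underline{\mathrm{\textbf{A}}})\cap\mathbb{C}(\underline{\mathrm{\textbf{B}}})=\mathbb{C}(\underline{\mathrm{\textbf{A}}}\sqcap\underline{\mathrm{\textbf{B}}})$. Writing $\underline{\mathrm{\textbf{E}}}=\underline{\mathrm{\textbf{A}}}\sqcap\underline{\mathrm{\textbf{B}}}=(\mathrm{E}_0,\mathrm{E}_1,\ldots,\mathrm{E}_s)$ and applying item 1 to $\underline{\mathrm{\textbf{E}}}$, we get $\overline{\Psi}(\mathbb{C}(\underline{\mathrm{\textbf{E}}}))=\langle\{\theta^t e_t(x)\,:\,0\leq t<s\}\rangle$ with $e_t=\prod_{a=t+1}^{s}\Omega(\mathrm{E}_a)$. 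It therefore suffices to show $e_t=\texttt{lcm}(g_t,h_t)$.

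The main (and only nontrivial) step is the combinatorial identity
\[
\bigcup_{a=t+1}^{s}\mathrm{E}_a \;=\; \bigcup_{a=t+1}^{s}\bigl(\mathrm{A}_a\cup\mathrm{B}_a\bigr).
\]
This follows from the explicit description of $\mathrm{E}_{s-k}=(\mathrm{A}_{s-k}\cup\mathrm{B}_{s-k})\setminus\bigcup_{i=0}^{k-1}(\mathrm{A}_{s-i}\cup\mathrm{B}_{s-i})$ given just before Proposition \ref{interL}: the successive subtractions telescope, so the union of the $\mathrm{E}_{s-k}$ over $0\leq k\leq s-t-1$ equals the union of the $\mathrm{A}_{s-k}\cup\mathrm{B}_{s-k}$ over the same range. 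Once this identity is in hand, I would invoke the properties of $\Omega$ from the lemma following (\ref{xn}): since the $\mathrm{E}_a$ are pairwise disjoint and similarly for $\mathrm{A}_a$ and for $\mathrm{B}_a$, we have $e_t=\Omega(\bigcup_{a=t+1}^{s}\mathrm{E}_a)$, $g_t=\Omega(\bigcup_{a=t+1}^{s}\mathrm{A}_a)$, $h_t=\Omega(\bigcup_{a=t+1}^{s}\mathrm{B}_a)$, and $\Omega$ converts union into $\texttt{lcm}$. Combining these gives $e_t=\texttt{lcm}(g_t,h_t)$, completing item 2. I expect the telescoping identity for the $\mathrm{E}_a$'s to be the only step requiring a careful index-juggling argument; everything else is a direct consequence of item 1 and Proposition \ref{interL}.
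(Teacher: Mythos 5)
Your proposal is correct and follows essentially the same route as the paper: item 1 is read off from Lemma \ref{lem0}, and item 2 is obtained by applying that lemma to $\underline{\mathrm{\textbf{A}}}\sqcap\underline{\mathrm{\textbf{B}}}$ via Proposition \ref{interL} and then verifying $\prod_{a=t+1}^{s}\Omega(\mathrm{E}_a)=\texttt{lcm}(g_t,h_t)$ through the telescoping union $\bigcup_{a=t+1}^{s}\mathrm{E}_a=\bigcup_{a=t+1}^{s}(\mathrm{A}_a\cup\mathrm{B}_a)$ and the fact that $\Omega$ turns unions into $\texttt{lcm}$s. Your write-up is in fact slightly cleaner than the paper's, which states this last union with a garbled index.
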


\begin{Proof} We have $\underline{\mathrm{A}}\sqcap\underline{\mathrm{B}}=(\mathrm{E}_0,
\mathrm{E}_1, \ldots, \mathrm{E}_s)$, where
$\mathrm{E}_s=\mathrm{A}_s\cup\mathrm{B}_s$ and
$\mathrm{E}_{s-t}=\left(\mathrm{A}_{s-t}\cup\mathrm{B}_{s-t}\right)\backslash\left(\bigcup\limits_{i=0}^{t-1}(\mathrm{A}_{s-i}\cup\mathrm{B}_{s-i})\right)$,
for all $0<t\leq s$. From Lemma \ref{lem0} it follows that
$\overline{\Psi}\left(\mathbb{C}(\underline{\mathrm{\textbf{A}}})\right)
= \left\langle\{\theta^tg_t(x)\,:\, 0\leq t< s \}\right\rangle$,
and
$\overline{\Psi}\left(\mathbb{C}(\underline{\mathrm{\textbf{B}}})\right)
= \left\langle\{\theta^th_t(x)\,:\, 0\leq t< s \}\right\rangle$.
Since
$\overline{\Psi}\left(\mathbb{C}(\underline{\mathrm{\textbf{A}}}\sqcap\underline{\mathrm{\textbf{B}}})\right)=\overline{\Psi}\left(\mathbb{C}(\underline{\mathrm{\textbf{A}}}\right)\cap\overline{\Psi}\left(\mathbb{C}(\underline{\mathrm{\textbf{B}}})\right)$,
using again  Lemma~\ref{lem0} and Proposition~\ref{interL} it
follows that
$$\overline{\Psi}\left(\mathbb{C}(\underline{\mathrm{\textbf{A}}}\sqcap\underline{\mathrm{\textbf{B}}})\right)=\left\langle
f_0(x), \theta f_1(x),
\ldots,\theta^{s-1}f_{s-1}(x)\right\rangle,$$  where
$f_t=\prod\limits_{a=t+1}^{s}\Omega(\mathrm{E}_{a})$. Thus for all
$0\leq t< s$,
$f_t=\Omega\left(\bigcup\limits_{a=t+1}^{s}\mathrm{E}_{a}\right)$
and
$\bigcup\limits_{a=t+1}^{s}\mathrm{E}_{a}=\bigcup\limits_{a=t+1}^{s}(\mathrm{A}_{s-t-1}\cup\mathrm{B}_{s-t-1})$.
Then
$f_t=\Omega\left(\bigcup\limits_{a=t+1}^{s}(\mathrm{A}_{s-t-1}\cup\mathrm{B}_{s-t-1})\right)=\texttt{lcm}(g_t,
h_t)$.
\end{Proof}

\begin{Theorem}\label{thm1} Let $\underline{\mathrm{\textbf{A}}}$  in $\Re_n(q,s)$. Then
\begin{align}\label{Eq1*}
\mathcal{H}_\ell(\mathbb{C}(\underline{\mathrm{\textbf{A}}}))=\mathbb{C}\left(\underline{\mathrm{\textbf{A}}}\sqcap\,-p^\ell\underline{\mathrm{\textbf{A}}}^{\diamond}\right).
\end{align}
\end{Theorem}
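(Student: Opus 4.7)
The plan is to chain together three ingredients already available in the text: the definition of the Galois hull, the explicit description of the Galois dual of a cyclic serial code in terms of its defining multiset, and the intersection formula from Proposition~\ref{interL}. Everything then reduces to checking that each step stays inside $\Re_n(q,s)$ so that the operations $\sqcap$ and $\mathbb{C}(\cdot)$ are applicable.

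First I would unfold the definition: by Definition~\ref{hull},
\[
\mathcal{H}_\ell\bigl(\mathbb{C}(\underline{\mathrm{\textbf{A}}})\bigr)=\mathbb{C}(\underline{\mathrm{\textbf{A}}})\cap\mathbb{C}(\underline{\mathrm{\textbf{A}}})^{\perp_\ell}.
\]
Next I would invoke the Galois dual formula noted just before the theorem, namely $\mathbb{C}(\underline{\mathrm{\textbf{A}}})^{\perp_\ell}=\mathbb{C}(-p^\ell\underline{\mathrm{\textbf{A}}}^{\diamond})$. This formula is itself a direct consequence of Lemma~\ref{lem0} combined with Proposition~\ref{sidc}(1) written as $C^{\perp_\ell}=\sigma^\ell(C^{\perp_0})$ and the identity $\sigma^\ell(\mathcal{C}(\mathrm{A}))=\mathcal{C}(p^\ell\mathrm{A})$; if needed, I would insert a one-line justification recalling that the reversal $\underline{\mathrm{\textbf{A}}}\mapsto\underline{\mathrm{\textbf{A}}}^{\diamond}$ captures the passage from $\mathrm{A}_t$ to $\mathrm{A}_{s-t}$ coming from the Euclidean dual description in Lemma~\ref{lem0}, and multiplication by $-p^\ell$ captures the twist $\sigma^\ell$ together with the $X\mapsto X^{-1}$ reversal of the reciprocal polynomial.

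Before applying the intersection formula, I would verify that $-p^\ell\underline{\mathrm{\textbf{A}}}^{\diamond}$ is an admissible element of $\Re_n(q,s)$. This is immediate: $\underline{\mathrm{\textbf{A}}}^{\diamond}$ is just a reordering of the components of $\underline{\mathrm{\textbf{A}}}$, so it lies in $\Re_n(q,s)$; the map $\mathrm{A}\mapsto -\mathrm{A}$ permutes $q$-cyclotomic cosets modulo $n$ and hence sends ordered $(q,s)$-partitions to ordered $(q,s)$-partitions; and the remark just before Lemma~\ref{lem0} already records that multiplication by $p^\ell$ preserves $\Re_n(q,s)$. So $\sqcap$ can be applied.

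Finally, invoking Proposition~\ref{interL} to the two multisets $\underline{\mathrm{\textbf{A}}}$ and $-p^\ell\underline{\mathrm{\textbf{A}}}^{\diamond}$ yields
\[
\mathbb{C}(\underline{\mathrm{\textbf{A}}})\cap\mathbb{C}(-p^\ell\underline{\mathrm{\textbf{A}}}^{\diamond})=\mathbb{C}\bigl(\underline{\mathrm{\textbf{A}}}\sqcap\,-p^\ell\underline{\mathrm{\textbf{A}}}^{\diamond}\bigr),
\]
which is exactly the claim~\eqref{Eq1*}. There is no serious obstacle here; the only mild subtlety is the bookkeeping verification that $-p^\ell\underline{\mathrm{\textbf{A}}}^{\diamond}$ lies in $\Re_n(q,s)$ so that both $\mathbb{C}(\cdot)$ and $\sqcap$ make sense, and that the dual formula is correctly applied with the reversal $(\cdot)^{\diamond}$ and the twist $-p^\ell$. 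All the rest is a direct substitution into previously established results.
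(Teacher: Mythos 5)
Your proposal is correct and follows essentially the same three-step argument as the paper's own proof: unfold the definition of the hull, substitute the dual formula $\mathbb{C}(\underline{\mathrm{\textbf{A}}})^{\perp_\ell}=\mathbb{C}(-p^\ell\underline{\mathrm{\textbf{A}}}^{\diamond})$ recorded just before the theorem, and apply Proposition~\ref{interL}. The extra bookkeeping you add (checking that $-p^\ell\underline{\mathrm{\textbf{A}}}^{\diamond}$ lies in $\Re_n(q,s)$) is a reasonable supplement that the paper leaves implicit.
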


\begin{proof} Let  $\underline{\mathrm{\textbf{A}}}$  in $\Re_n(q,s)$ and
$0\leq\ell<r$. We have
\begin{eqnarray*}
  \mathcal{H}_\ell(\mathbb{C}(\underline{\mathrm{\textbf{A}}})) &=& \mathbb{C}(\underline{\mathrm{\textbf{A}}})\cap\mathbb{C}(\underline{\mathrm{\textbf{A}}})^{\perp_\ell}, \text{ from Definition \ref{hull}  }; \\
    &=& \mathbb{C}(\underline{\mathrm{\textbf{A}}})\cap\mathbb{C}(-p^\ell\underline{\mathrm{\textbf{A}}}^{\diamond}), \text{ since } \mathbb{C}(\underline{\mathrm{\textbf{A}}})^{\perp_\ell}=\mathbb{C}(-p^\ell\underline{\mathrm{\textbf{A}}}^{\diamond});  \\
    &=& \mathbb{C}\left(\underline{\mathrm{\textbf{A}}}\sqcap\,-p^\ell\underline{\mathrm{\textbf{A}}}^{\diamond}\right), \text{ from Proposition \ref{interL}.}
\end{eqnarray*}
\end{proof}

\begin{Example} Let $R=\mathbb{Z}_{2^a}[\theta]$ with $1\leq a \leq 2$ be the finite chain ring of parameters $(2,a,1,e,2)$.
Consider the $2$-cyclotomic cosets modulo $7$ given by
$\complement_2(\{0\})=\{0\}, \complement_2(\{1\})=\{1; 2; 4\}$,
 and $\complement_2(\{3\})=\{3; 5; 6\}$. Note that
$\left\{\complement_2(\{1\}), \complement_2(\{3\})\right\}$ is an
asymmetric set, and $\complement_2(\{0\})$ is a symmetric set.
Consider the cyclic serial code over $R$ of length $7$ with
defining multiset
$\underline{\mathrm{\textbf{A}}}=\left(\complement_2(\{0\}),
\complement_2(\{3\}), \complement_2(\{1\})\right)$.

\noindent Then
$-\underline{\mathrm{\textbf{A}}}^{\diamond}=\left(\complement_2(\{3\}),
\complement_2(\{1\}), \complement_2(\{0\})\right),\text{ and
}\mathbb{C}\left(\underline{\mathrm{\textbf{A}}}\right)=\mathcal{C}\left(\complement_2(\{0\})\right)\oplus
\theta \mathcal{C}\left(\complement_2(\{3\})\right)$. Thus
$\mathbb{C}\left(\underline{\mathrm{\textbf{A}}}\right)^{\perp_0}=\mathbb{C}\left(-\underline{\mathrm{\textbf{A}}}^{\diamond}\right)=\mathcal{C}\left(\complement_2(\{3\})\right)\oplus
\theta \mathcal{C}\left(\complement_2(\{1\})\right)$. Finally,
$\underline{\mathrm{\textbf{A}}}\sqcap
\,-\underline{\mathrm{\textbf{A}}}^{\diamond}=\left(\mathrm{F}_0,
\mathrm{F}_1, \mathrm{F}_2\right)$ where $\mathrm{F}_0=\emptyset,
\mathrm{F}_1=\complement_2(\{3\})$, and
$\mathrm{F}_2=\complement_2(\{0; 1\})$. Therefore
$\mathcal{H}_0(\mathbb{C}\left(\underline{\mathrm{\textbf{A}}}\right))=\mathbb{C}\left(\underline{\mathrm{\textbf{A}}}\sqcap
\,-\underline{\mathrm{\textbf{A}}}^{\diamond}\right)=\mathbb{C}(\emptyset,\complement_2(\{3\}),
\complement_2(\{0; 1\}))= \theta
\mathcal{C}\left(\complement_2(\{0;1\})\right)$.
 \end{Example}

\subsection{Euclidean hulls}
From now on, $\ell=0$. The following result provides us a way of
checking whether a given cyclic serial code $D$ is the Euclidean
hull of a cyclic code $C$ or not. Of course, if
$\mathcal{H}_0(C)=D$, then the cyclic code $D$ is serial if, and
only if $C$ the cyclic code is also a serial code. In the sequel,
for each $\underline{\mathrm{\textbf{X}}}=(\mathrm{X}_0,
\mathrm{X}_1, \cdots, \mathrm{X}_s)\in\Re_n(q,s)$, we will denote
$\Omega(\mathrm{X}_a)=\partial\left(((x^{(a)}_{il})^{\circ}),((y^{(a)}_{jh})^{\bullet}),((z^{(a)}_{jh})^{\bullet})\right)$,
for  $a$ in $\{0; 1; \cdots; s\}$. Thus
$\Omega(-\mathrm{X}_a)=\partial\left(((x^{(a)}_{il})^{\circ}),((z^{(a)}_{jh})^{\bullet}),((y^{(a)}_{jh})^{\bullet})\right)$,
and for $0\leq t\leq s-1$,
$\prod\limits_{a=t+1}^{s}\Omega(\mathrm{X}_{a})=\partial\left(((x_{il}^{[t]})^{\circ}),((y_{jh}^{[t]})^{\bullet}),((y_{jh}^{[t]})^{\bullet})\right)$,
where $x_{il}^{[t]}=\sum\limits_{a=t+1}^{s}x^{(a)}_{il},\,
y_{jh}^{[t]}=\sum\limits_{a=t+1}^{s}y^{(a)}_{jh}$ and
$z_{jh}^{[t]}=\sum\limits_{a=t+1}^{s}z^{(a)}_{jh}$. Note that, for
all $0\leq t<s$, we have
$x_{il}^{[t]}=x_{il}^{[t+1]}+x_{il}^{(t+1)},
y_{jh}^{[t]}=y_{jh}^{[t+1]}+y_{jh}^{(t+1)}$, and
$z_{jh}^{[t]}=z_{jh}^{[t+1]}+z_{jh}^{(t+1)}$. Since
$\partial(((1)^{\circ}),((1)^{\bullet}),((1))^{\bullet})=X^n-1=g_0\partial(((x^{(0)}_{il})^{\circ}),((y^{(0)}_{jh})^{\bullet}),((z^{(0)}_{jh})^{\bullet}))$,
 it follows that
$\sum\limits_{a=0}^{s}x^{(a)}_{il}=\sum\limits_{a=0}^{s}y^{(a)}_{jh}=\sum\limits_{a=0}^{s}z^{(a)}_{jh}=1$.
From Eqs. (\ref{d0}) and (\ref{eq-c2}), there exists a unique
$$(\textbf{x}^{\circ},\textbf{y}^{\bullet},\textbf{z}^{\bullet})=\left(
(((x^{(a)}_{il})_{0\leq a< s})^{\circ}), (((y^{(a)}_{jh})_{0\leq
a< s})^{\bullet}), (((z^{(a)}_{jh})_{0\leq a<
s})^{\bullet})\right)$$ which is an element in
$\mathcal{E}_n(q,s)$ such that
$$\overline{\Psi}(\mathbb{C}(\underline{\mathrm{\textbf{X}}}))=\left\langle\left\{
\theta^t\cdot\partial\left(((x^{[t]}_{il})^{\circ}),((y^{[t]}_{jh})^{\bullet}),((z^{[t]}_{jh})^{\bullet})\right)\,:\,
0\leq t\leq s-1 \right\}\right\rangle.$$

From Eqs. (\ref{xn}), (\ref{d0}), and (\ref{eq-c2}), the following
lemma  follows.

\begin{Lemma}\label{lcs} There is a bijection between the set $\mathcal{C}(n; R)$ of cyclic serial codes of length $n$ over $R$ and
the set $\mathcal{E}_n(q,s)$.
\end{Lemma}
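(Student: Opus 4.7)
The plan is to assemble the correspondence in two steps and then cross-check by a cardinality count. First, Lemma \ref{lem0} together with the paragraph introducing $\Re_n(q,s)$ already provides a bijection between $\mathcal{C}(n;R)$ and $\Re_n(q,s)$: every cyclic serial code admits a unique defining multiset, and every $(s+1)$-tuple in $\Re_n(q,s)$ arises in this way. Hence the lemma reduces to producing a bijection between $\Re_n(q,s)$ and $\mathcal{E}_n(q,s)$.

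For that reduction, take $\underline{\mathrm{\textbf{A}}}=(\mathrm{A}_0,\mathrm{A}_1,\ldots,\mathrm{A}_s)\in\Re_n(q,s)$ and invoke the basic-irreducible factorisation (\ref{xn}): every $q$-cyclotomic coset modulo $n$ is either a symmetric coset $G_{il}$ with $i\mid n$, $i\in\mathcal{N}_q$, $1\leq l\leq\gamma(i;q)$, or one member of an asymmetric pair $\{F_{jh},-F_{jh}\}$ with $j\mid n$, $j\notin\mathcal{N}_q$, $1\leq h\leq\beta(j;q)$. Because the nonempty $\mathrm{A}_a$'s partition $[|0;n-1|]$, each such coset lies in exactly one block. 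Encode this assignment by setting $x^{(a)}_{il}=1$ iff $G_{il}\subseteq\mathrm{A}_a$, $y^{(a)}_{jh}=1$ iff $F_{jh}\subseteq\mathrm{A}_a$, and $z^{(a)}_{jh}=1$ iff $-F_{jh}\subseteq\mathrm{A}_a$, for each $0\leq a\leq s-1$. At most one index $a$ in this range can yield a $1$ (the coset may instead sit in $\mathrm{A}_s$, in which case the corresponding tuple is identically zero), so each tuple lies in $\mathcal{E}_s$, and the resulting triple $(\textbf{x}^\circ,\textbf{y}^\bullet,\textbf{z}^\bullet)$ belongs to $\mathcal{I}_n(q,s)\times(\mathcal{J}_n(q,s))^2=\mathcal{E}_n(q,s)$.

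The inverse is transparent: given $(\textbf{x}^\circ,\textbf{y}^\bullet,\textbf{z}^\bullet)\in\mathcal{E}_n(q,s)$, define $\mathrm{A}_a$ for $0\leq a\leq s-1$ as the union of those cosets whose indicator equals $1$ in position $a$, and set $\mathrm{A}_s=[|0;n-1|]\setminus(\mathrm{A}_0\cup\cdots\cup\mathrm{A}_{s-1})$. The defining constraint $\sum_{a=0}^{s-1}x^{(a)}_{il}\in\{0,1\}$ and its analogues for $y$ and $z$ ensure that the reconstructed $\mathrm{A}_a$'s are pairwise disjoint, so $\underline{\mathrm{\textbf{A}}}\in\Re_n(q,s)$. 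The two maps are mutually inverse by construction, yielding the claimed bijection.

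As a sanity check, $|\mathcal{E}_s|=s+1$ gives $|\mathcal{E}_n(q,s)|=(s+1)^{\sum_{i}\gamma(i;q)+2\sum_{j}\beta(j;q)}=(s+1)^{\omega(n;q)}=|\Re_n(q,s)|$, in agreement with the cardinality formula noted after the definition of $\Re_n(q,s)$. No step is analytically hard: the entire argument is a careful indexing exercise. The only bookkeeping to be watchful about is matching the indicators $x^{(a)}_{il}$, $y^{(a)}_{jh}$, $z^{(a)}_{jh}$ with the factorisation (\ref{xn}) and, through the function $\partial$, with the generator description $\theta^t\partial(\cdots)$ of $\overline{\Psi}(\mathbb{C}(\underline{\mathrm{\textbf{X}}}))$ derived from Lemma \ref{lem0}; I do not anticipate any genuine obstacle beyond that.
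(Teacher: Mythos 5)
Your proof is correct and follows essentially the same route as the paper, which states the lemma as an immediate consequence of the factorisation (\ref{xn}), the correspondence (\ref{d0}) and the decomposition (\ref{eq-c2}) without writing out details; you simply make explicit the two-step bijection $\mathcal{C}(n;R)\leftrightarrow\Re_n(q,s)\leftrightarrow\mathcal{E}_n(q,s)$ that the paper's indexing already encodes. The cardinality check $(s+1)^{\omega(n;q)}$ on both sides is a nice confirmation but not needed once the explicit inverse is given.
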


When $\ell=0$, and with the triple-sequence of a cyclic serial
code, in comparing the two sides of Eq.\,(\ref{Eq1*}) of
Theorem\,\ref{thm1}, it obtains the following result.

\begin{Corollary}\label{cor*} Let $(\textbf{x}^{\circ},\textbf{y}^{\bullet},\textbf{z}^{\bullet})=\left(
(((x^{(a)}_{il})_{0\leq a< s})^{\circ}), (((y^{(a)}_{jh})_{0\leq
a< s})^{\bullet}), (((z^{(a)}_{jh})_{0\leq a<
s})^{\bullet})\right)$ and
$$(\textbf{u},\textbf{v},\textbf{w})=\left(
(((u^{(a)}_{il})_{0\leq a< s})^{\circ}), (((v^{(a)}_{jh})_{0\leq
a< s})^{\bullet}), (((w^{(a)}_{jh})_{0\leq
a<s})^{\bullet})\right)$$ in $\mathcal{E}_n(q,s)$  such that
$$\overline{\Psi}(C)=\left\langle\left\{
\theta^t\cdot\partial\left(((x^{[t]}_{il})^{\circ}),((y^{[t]}_{jh})^{\bullet}),((z^{[t]}_{jh})^{\bullet})\right)\,:\,
0\leq t\leq s-1 \right\}\right\rangle,$$ and
$$\overline{\Psi}(D)=\left\langle\left\{
\theta^t\cdot\partial\left(((u^{[t]}_{il})^{\circ}),((v^{[t]}_{jh})^{\bullet}),((w^{[t]}_{jh})^{\bullet})\right)\,:\,
0\leq t\leq s-1 \right\}\right\rangle.$$ Then $\mathcal{H}_0(C)=D$
if, and only if for all $0\leq t\leq s-1$,
\begin{align}
\left\{%
\begin{array}{l}
    u^{[t]}_{il}=\max\left\{\sum\limits_{a=t+1}^{s}x^{(a)}_{il};\sum\limits_{a=t+1}^{s}x^{(s-a)}_{il}\right\}; \\
    v^{[t]}_{jh}=\max\left\{\sum\limits_{a=t+1}^{s}y^{(a)}_{jh};\sum\limits_{a=t+1}^{s}z^{(s-a)}_{jh}\right\}; \\
    w^{[t]}_{jh}=\max\left\{\sum\limits_{a=t+1}^{s}z^{(a)}_{jh};\sum\limits_{a=t+1}^{s}y^{(s-a)}_{jh}\right\}.
\end{array}%
\right.
\end{align}
Moreover, for all $0\leq t\leq s-1$, if $2t\leq s-1$, then
$u^{[t]}_{il}=\max\left\{\sum\limits_{a=t+1}^{s}x^{(a)}_{il};\sum\limits_{a=t+1}^{s}x^{(s-a)}_{il}\right\}=1$,
and $(v^{[t]}_{jh}; w^{[t]}_{jh})\in\{(1; 0), (0; 1), (1; 1)\}$,
since $\sum\limits_{a=0}^{s}x^{(a)}_{il}=1$.
\end{Corollary}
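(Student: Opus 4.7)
The plan is to decode Theorem~\ref{thm1} at $\ell = 0$ in the triple-sequence encoding of $\mathcal{E}_n(q,s)$. I would first apply Theorem~\ref{thm1} to write $\mathcal{H}_0(C) = \mathbb{C}(\underline{\mathbf{A}} \sqcap -\underline{\mathbf{A}}^{\diamond})$, where $\underline{\mathbf{A}}$ is the defining multiset of $C$. The corollary just before the theorem on $\texttt{lcm}$-generators then shows that the level-$t$ generator polynomial of $\mathcal{H}_0(C)$ equals $\texttt{lcm}(g_t, h_t)$, where $g_t$ is the level-$t$ generator of $C$ and $h_t$ the level-$t$ generator of $\mathbb{C}(-\underline{\mathbf{A}}^{\diamond}) = C^{\perp_0}$. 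By the bijection in Lemma~\ref{lcs}, the equality $\mathcal{H}_0(C) = D$ is then equivalent to the matching of the triple-encodings of these generators for every $t$.

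Next I would translate $g_t$ and $h_t$ through the $\partial$-encoding. By construction, $g_t$ is encoded by $\bigl((x^{[t]}_{il})^{\circ}, (y^{[t]}_{jh})^{\bullet}, (z^{[t]}_{jh})^{\bullet}\bigr)$ with $x^{[t]}_{il} = \sum_{a=t+1}^s x^{(a)}_{il}$, and likewise for the other two coordinates. For $h_t$, the key observation is that the $b$-th component of $-\underline{\mathbf{A}}^{\diamond}$ is $-\mathbf{A}_{s-b}$; since symmetric cosets $G_{il}$ are fixed under negation while each asymmetric pair $\{F_{jh}, -F_{jh}\}$ is swapped, $\Omega(-\mathbf{A}_{s-b})$ has encoding $\bigl((x^{(s-b)}_{il})^{\circ}, (z^{(s-b)}_{jh})^{\bullet}, (y^{(s-b)}_{jh})^{\bullet}\bigr)$. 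Substituting $a = s-b$ in the product $h_t = \prod_{b=t+1}^s \Omega(-\mathbf{A}_{s-b})$ and invoking the additivity of $\partial$-exponents on disjoint factors, the encoding exponents of $h_t$ become $\sum_{a=t+1}^s x^{(s-a)}_{il}$, $\sum_{a=t+1}^s z^{(s-a)}_{jh}$ and $\sum_{a=t+1}^s y^{(s-a)}_{jh}$. Applying the dictionary $\texttt{lcm}(\partial(\cdots), \partial(\cdots)) = \partial(\max, \max, \max)$ stated right after Eq.~(\ref{d0}) then yields exactly the three max-formulas for $u^{[t]}_{il}, v^{[t]}_{jh}, w^{[t]}_{jh}$, proving the ``if and only if''.

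For the ``moreover'' clause, I would use the constraints $\sum_{a=0}^s x^{(a)}_{il} = \sum_{a=0}^s y^{(a)}_{jh} = \sum_{a=0}^s z^{(a)}_{jh} = 1$, so each of these $(0/1)$-sequences has exactly one nonzero entry. When $2t \leq s-1$, the two index sets $\{t+1,\ldots,s\}$ and $\{0,\ldots,s-t-1\}$ cover $\{0,\ldots,s\}$, which forces the maximum of the two corresponding indicators of $x^{(\cdot)}_{il}$ to equal $1$, giving $u^{[t]}_{il} = 1$. Similarly, $(v^{[t]}_{jh}, w^{[t]}_{jh}) = (0,0)$ would force the unique $1$-position of $y^{(\cdot)}_{jh}$ to lie simultaneously in $\{0,\ldots,t\}$ (from $v^{[t]}_{jh} = 0$) and in $\{s-t,\ldots,s\}$ (from $w^{[t]}_{jh} = 0$); but these sets are disjoint precisely when $t < s-t$, i.e., $2t \leq s-1$, contradicting the hypothesis. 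Hence $(v^{[t]}_{jh}, w^{[t]}_{jh}) \in \{(1,0),(0,1),(1,1)\}$. The main bookkeeping hurdle throughout is the combined action of $(\,\cdot\,)^{\diamond}$ (reversing $a \mapsto s-a$) with negation (swapping the $y$- and $z$-components of the encoding); once the substitution $a = s-b$ is carried out inside the product defining $h_t$, the three formulas drop out in a single step from the lcm-to-max dictionary.
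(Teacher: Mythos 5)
Your proposal is correct and follows essentially the same route the paper intends: the paper offers no detailed proof, merely asserting that the corollary is obtained "in comparing the two sides of Eq.~(\ref{Eq1*}) of Theorem~\ref{thm1}" in the triple-sequence encoding, and your argument is precisely that comparison carried out in full (via the $\texttt{lcm}$-generator corollary, the swap of the $\textbf{y}$- and $\textbf{z}$-slots under negation, the index reversal under $(\,\cdot\,)^{\diamond}$, and the $\texttt{lcm}$-to-$\max$ dictionary). Your verification of the ``moreover'' clause from the covering/disjointness of $\{t+1,\ldots,s\}$ and $\{0,\ldots,s-t-1\}$ when $2t\leq s-1$ is also sound.
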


From Corollary \ref{cor*}, we can state that Theorem 3.4, and
Corollaries 3.5 and 3.6  in \cite{JSU19} can be naturally extended
to finite chain rings of nilpotency index 2. The following remark
gives this generalization.

\begin{Remark} Let $(\textbf{x}^{\circ},\textbf{y}^{\bullet},\textbf{z}^{\bullet})=\left(
(((x^{(a)}_{il})_{0\leq a< 2})^{\circ}), (((y^{(a)}_{jh})_{0\leq
a< 2})^{\bullet}), (((z^{(a)}_{jh})_{0\leq a<
2})^{\bullet})\right)$ and
$$(\textbf{u},\textbf{v},\textbf{w})=\left(
(((u^{(a)}_{il})_{0\leq a< 2})^{\circ}), (((v^{(a)}_{jh})_{0\leq
a< 2})^{\bullet}), (((w^{(a)}_{jh})_{0\leq a<
2})^{\bullet})\right)$$ in $\mathcal{E}_n(q,2)$ such that
$$\overline{\Psi}(C)=\left\langle\left\{
\partial\left(((x^{[0]}_{il})^{\circ}),((y^{[0]}_{jh})^{\bullet}),((z^{[0]}_{jh})^{\bullet})\right),
\theta\cdot\partial\left(((x^{[1]}_{il})^{\circ}),((y^{[1]}_{jh})^{\bullet}),((z^{[1]}_{jh})^{\bullet})\right)\right\}\right\rangle,$$
and
$$\overline{\Psi}(D)=\left\langle\left\{
\partial\left(((u^{[0]}_{il})^{\circ}),((v^{[0]}_{jh})^{\bullet}),((w^{[0]}_{jh})^{\bullet})\right),
\theta\cdot\partial\left(((u^{[1]}_{il})^{\circ}),((v^{[1]}_{jh})^{\bullet}),((w^{[1]}_{jh})^{\bullet})\right)\right\}\right\rangle.$$
Then $\mathcal{H}_0(C)=D$ if, and only if $(x_{il}^{(1)};
x_{il}^{(2)})\in\left\{
\begin{array}{ll}
    \{(0; 1)\}, & \hbox{if $u_{il}^{(2)}=0$;} \\
    \{(0; 0), (1; 0)\}, & \hbox{if $u_{il}^{(2)}=1$,}
\end{array}
\right.    $ and
$$(y_{jh}^{(1)}; y_{jh}^{(2)}; z_{jh}^{(1)}; z_{jh}^{(2)})\in\left\{%
\begin{array}{ll}
    \{(0; 0; 0; 0), (1; 0; 1; 0)\}, & \hbox{if $(u_{il}^{[0]}; v_{jh}^{[0]}; w_{jh}^{[0]}; v_{jh}^{[1]}; w_{jh}^{[1]})=(1; 1; 1; 1; 1)$;} \\
    \{(0; 1; 1; 0), (1; 0; 1; 1)\}, & \hbox{if $(u_{il}^{[0]}; v_{jh}^{[0]}; w_{jh}^{[0]}; v_{jh}^{[1]}; w_{jh}^{[1]})=(1; 1; 1; 0; 1)$;} \\
    \{(0; 1; 0; 0), (1; 0; 0; 1)\}, & \hbox{if $(u_{il}^{[0]}; v_{jh}^{[0]}; w_{jh}^{[0]}; v_{jh}^{[1]}; w_{jh}^{[1]})=(1; 1; 1; 1; 0)$;} \\
    \{(1; 0; 0; 0)\}, & \hbox{if $(u_{il}^{[0]}; v_{jh}^{[0]}; w_{jh}^{[0]}; v_{jh}^{[1]}; w_{jh}^{[1]})=(1; 1; 0; 1; 0)$;} \\
    \{(0; 0; 1; 0)\}, & \hbox{if $(u_{il}^{[0]}; v_{jh}^{[0]}; w_{jh}^{[0]}; v_{jh}^{[1]}; w_{jh}^{[1]})=(1; 0; 1; 0; 1)$;} \\
    \{(0; 1; 0; 1)\}, & \hbox{if $(u_{il}^{[0]}; v_{jh}^{[0]}; w_{jh}^{[0]}; v_{jh}^{[1]}; w_{jh}^{[1]})=(1; 1; 1; 0; 0)$,}
\end{array}
\right.$$   for all $i, l, j, h$. Moreover,
\begin{enumerate}
    \item $C$ is  LCD if, and only
    if $x_{il}^{(2)}=x_{il}^{(1)}, y_{jh}^{(2)}=y_{jh}^{(1)}, z_{jh}^{(2)}=z_{jh}^{(1)}$ and $(x_{il}^{(2)}; y_{jh}^{(2)}; z_{jh}^{(2)})\in\{(0; 0; 0), (0; 1; 1), (1; 0; 0), (1; 1; 1)\}$, for all $i, l, j, h$.
    \item $C$ is self-orthogonal if, and only
    if $(x_{il}^{(2)}; x_{il}^{(1)})\in\{(0; 1), (1; 0)\}$ and $$(y_{jh}^{(2)}; y_{jh}^{(1)}; z_{jh}^{(2)}; z_{jh}^{(1)})\in\{(1; 0; 1; 0), (0; 1; 1; 0), (1; 0; 0; 1), (1; 0; 0; 0), (0; 0; 1; 0), (0; 1; 0; 1)\},$$ for all $i, l, j, h$.
\end{enumerate}
\end{Remark}

\section{ The $q$-dimensions of Euclidean hulls of cyclic serial codes}

In this section, $C$ is a cyclic serial code of length $n$ over
$R$ with triple-sequence
$$(\textbf{x}^{\circ},\textbf{y}^{\bullet},\textbf{z}^{\bullet})=\left(
(((x^{(a)}_{il})_{0\leq a< s})^{\circ}), (((y^{(a)}_{jh})_{0\leq
a< s})^{\bullet}), (((z^{(a)}_{jh})_{0\leq a<
s})^{\bullet})\right)$$ in $\mathcal{E}_n(q,s).$ Then
$$\overline{\Psi}(C)=\left\langle\left\{
\theta^t\cdot\partial\left(((x^{[t]}_{il})^{\circ}),((y^{[t]}_{jh})^{\bullet}),((z^{[t]}_{jh})^{\bullet})\right)\,:\,
0\leq t\leq s-1 \right\}\right\rangle.$$ From Corollary
\ref{cor*},
$$\overline{\Psi}\left(\mathcal{H}_0(C)\right)=\left\langle\left\{
\theta^t\cdot\partial\left(((u^{[t]}_{il})^{\circ}),((v^{[t]}_{jh})^{\bullet}),((w^{[t]}_{jh})^{\bullet})\right)\,:\,
0\leq t\leq s-1 \right\}\right\rangle,$$  where
\begin{align*}
\left\{%
\begin{array}{l}
    u^{[t]}_{il}=1-\min\left\{ \sum\limits_{a=0}^{t}x^{(a)}_{il}; 1-\sum\limits_{a=0}^{s-t-1}x^{(a)}_{il}\right\}; \\
    v^{[t]}_{jh}=1-\min\left\{ \sum\limits_{a=0}^{t}y^{(a)}_{jh}; 1-\sum\limits_{a=0}^{s-t-1}z^{(a)}_{jh}\right\}; \\
    w^{[t]}_{jh}=1-\min\left\{ \sum\limits_{a=0}^{t}z^{(a)}_{jh}; 1-\sum\limits_{a=0}^{s-t-1}y^{(a)}_{jh}\right\},
\end{array}%
\right.
\end{align*} for all $0\leq
t\leq s-1$. The following notations are important for the sequel
of this paper.  For all $0\leq t \leq s-1$, $1\leq
l\leq\gamma(i;q)$ and $1\leq h\leq\beta(j;q)$, denote by:
\begin{align}\label{n1}\varepsilon^{(t)}_{jh}=v^{[t]}_{jh}+w^{[t]}_{jh}.\end{align} Note that $\varepsilon^{(-1)}_{jh}=2$.
Let us consider  now \begin{align}\label{n2}
\vartriangle_{il}=\sum\limits_{t=0}^{s-1}(s-t)(u^{[t-1]}_{il}-u^{[t]}_{il}),
\text{ and }
\blacktriangle_{jh}=\sum\limits_{t=0}^{s-1}(s-t)(\varepsilon^{(t-1)}_{il}-\varepsilon^{(t)}_{il}).
\end{align}
Obviously, $\vartriangle_{il} =
\sum\limits_{t=0}^{s-1}\vartriangle_{il}^{(t)}$, where
$\vartriangle_{il}^{(t)}=\min\left\{
\sum\limits_{a=0}^{t}x^{(a)}_{il};
1-\sum\limits_{a=0}^{s-t-1}x^{(a)}_{il}\right\},$ and
$\blacktriangle_{jh}=\sum\limits_{t=0}^{s-1}\blacktriangle_{jh}^{(t)}$,
where
$$\blacktriangle_{jh}^{(t)}=\min\left\{
\sum\limits_{a=0}^{t}y^{(a)}_{jh};
1-\sum\limits_{a=0}^{s-t-1}z^{(a)}_{jh}\right\}+\min\left\{
\sum\limits_{a=0}^{t}z^{(a)}_{jh};
1-\sum\limits_{a=0}^{s-t-1}y^{(a)}_{jh}\right\}.$$  Thus
$\vartriangle_{i}=\sum\limits_{l=1}^{\gamma(i;q)}\vartriangle_{il},$
$\varepsilon^{(t)}_{j}=\sum\limits_{h=1}^{\beta(j;q)}\varepsilon^{(t)}_{jh}\text{
and
}\blacktriangle_{j}=\sum\limits_{h=1}^{\beta(j;q)}\blacktriangle_{jh}.
$ Note that for any integer $t$, if $t<0$ then
$\varepsilon_{jh}^{(t)}=2\beta(j,q)$.

\begin{Remark}\label{r00} Let $0\leq t\leq s-1.$
\begin{enumerate}
    \item  $\vartriangle_{il}^{(t)}\in\{0, 1\}$ and $\blacktriangle_{jh}^{(t)}\in\{0, 1, 2\}.$
    \item If $0<t<s$, then $\vartriangle_{il}^{(t-1)}\leq\vartriangle_{il}^{(t)}$ and $\blacktriangle_{jh}^{(t-1)}\leq\blacktriangle_{jh}^{(t)}.$
    \item If $2t< s$, then $\vartriangle_{il}^{(t)}=0$ and $\blacktriangle_{jh}^{(t)}\leq 1.$
\end{enumerate}
\end{Remark}


\begin{Lemma}\label{lem-co} Let $j$ be a divisor of $n$ such that $j\not\in\mathcal{N}_q$. Then   $$\left\{%
\begin{array}{ll}
    0\leq \varepsilon^{(t-1)}_{j}-\varepsilon^{(t)}_{j}\leq  \beta(j;q)-(\varepsilon^{(t-2)}_{j}-\varepsilon^{(t-1)}_{j}), & \hbox{if $t< \left\lceil\frac{s}{2}\right\rceil$;} \\
    0\leq\varepsilon^{(t-1)}_{j}-\varepsilon^{(t)}_{j}\leq  2\left(\beta(j;q)-(\varepsilon^{(t-2)}_{j}-\varepsilon^{(t-1)}_{j})\right), & \hbox{if  $t\geq \left\lceil\frac{s}{2}\right\rceil$.}
\end{array}%
\right. $$
\end{Lemma}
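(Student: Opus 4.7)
The approach is to translate the statement into an equivalent assertion about the per-block increments $\blacktriangle^{(t)}_{jh}$, on which Remark~\ref{r00} supplies precisely the bounds we need. First I would record the identity
\begin{equation*}
\varepsilon^{(t)}_{jh} \;=\; v^{[t]}_{jh}+w^{[t]}_{jh} \;=\; 2-\blacktriangle^{(t)}_{jh},\qquad 0\le t\le s-1,
\end{equation*}
which is immediate from the formulas $v^{[t]}_{jh}=1-\min\{\cdot,\cdot\}$, $w^{[t]}_{jh}=1-\min\{\cdot,\cdot\}$ displayed at the beginning of Section~4 and the definition of $\blacktriangle^{(t)}_{jh}$ following Eq.~(\ref{n2}). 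This identity persists at $t=-1$ with the natural convention $\blacktriangle^{(-1)}_{jh}=0$, matching the given $\varepsilon^{(-1)}_{jh}=2$. Summing over $1\le h\le\beta(j;q)$ gives $\varepsilon^{(t)}_{j}=2\beta(j;q)-\blacktriangle^{(t)}_{j}$, and the convention $\varepsilon^{(t)}_j=2\beta(j;q)$ for $t<0$ stated after Eq.~(\ref{n2}) corresponds to $\blacktriangle^{(t)}_j=0$ for $t<0$.

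The lower bound $0\le \varepsilon^{(t-1)}_j-\varepsilon^{(t)}_j$ then follows at once from Item~2 of Remark~\ref{r00}, which says $\blacktriangle^{(t-1)}_{jh}\le \blacktriangle^{(t)}_{jh}$, by summing over $h$.

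For the upper bound the key step is a telescoping: after substituting $\varepsilon^{(t)}_j=2\beta(j;q)-\blacktriangle^{(t)}_j$ and moving the parenthesized term $\varepsilon^{(t-2)}_j-\varepsilon^{(t-1)}_j$ across, both cases of the lemma collapse to
\begin{equation*}
\blacktriangle^{(t)}_j - \blacktriangle^{(t-2)}_j \;\le\; \begin{cases} \beta(j;q), & \text{if } t<\lceil s/2\rceil,\\ 2\beta(j;q), & \text{if } t\ge \lceil s/2\rceil.\end{cases}
\end{equation*}
Since $\blacktriangle^{(t-2)}_j\ge 0$, it suffices to bound $\blacktriangle^{(t)}_j$ from above. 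In the first case the condition $t<\lceil s/2\rceil$ is equivalent to $2t<s$, so Item~3 of Remark~\ref{r00} yields $\blacktriangle^{(t)}_{jh}\le 1$, hence $\blacktriangle^{(t)}_j\le \beta(j;q)$; in the second case the coarser bound $\blacktriangle^{(t)}_{jh}\le 2$ from Item~1 of Remark~\ref{r00} gives $\blacktriangle^{(t)}_j\le 2\beta(j;q)$.

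The only delicate points are the boundary cases $t\in\{0,1\}$, where $\varepsilon^{(t-2)}_j$ (and, at $t=0$, $\varepsilon^{(t-1)}_j$) involves negative indices; the negative-index convention makes $\blacktriangle^{(t-2)}_j=0$ there, so the telescoping goes through unchanged. The main obstacle is therefore purely bookkeeping: getting the identity $\varepsilon^{(t)}_{jh}=2-\blacktriangle^{(t)}_{jh}$ right and tracking the negative-index conventions. Once these are in place the inequalities are a direct consequence of Remark~\ref{r00}.
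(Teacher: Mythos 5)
Your reduction to the increments $\blacktriangle^{(t)}_{jh}$ via the identity $\varepsilon^{(t)}_{jh}=2-\blacktriangle^{(t)}_{jh}$ is correct, your handling of the negative-index conventions is fine, and your treatment of the first case ($t<\lceil s/2\rceil$) is sound --- indeed cleaner than the paper's, which instead counts the indices $h$ with $\varepsilon^{(t-1)}_{jh}=1$ and bounds how many coordinates can still drop. The lower bound via the monotonicity in Item~2 of Remark~\ref{r00} is also fine.

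The gap is in the second case. Because of the factor $2$ multiplying $\bigl(\varepsilon^{(t-2)}_{j}-\varepsilon^{(t-1)}_{j}\bigr)$, the inequality does \emph{not} telescope. Writing $D_t=\varepsilon^{(t-1)}_{j}-\varepsilon^{(t)}_{j}=\blacktriangle^{(t)}_{j}-\blacktriangle^{(t-1)}_{j}$, the claim for $t\ge\lceil s/2\rceil$ reads $D_t+2D_{t-1}\le 2\beta(j;q)$, i.e.\ $\blacktriangle^{(t)}_{j}+\blacktriangle^{(t-1)}_{j}-2\blacktriangle^{(t-2)}_{j}\le 2\beta(j;q)$, not $\blacktriangle^{(t)}_{j}-\blacktriangle^{(t-2)}_{j}\le 2\beta(j;q)$. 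Since $D_{t-1}\ge 0$, the telescoped inequality you prove is \emph{weaker} than the one required, so the implication goes the wrong way. Moreover the required inequality cannot be recovered from the crude bounds $\blacktriangle^{(t)}_{j}\le 2\beta(j;q)$ and $\blacktriangle^{(t-2)}_{j}\ge 0$ alone: for $s=3$ and $\beta(j;q)=1$, taking $y^{(2)}_{jh}=1$ and $z^{(1)}_{jh}=1$ (all other coordinates zero) gives $\blacktriangle^{(0)}_{jh}=0$, $\blacktriangle^{(1)}_{jh}=1$, $\blacktriangle^{(2)}_{jh}=2$, hence at $t=2$ one gets $D_2+2D_{1}=3>2=2\beta(j;q)$. (The same configuration breaks the step $\varpi^{(t-1)}_{j}\ge\varepsilon^{(t-2)}_{j}-\varepsilon^{(t-1)}_{j}$ that the paper's own Case~2 uses without justification, so part of the difficulty lies in the statement itself; but as written, your argument for the second case does not go through.)
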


\begin{Proof} Let $0\leq t \leq s-1$ and $\blacktriangle^{(t)}_{j}=\sum\limits_{h=1}^{\beta(j;q)}\blacktriangle^{(t)}_{jh}$.
We have $\varepsilon^{(t)}_{jh}=2-\blacktriangle_{jh}^{(t)}$. From
Remark \ref{r00}, two cases are considered. Let
$\varpi_j^{(t-1)}=|\{h\in\mathbb{N}\;:\;1\leq h\leq \beta(j;q)
\text{ and } \varepsilon^{(t-1)}_{jh}=1\}|.$
\begin{description}
    \item[Case 1: $t< \left\lceil\frac{s}{2}\right\rceil$.] We have $\varepsilon^{(t)}_{jh}\in\{1, 2\}$, and $\left\{%
\begin{array}{ll}
    \varepsilon^{(t-1)}_{jh}-\varepsilon^{(t)}_{jh}\in\{0, 1\}, & \hbox{if $\varepsilon^{(t-1)}_{jh}=2;$} \\
    \varepsilon^{(t)}_{jh}=\varepsilon^{(t-1)}, & \hbox{if $\varepsilon^{(t-1)}_{jh}=1$.}
\end{array}%
\right.$ Without loss of generality, we assume that
$\varepsilon^{(t-1)}_{jh}=1$, for all $h\in\{1,
\cdots,\varpi_j^{(t-1)}\}.$ Thus
\begin{eqnarray*}
  \varepsilon^{(t-1)}_{j}-\varepsilon^{(t)}_{j} &=& \left(\sum\limits_{h=1}^{\varpi_j^{(t-1)}}(\varepsilon^{(t-1)}_{jh}-\varepsilon^{(t)}_{jh})\right)+\left(\sum\limits_{h=\varpi_j^{(t-1)}+1}^{\beta(j;q)}(\varepsilon^{(t-1)}_{jh}-\varepsilon^{(t)}_{jh})\right); \\
    &=&
    0+\left(\sum\limits_{h=\varpi_j^{(t-1)}+1}^{\beta(j;q)}(\varepsilon^{(t-1)}_{jh}-\varepsilon^{(t)}_{jh})\right),
    \text{ since } 0\leq \varepsilon^{(t-1)}_{jh}-\varepsilon^{(t)}_{jh}\leq 1.
\end{eqnarray*}
Hence $0\leq \varepsilon^{(t-1)}_{j}-\varepsilon^{(t)}_{j}\leq
\beta(j;q)-\varpi_j^{(t-1)}\leq\beta(j;q)-(\varepsilon^{(t-2)}_{j}-\varepsilon^{(t-1)}_{j}).$
    \item[Case 2: $t\geq \left\lceil\frac{s}{2}\right\rceil$.] We have  $\left\{%
\begin{array}{ll}
    \varepsilon^{(t-1)}_{jh}-\varepsilon^{(t)}_{jh}\in\{0, 1, 2\}, & \hbox{if $\varepsilon^{(t-1)}_{jh}\in\{1, 2\};$} \\
    \varepsilon^{(t)}_{jh}=\varepsilon^{(t-1)}_{jh}, & \hbox{if $\varepsilon^{(t-1)}_{jh}=0$.}
\end{array}%
\right.$ Without loss of generality, we assume that
$\varepsilon^{(t-1)}_{jh}=0$, for all $h\in\{1, \cdots,
\varpi^{(t-1)}_{j}\}.$ Thus
\begin{eqnarray*}
  \varepsilon^{(t-1)}_{j}-\varepsilon^{(t)}_{j} &=& \left(\sum\limits_{h=1}^{\varpi^{(t-1)}_{j}}(\varepsilon^{(t-1)}_{jh}-\varepsilon^{(t)}_{jh})\right)+\left(\sum\limits_{h=\varpi^{(t-1)}_{j}+1}^{\beta(j;q)}(\varepsilon^{(t-1)}_{jh}-\varepsilon^{(t)}_{jh})\right); \\
    &=&
    0+\left(\sum\limits_{h=\varpi^{(t-1)}_{j}+1}^{\beta(j;q)}(\varepsilon^{(t-1)}_{jh}-\varepsilon^{(t)}_{jh})\right),
    \text{ since } 0\leq \varepsilon^{(t-1)}_{jh}-\varepsilon^{(t)}_{jh}\leq 2.
\end{eqnarray*}
Therefore $0\leq \varepsilon^{(t-1)}_{j}-\varepsilon^{(t)}_{j}\leq
2(\beta(j;q)-\varpi^{(t-1)}_{j})\leq
2\left(\beta(j;q)-(\varepsilon^{(t-2)}_{j}-\varepsilon^{(t-1)}_{j})\right).$
\end{description}
\end{Proof}

\begin{Theorem}\label{thm2} The parameters of the
Euclidean hull of a cyclic serial code over $R$ of length $n$ are
given by $(k_0, k_1,\cdots, k_{s-1})$ where $2k_0+ k_1+ \cdots+
k_{s-1}\leq n$, \begin{eqnarray*} k_t&=&
  \underset{\underset{i\in\mathcal{N}_q}{i\,|\,n}}{\sum}\texttt{ord}_i(q)\cdot \mu_i^{(t)}+\underset{\underset{j\not\in\mathcal{N}_q}{j\,|\,n}}{\sum}\texttt{ord}_j(q)\cdot \nu_j^{(t)},
\end{eqnarray*}
with $$\left\{%
\begin{array}{ll}
    \mu_i^{(t)}=0, & \hbox{if $t< \left\lceil\frac{s}{2}\right\rceil$;} \\
    0\leq \mu_i^{(t)}\leq \gamma(i; q), & \hbox{if $ t\geq \left\lceil\frac{s}{2}\right\rceil$,}
\end{array}
\right., \text{ and }
\left\{%
\begin{array}{ll}
    \varepsilon_j^{(t)}=0, & \hbox{if $n\in\mathcal{N}_q$;} \\
    0\leq \nu_j^{(t)}\leq  \beta(j;q)-\nu_j^{(t-1)}, & \hbox{if $n\not\in\mathcal{N}_q$, and $ t< \left\lceil\frac{s}{2}\right\rceil$;} \\
    0\leq \nu_j^{(t)}\leq 2(\beta(j;q)-\nu_j^{(t-1)}), & \hbox{if $n\not\in\mathcal{N}_q$, and $ t\geq \left\lceil\frac{s}{2}\right\rceil$.}
\end{array}%
\right. $$ Moreover $\nu_j^{(-1)}=0.$
\end{Theorem}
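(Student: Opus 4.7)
The plan is to combine Theorem\,\ref{thm1} (which presents $\mathcal{H}_0(C)$ as a cyclic serial code $\mathbb{C}(\underline{\mathbf{E}})$) with Corollary\,\ref{cor*} (the triple-sequence description of the hull) to compute the parameters $(k_0,\ldots,k_{s-1})$ explicitly, and then translate the bounds already recorded in Remark\,\ref{r00} and Lemma\,\ref{lem-co} into the claimed constraints on $\mu_i^{(t)}$ and $\nu_j^{(t)}$.

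First, since $\mathcal{H}_0(C)$ is cyclic and serial, Lemma\,\ref{lem0} furnishes a unique defining multiset $\underline{\mathbf{E}}=(E_0,E_1,\ldots,E_s)\in\Re_n(q,s)$ with $k_t=|E_t|$ for $0\leq t\leq s-1$. Writing $\Omega(E_t)=\partial(((u^{(t)}_{il})^\circ),((v^{(t)}_{jh})^\bullet),((w^{(t)}_{jh})^\bullet))$ and using that $\Omega(G_{il})$ has degree $\texttt{ord}_i(q)$ while $\Omega(\pm F_{jh})$ has degree $\texttt{ord}_j(q)$, the degree formula for $\Omega(E_t)$ reads
\[
k_t=\sum_{\substack{i\mid n\\ i\in\mathcal{N}_q}}\texttt{ord}_i(q)\sum_{l=1}^{\gamma(i;q)}u^{(t)}_{il}+\sum_{\substack{j\mid n\\ j\notin\mathcal{N}_q}}\texttt{ord}_j(q)\sum_{h=1}^{\beta(j;q)}\bigl(v^{(t)}_{jh}+w^{(t)}_{jh}\bigr).
\]
With the boundary values $u^{[-1]}_{il}=1$ and $\varepsilon^{(-1)}_{jh}=2$ (forced by $\Omega([|0;n-1|])=X^n-1$), the identities $u^{(t)}_{il}=u^{[t-1]}_{il}-u^{[t]}_{il}$ and $v^{(t)}_{jh}+w^{(t)}_{jh}=\varepsilon^{(t-1)}_{jh}-\varepsilon^{(t)}_{jh}$ let me set $\mu_i^{(t)}=\sum_l(u^{[t-1]}_{il}-u^{[t]}_{il})$ and $\nu_j^{(t)}=\varepsilon^{(t-1)}_j-\varepsilon^{(t)}_j$, giving the stated expression for $k_t$.

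Second, I would derive the bounds. Corollary\,\ref{cor*} gives $u^{[t]}_{il}=1-\vartriangle^{(t)}_{il}$, hence $u^{[t-1]}_{il}-u^{[t]}_{il}=\vartriangle^{(t)}_{il}-\vartriangle^{(t-1)}_{il}$. Remark\,\ref{r00}(1)-(2) then shows this difference lies in $\{0,1\}$, so $0\leq\mu_i^{(t)}\leq\gamma(i;q)$ unconditionally. For $t<\lceil s/2\rceil$ (i.e.\ $2t<s$), Remark\,\ref{r00}(3) forces $\vartriangle^{(t)}_{il}=0$ and hence $\mu_i^{(t)}=0$. The bounds on $\nu_j^{(t)}$ are exactly Lemma\,\ref{lem-co} with the notational substitution $\nu_j^{(t)}=\varepsilon^{(t-1)}_j-\varepsilon^{(t)}_j$; the convention $\nu_j^{(-1)}=0$ comes from $\varepsilon^{(-1)}_j=\varepsilon^{(-2)}_j=2\beta(j;q)$. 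Finally, when $n\in\mathcal{N}_q$, any divisor $j$ of $n$ also satisfies $j\in\mathcal{N}_q$ (descending $q^i\equiv-1\pmod n$ to $q^i\equiv-1\pmod j$), so $\beta(j;q)=0$ and the whole second sum collapses.

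To establish the inequality $2k_0+k_1+\cdots+k_{s-1}\leq n$, I would use the partition identity $|E_0|+|E_1|+\cdots+|E_s|=n$ to rewrite $k_0+\cdots+k_{s-1}=n-|E_s|$, so the claim reduces to $|E_0|\leq|E_s|$. Specializing the formulas in Corollary\,\ref{cor*} at the endpoints $t=0$ and $t=s-1$ (and peeling off the boundary factor) gives $u^{(0)}_{il}=\min\{x^{(0)}_{il},x^{(s)}_{il}\}$ and $u^{(s)}_{il}=\max\{x^{(0)}_{il},x^{(s)}_{il}\}$, with analogous $\min$/$\max$ identities for $(v^{(0)}_{jh}+w^{(0)}_{jh})$ versus $(v^{(s)}_{jh}+w^{(s)}_{jh})$; summing term-by-term gives $|E_0|\leq|E_s|$. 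The main obstacle is the combinatorial bookkeeping of the two nested index systems: keeping track of the ``ascending'' partial sums $\sum_{a=0}^t$ that appear in Corollary\,\ref{cor*} versus the ``descending'' cumulative sums $\sum_{a=t+1}^s$ that define $u^{[t]}_{il}$ and $\varepsilon^{(t)}_{jh}$, together with the convention $u^{[-1]}=1$, $\varepsilon^{(-1)}=2$ that makes everything line up at the boundaries.
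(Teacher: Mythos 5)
Your proposal is correct and follows essentially the same route as the paper: both extract $k_t$ from the triple-sequence of Corollary~\ref{cor*} as $k_t=\sum_{i}\texttt{ord}_i(q)\mu_i^{(t)}+\sum_{j}\texttt{ord}_j(q)\nu_j^{(t)}$ with $\mu_i^{(t)}=\sum_{l}(u_{il}^{[t-1]}-u_{il}^{[t]})$ and $\nu_j^{(t)}=\varepsilon_j^{(t-1)}-\varepsilon_j^{(t)}$, and then read the constraints off Remark~\ref{r00} and Lemma~\ref{lem-co}, with $\nu_j^{(-1)}=\varepsilon_j^{(-2)}-\varepsilon_j^{(-1)}=0$. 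The only place you do more than the paper is the inequality $2k_0+k_1+\cdots+k_{s-1}\leq n$, which the paper dispatches in one line by noting the hull is self-orthogonal, whereas your reduction to $|E_0|\leq|E_s|$ via the endpoint $\min$/$\max$ identities is a valid, more explicit verification of the same fact.
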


\begin{Proof} Let $(k_0, k_1,\cdots, k_{s-1})$ be the parameters of
$\mathcal{H}_0(C)$. When $\mathcal{H}_0(C)=C$, we have $2k_0+ k_1+
\cdots+ k_{s-1}\leq n$. Then for all $0\leq t\leq s-1$,
\begin{eqnarray*}
k_t  &=& \texttt{deg}\left(\partial\left(((u^{[t-1]}_{il})^{\circ}),((v^{[t-1]}_{jh})^{\bullet}),((w^{[t]}_{jh})^{\bullet})\right)\right)-\texttt{deg}\left(\partial\left(((u^{[t]}_{il})^{\circ}),((v^{[t]}_{jh})^{\bullet}),((w^{[t]}_{jh})^{\bullet})\right)\right); \\
  &=&
  \underset{\underset{i\in\mathcal{N}_q}{i\,|\,n}}{\sum}\texttt{ord}_i(q)\cdot\mu_i^{(t)}+\underset{\underset{j\not\in\mathcal{N}_q}{j\,|\,n}}{\sum}\texttt{ord}_j(q)\cdot(\varepsilon_j^{(t-1)}-\varepsilon_j^{(t)}),
  \qquad \hbox{ where }
  \mu_i^{(t)}=\sum\limits_{l=1}^{\gamma(i;
    q)}(u_{il}^{[t-1]}-u_{il}^{[t]}).
\end{eqnarray*}

\noindent Since
$\left\{%
\begin{array}{ll}
    \mu_i^{(t)}=0, & \hbox{if $t< \left\lceil\frac{s}{2}\right\rceil$;} \\
    0\leq \mu_i^{(t)}\leq \gamma(i; q), & \hbox{if $ t\geq \left\lceil\frac{s}{2}\right\rceil$,}
\end{array}
\right.$, it follows that $\left\{\begin{array}{ll}
     \mu_i^{(t)}=0, & \hbox{if $2t< s$;} \\
0\leq \mu_i^{(t)}\leq \gamma(i; q), & \hbox{if $ s\leq 2t.$}
\end{array}
\right.$ On the other hand, one notes that if $n\in\mathcal{N}_q$,
then all positive divisor of $n$ is in then $ \mathcal{N}_q.$  By
Lemma \ref{lem-co}, we obtain
$$\left\{%
\begin{array}{ll}
    \varepsilon_j^{(t)}=0, & \hbox{if $n\in\mathcal{N}_q$;} \\
    0\leq \nu_j^{(t)}\leq  \beta(j;q)-\nu_j^{(t-1)}, & \hbox{if $n\not\in\mathcal{N}_q$, and $t< \left\lceil\frac{s}{2}\right\rceil$;} \\
    0\leq \nu_j^{(t)}\leq 2(\beta(j;q)-\nu_j^{(t-1)}), & \hbox{if $n\not\in\mathcal{N}_q$, and $ t\geq \left\lceil\frac{s}{2}\right\rceil$.}
\end{array}%
\right.$$ where
$\nu_j^{(t)}=\varepsilon^{(t-1)}_{j}-\varepsilon^{(t)}_{j}.$
Obviously
$\nu_j^{(-1)}=\varepsilon^{(-2)}_{j}-\varepsilon^{(-1)}_{j}=0.$
\end{Proof}

The previous discussion leads to the following algorithm.
\begin{algorithm}
\DontPrintSemicolon \KwIn{ Length $n$, and a finite chain ring $R$
of parameters $(p, a, r, e, s)$ such that $\texttt{gcd}(p, n)=1$.}
\KwOut{All possible $s$-tuples $(k_0, k_1, \cdots, k_{s-1})$.}
\uIf{$n\in\mathcal{N}_q$}{
      \For{$0\leq t<s$}{\uIf{$t<\left\lceil\frac{s}{2}\right\rceil$}{ $k_t=0$.\; }
    \Else{ For each $i\,|\,n,$ compute $\texttt{ord}_i(q)$, and $\gamma(i;q)$,\;
      therefore all the possible values of $k_t,$ such that $$k_t=\underset{\underset{i\in\mathcal{N}_q}{i\,|\,n}}{\sum}\texttt{ord}_i(q)\cdot \mu_i^{(t)},$$ with $0\leq \mu_i^{(t)}\leq \gamma(i; q)$.
    }}
 \Return{The possible parameters $(0, \cdots, 0,
k_{\left\lceil\frac{s}{2}\right\rceil}, \cdots, k_{s-1})$ such
that $k_{\left\lceil\frac{s}{2}\right\rceil}+ \cdots+ k_{s-1}\leq
n$.}\;
    }
    \Else{For each $i\,|\,n,$ if $i\in\mathcal{N}_q$, then compute $\texttt{ord}_i(q)$, and $\gamma(i;q)$.\;
           For each $j\,|\,n,$ if $j\not\in\mathcal{N}_q$, then compute $\texttt{ord}_j(q)$, and $\beta(j;q)$.\;
     \For{$0\leq t<s$,}{\uIf{$t=0$}{
      compute $k_{0}=\underset{\underset{j\not\in\mathcal{N}_q}{j\,|\,n}}{\sum}\texttt{ord}_j(q)\cdot \nu_j^{(0)},$ where $0\leq \nu_j^{(0)}\leq   \beta(j;q)$\;
    }
    \Else{\While{$0<t<\left\lceil\frac{s}{2}\right\rceil$}{For a fixed  $\nu_j^{(t-1)}$ in $k_{t-1}$, compute $k_{t}=\underset{\underset{j\not\in\mathcal{N}_q}{j\,|\,n}}{\sum}\texttt{ord}_j(q)\cdot\nu_j^{(t)},$ where $0\leq \nu_j^{(t)}\leq
\beta(j;q)-\nu_j^{(t-1)},$\;\uIf{$2k_0+k_1+\cdots+k_t\leq n$}{
consider $k_t$,}\Else{reject $k_t$} }
    \While{$ t\geq \left\lceil\frac{s}{2}\right\rceil$}{For a fixed  $\nu_j^{(t-1)}$ in $k_{t-1}$, compute $k_{t}=\underset{\underset{i\in\mathcal{N}_q}{i\,|\,n}}{\sum}\texttt{ord}_i(q)\cdot \mu_i^{(t)}+\underset{\underset{j\not\in\mathcal{N}_q}{j\,|\,n}}{\sum}\texttt{ord}_j(q)\cdot
    \nu_j^{(t)},$ where $0\leq \mu_i^{(t)}\leq \gamma(i; q)$ and $0\leq \nu_j^{(t)}\leq  2\cdot(\beta(j;q)-\nu_j^{(t-1)})$.\;\uIf{$2k_0+k_1+\cdots+k_t\leq n$}{ consider $k_t$,}\Else{reject
$k_t$} }}
    }
 \Return{The possible parameters $(k_0, k_1, \cdots, k_{s-1})$.}\;
    }
 \caption{{Parameters of the Euclidean hull of a cyclic
serial code of length $n$ over $R$. }} \label{algo:duplicate2}
\end{algorithm}

\begin{Example} All possible parameters of Euclidean hulls of cyclic codes of
        length $n=11$ over $\mathbb{Z}_{27}$ are determined as follows.
        \begin{enumerate}
            \item The divisors of $11$ are $1$ and $11.$
\begin{description}
    \item[a)] We have $1\in \mathcal{N}_3,$ so $\texttt{ord}_1(3)=1$ and $\gamma(1;3)=1.$
    \item[b)] We have $11\not\in \mathcal{N}_3,$ so $\texttt{ord}_{11}(3)=5$ and $\beta(11;3)=1.$
\end{description}
            \item It follows that
\begin{eqnarray*}
  k_0 &=& 5\nu_{11}^{(0)}, \text{ where } 0\leq\nu_{11}^{(0)}\leq 1 \\
  k_1 &=& 5\nu_{11}^{(1)}, \text{ where } 0\leq\nu_{11}^{(1)}\leq 1-\nu_{11}^{(0)}\\
  k_2 &=& \mu_1^{(2)}+5\nu_{11}^{(2)} \text{ where } 0\leq \mu_1^{(2)}\leq 1 \text{ and } 0\leq\nu_{11}^{(2)}\leq  2(1-\nu_{11}^{(1)}).
\end{eqnarray*}
Hence, the all possible parameters $(k_0,k_1,k_2)$ of the
Euclidean hulls of cyclic codes of length $7$ over
$\mathbb{Z}_{8}$ are given in the following table
\begin{center}
\begin{tabular}{lll}
  \hline
      $k_0$ & $k_1$ & $k_2$ \\
    \hline
       0 & 0  & 0, 1, 5, 6, 10, 11 \\
         & 5  & 0, 1  \\
       5 & 0  & 0, 1   \\
  \hline
\end{tabular}
\end{center}
\end{enumerate}
\end{Example}

    \begin{Example} All the possible parameters $(k_0, k_1, k_2)$ of the Euclidean hull of a cyclic code of length $7$ over $\mathbb{Z}_8$
are determined as follows.
\begin{enumerate}
\item The divisors of $7$ are $1$ and $7.$
\begin{description}
    \item[a)] We have $1\in \mathcal{N}_2,$ so  $\texttt{ord}_1(2)=1$ and $\gamma(1;2)=1.$
    \item[b)] We have $7\not\in \mathcal{N}_2,$ so  $\texttt{ord}_7(2)=3$ and $\beta(7;2)=1.$
\end{description}
\item It follows that
\begin{eqnarray*}
  k_0 &=& 3\nu_7^{(0)}, \text{ where } 0\leq\nu_7^{(0)}\leq 1 \\
  k_1 &=& 3\nu_7^{(1)}, \text{ where } 0\leq\nu_7^{(1)}\leq 1-\nu_7^{(0)}\\
  k_2 &=& \mu_1^{(2)}+3\nu_7^{(2)} \text{ where } 0\leq \mu_1^{(2)}\leq 1 \text{ and } 0\leq\nu_7^{(2)}\leq  2(1-\nu_7^{(1)}).
\end{eqnarray*}
Hence, the all possible parameters $(k_0,k_1,k_2)$ of the
Euclidean hulls of cyclic codes of length $7$ over
$\mathbb{Z}_{8}$ are given in the following table
\begin{center}
\begin{tabular}{lll}
  \hline
      $k_0$ & $k_1$ & $k_2$ \\
    \hline
       0 & 0  & 0, 1, 3, 4, 6, 7 \\
         & 3  & 0, 1  \\
       3 & 0  & 0, 1   \\
  \hline
\end{tabular}
\end{center}
\end{enumerate}
    \end{Example}

    \begin{Example}The parameters of the Euclidean hulls of cyclic codes of length $21$ over $\mathbb{Z}_8$ are given by
        \begin{enumerate}
            \item The divisors of $21$ are $\{1,3,7,21\}.$
\begin{description}
    \item[(a)] $1; 3\in \mathcal{N}_2,$ we have  $\texttt{ord}_1(2)=1,  \texttt{ord}_3(2)=2$ and $\gamma(1;2)=\gamma(3;2)=1.$
    \item[(b)] $7; 21\not\in \mathcal{N}_2,$ we have  $\texttt{ord}_7(2)=3,\texttt{ord}_{21}(2)=6$ and $\beta(7;2)=\beta(21;2)=1.$
\end{description}
\item It follows that
\begin{eqnarray*}
  k_0 &=& 3\nu_{7}^{(0)}+6\nu_{21}^{(0)}, \text{ with } 0\leq\nu_j^{(0)}\leq 1, \text{ where } j\in\{7, 21\}. \\
  k_1 &=& 3\nu_7^{(1)}+6\nu_{21}^{(1)}, \text{ with } 0\leq\nu_j^{(1)}\leq 1-\nu_j^{(0)}, \text{ where } j\in\{7, 21\}.\\
  k_2 &=& \mu_1^{(2)}+2\mu_3^{(2)}+3\nu_7^{(2)}+6\nu_{21}^{(2)}, \text{ with } 0\leq \mu_i^{(2)}\leq 1 \text{ and } 0\leq\nu_j^{(2)}\leq  2(1-\nu_j^{(1)}), \text{ where } i\in\{1, 3\}, \text{ and } j\in\{7, 21\}.
\end{eqnarray*}
Hence, the all possible parameters $(k_0,k_1,k_2)$ of the
Euclidean hulls of cyclic codes of length $21$ over
$\mathbb{Z}_{8}$ are given in the following table
\begin{center}
\begin{tabular}{lll}
  \hline
      $k_0$ & $k_1$ & $k_2$ \\
    \hline
       0 & 0  & 0, 1, 2, 3, $\cdots,$ 21 \\
         & 3  & 0, 1, 2, 3, 6, 7, 8, 9, 12, 13, 14, 15  \\
         & 6  & 0, 1, 2, 3, 4, 5, 6, 7, 8, 9   \\
         & 9  & 0, 1, 2, 3   \\
      3  & 0  & 0, 1, 3, $\cdots$, 15  \\
         & 6  & 0, 1, 2, 3, 4, 5, 6, 7, 8, 9  \\
      6  & 0  & 0, 1, 3, $\cdots$, 9    \\
         & 3  & 0, 1, 2, 3, 6, 7, 8, 9, 12, 13, 14, 15  \\
      9  & 0  & 0, 1, 2, 3   \\
  \hline
\end{tabular}
\end{center}
\end{enumerate}
    \end{Example}

\begin{Corollary}\label{c*} The set $\aleph(n,s,q)$ of $q$-dimensions of the Euclidean hull of a cyclic
serial code of length $n$ over $R,$ is given by
$$\aleph(n,s,q)=\left\{\underset{\underset{i\in\mathcal{N}_q}{i\,|\,n}}{\sum}\texttt{ord}_i(q)\left(\sum\limits_{l=1}^{\gamma(i;q)}
\vartriangle_{il}\right)+\underset{\underset{j\not\in\mathcal{N}_q}{j\,|\,n}}{\sum}\texttt{ord}_j(q)\left(\sum\limits_{h=1}^{\beta(j;q)}\blacktriangle_{jh}\right)
\mid    %
\begin{array}{ll}
   0\leq \vartriangle_{il}\leq s-\left\lceil\frac{s}{2}\right\rceil  \\
0\leq \blacktriangle_{jh}\leq s
\end{array}   \right\}.$$
\end{Corollary}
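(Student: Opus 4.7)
The plan is to derive the $q$-dimension of the Euclidean hull by summing the parameters given in Theorem~\ref{thm2}, using the standard identity $\texttt{dim}_q(\mathcal{H}_0(C))=\sum_{t=0}^{s-1}(s-t)k_t$ recalled in Section~\ref{sec:prel}, and then to unpack the resulting double sum to recognise the quantities $\vartriangle_{il}$ and $\blacktriangle_{jh}$ defined in~(\ref{n2}). Concretely, after substituting $k_t=\sum_{i\,|\,n,\,i\in\mathcal{N}_q}\texttt{ord}_i(q)\mu_i^{(t)}+\sum_{j\,|\,n,\,j\notin\mathcal{N}_q}\texttt{ord}_j(q)\nu_j^{(t)}$ from Theorem~\ref{thm2} and interchanging the order of summation, the two inner weighted telescoping sums $\sum_{t=0}^{s-1}(s-t)\mu_i^{(t)}$ and $\sum_{t=0}^{s-1}(s-t)\nu_j^{(t)}$ coincide, via $\mu_i^{(t)}=\sum_l(u_{il}^{[t-1]}-u_{il}^{[t]})$ and $\nu_j^{(t)}=\sum_h(\varepsilon_{jh}^{(t-1)}-\varepsilon_{jh}^{(t)})$, with $\sum_l \vartriangle_{il}$ and $\sum_h \blacktriangle_{jh}$ respectively, yielding the claimed dimension formula.

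Next I would verify the range constraints. For $\vartriangle_{il}$, the decomposition $\vartriangle_{il}=\sum_{t=0}^{s-1}\vartriangle_{il}^{(t)}$ combined with Remark~\ref{r00}(1) and~(3) gives $\vartriangle_{il}^{(t)}\in\{0,1\}$ with $\vartriangle_{il}^{(t)}=0$ whenever $2t<s$, so $0\leq\vartriangle_{il}\leq s-\lceil s/2\rceil$ follows immediately. The bound $\blacktriangle_{jh}\leq s$ is the main obstacle, because the coarse term-wise estimate coming only from Remark~\ref{r00} yields at best $\blacktriangle_{jh}\leq\lceil s/2\rceil+2\lfloor s/2\rfloor$, which strictly exceeds $s$ as soon as $s\geq 2$; additional structural interaction between the $y_{jh}^{(a)}$'s and $z_{jh}^{(a)}$'s has to be exploited.

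To obtain the sharp upper bound I would proceed by Abel summation. Using $\varepsilon_{jh}^{(-1)}=2$, a short computation rewrites $\blacktriangle_{jh}=2s-\sum_{t=0}^{s-1}\varepsilon_{jh}^{(t)}$, reducing the claim to $\sum_{t=0}^{s-1}\varepsilon_{jh}^{(t)}\geq s$. The constraints $\sum_{a=0}^{s}y_{jh}^{(a)}=\sum_{a=0}^{s}z_{jh}^{(a)}=1$ single out unique indices $a_y,a_z\in[|0;s|]$ with $y_{jh}^{(a_y)}=z_{jh}^{(a_z)}=1$, and Corollary~\ref{cor*} then expresses $v_{jh}^{[t]}$ and $w_{jh}^{[t]}$ as max-of-indicator functions of $a_y,a_z$. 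A short case analysis establishes the symmetry inequality $\varepsilon_{jh}^{(t)}+\varepsilon_{jh}^{(s-1-t)}\geq 2$ for every $t\in\{0,\ldots,s-1\}$ (if $v_{jh}^{[t]}=0$ then necessarily $a_y\leq t$, which forces $w_{jh}^{[s-1-t]}=1$, and symmetrically for $w_{jh}^{[t]}=0$). Summing the pairing over $t$ gives $2\sum_t\varepsilon_{jh}^{(t)}\geq 2s$, which closes the argument and completes the proof of the corollary.
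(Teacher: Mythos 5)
Your proposal is correct, and for the dimension formula itself it follows the same route as the paper: substitute the parameters from Theorem~\ref{thm2} into $\texttt{dim}_q(\mathcal{H}_0(C))=\sum_{t=0}^{s-1}(s-t)k_t$, interchange summations, and telescope to recover $\vartriangle_{il}$ and $\blacktriangle_{jh}$ as defined in~(\ref{n2}); the bound $0\leq\vartriangle_{il}\leq s-\lceil s/2\rceil$ is also handled exactly as in the paper, via Remark~\ref{r00}. Where you genuinely diverge is the key bound $\blacktriangle_{jh}\leq s$. The paper splits $\sum_{t}\blacktriangle^{(t)}_{jh}$ at $t=\lceil s/2\rceil$ and bounds the two pieces by a maximum over an auxiliary parameter $b$; as you correctly observe, the purely term-wise information in Remark~\ref{r00} only gives $\lceil s/2\rceil+2\lfloor s/2\rfloor$, and indeed the paper's displayed maximum, read literally, is attained at $b=0$ and evaluates to $2s-\lceil s/2\rceil$ rather than $s$, so the paper's justification of this step is at best elliptical. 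Your alternative --- rewriting $\blacktriangle_{jh}=2s-\sum_{t=0}^{s-1}\varepsilon^{(t)}_{jh}$ via $\varepsilon^{(t)}_{jh}=2-\blacktriangle^{(t)}_{jh}$ and proving the pairing inequality $\varepsilon^{(t)}_{jh}+\varepsilon^{(s-1-t)}_{jh}\geq 2$ from the implications $v^{[t]}_{jh}=0\Rightarrow a_y\leq t\Rightarrow w^{[s-1-t]}_{jh}=1$ (and symmetrically) --- is sound; I checked it against the explicit formulas, where $v^{[t]}_{jh}=1-[a_y\leq t][a_z\geq s-t]$ and $w^{[t]}_{jh}=1-[a_z\leq t][a_y\geq s-t]$, and it delivers the sharp bound cleanly. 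What your approach buys is a rigorous proof of precisely the point where the paper's argument is weakest. One caveat that applies equally to both arguments: they establish only one inclusion of the claimed set equality; the reverse inclusion (that every admissible choice of the $\vartriangle_{il}$ and $\blacktriangle_{jh}$ is realized) follows from the explicit values $\vartriangle(\textbf{x})=\min(a_0,s-a_0)$ and $\blacktriangle(\textbf{y},\textbf{z})=\min(a_y+a_z,\,2s-a_y-a_z)$, but neither you nor the paper records it.
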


\begin{Proof} Let $C$ be a cyclic serial code of length $n$ over
$R$ with triple-sequence
$$(\textbf{x}^{\circ},\textbf{y}^{\bullet},\textbf{z}^{\bullet})=\left(
(((x^{(a)}_{il})_{0\leq a< s})^{\circ}), (((y^{(a)}_{jh})_{0\leq
a< s})^{\bullet}), (((z^{(a)}_{jh})_{0\leq a<
s})^{\bullet})\right)$$ in $\mathcal{E}_n(q,s).$ From Theorem
\ref{thm2}, the parameters $(k_0, k_1, \cdots, k_{s-1})$ of
$\mathcal{H}_0(C)$ where for all $0\leq t\leq s-1$,
$$
k_t
=\underset{\underset{i\in\mathcal{N}_q}{i\,|\,n}}{\sum}\texttt{ord}_i(q)\cdot\left(\sum\limits_{i=1}^{\gamma(i;q)}(u_{il}^{[t-1]}-u_{il}^{[t]})\right)+\underset{\underset{j\not\in\mathcal{N}_q}{j\,|\,n}}{\sum}\texttt{ord}_j(q)\cdot\left(\sum\limits_{h=1}^{\beta(j;q)}(\varepsilon_{jh}^{(t-1)}-\varepsilon_{jh}^{(t)})\right).
$$
Thus the $q$-dimension of $\mathcal{H}_0(C)$ is
$\sum\limits_{t=0}^{s-1}(s-t)k_t.$ It follows that
$$\texttt{dim}_q(C)=\underset{\underset{i\in\mathcal{N}_q}{i\,|\,n}}{\sum}\texttt{ord}_i(q)\cdot\left(\sum\limits_{i=1}^{\gamma(i;q)}\vartriangle_{il}\right)+\underset{\underset{j\not\in\mathcal{N}_q}{j\,|\,n}}{\sum}\texttt{ord}_j(q)\cdot\left(\sum\limits_{h=1}^{\beta(j;q)}\blacktriangle_{jh}\right).$$
From Remark\,\ref{r00},
\begin{equation*}
  \vartriangle_{il} = \sum\limits_{t=0}^{s-1}\vartriangle^{(t)}_{il}=\sum\limits_{t=\left\lceil\frac{s}{2}\right\rceil}^{s-1}\vartriangle^{(t)}_{il}\; \leq  s-\left\lceil\frac{s}{2}\right\rceil,
\end{equation*}
and if $j\in\mathcal{N}_q$ then $\blacktriangle_{j}=0.$ Otherwise,
\begin{equation*}
  \blacktriangle_{jh}=\sum\limits_{t=0}^{s-1}\blacktriangle^{(t)}_{jh} \, =  \sum\limits_{t=0}^{\left\lceil\frac{s}{2}\right\rceil-1}\blacktriangle^{(t)}_{jh}+\sum\limits_{t=\left\lceil\frac{s}{2}\right\rceil}^{s-1}\blacktriangle^{(t)}_{jh}\,\leq  \underset{0\leq b \leq s-\left\lceil\frac{s}{2}\right\rceil}{\max }\left\{\left(\left\lceil\frac{s}{2}\right\rceil+b\right)+2\left(s-\left\lceil\frac{s}{2}\right\rceil-b\right )\right\}=s.
\end{equation*}
\end{Proof}

\section{The average $q$-dimension}

We will denote by $\mathcal{C}(n; R)$  the set of all cyclic
serial codes over length $n$ over $R$. The average $q$-dimension
of the Euclidean hull of cyclic of length $n$ over $R$ is
$$\texttt{E}_R(n)=\sum\limits_{C\in\mathcal{C}(n; R)}\frac{\texttt{dim}_q(\mathcal{H}_0(C))}{|\mathcal{C}(n; R)|}.$$
In this section, an explicit formula for $\texttt{E}_R(n)$ and
bounds are given in terms of $\texttt{B}_{n,q}$ where
$$\texttt{B}_{n,q}=\texttt{deg}\underset{\underset{i\in\mathcal{N}_q}{i\,|\,n}}{\prod}\left(\prod\limits_{l=1}^{\gamma(i;q)}\Omega(G_{il})\right)=\underset{\underset{i\in\mathcal{N}_q}{i\,|\,n}}{\sum}\phi(i).$$

Consider the maps
\begin{align}\label{var1}
\begin{array}{cccc}
  \vartriangle : & \mathcal{E}_s & \rightarrow & \mathbb{N} \\
    & (x^{(0)}, \cdots, x^{(s-1)}) & \mapsto &
    \sum\limits_{t=0}^{s-1}\min\left\{\sum\limits_{a=0}^{t}x^{(a)};1-\sum\limits_{a=0}^{s-t-1}x^{(a)}\right\},
\end{array}
\end{align}
and $\blacktriangle:
\mathcal{E}_s\times\mathcal{E}_s\rightarrow\mathbb{N}$ defined as
\begin{align}\label{var2}\blacktriangle(\textbf{y},\textbf{z})=\sum\limits_{t=0}^{s-1}\left(\min\left\{\sum\limits_{a=0}^{t}y^{(a)};1-\sum\limits_{a=0}^{s-t-1}z^{(a)}\right\}+
\min\left\{\sum\limits_{a=0}^{t}z^{(a)};1-\sum\limits_{a=0}^{s-t-1}y^{(a)}\right\}\right),\end{align}
where $(\textbf{y},\textbf{z})=((y^{(0)}, \cdots, y^{(s-1)}),
(z^{(0)}, \cdots, z^{(s-1)})).$

Let $\tau\in\aleph(n,s,q)$ an element in the set defined in
Corollary \ref{c*}. Then $\tau$ is the $q$-dimension of the
Euclidean hull of a cyclic serial code of length $n$ over $R$. The
following result gives the number of cyclic serial codes of length
$n$ over $R$ whose Euclidean hulls have $q$-dimension $\tau$. 

\begin{Proposition} Let $n$ be a positive integer such that $\texttt{gcd}(n,
p)=1$ and $\tau\in\aleph(n,s,q)$ where $\aleph(n,s,q)$ is
described in Corollary \ref{c*}. The number $\wp(n, \tau; R)$ of
cyclic serial codes of length $n$ over $R$ whose Euclidean hulls
have $q$-dimension $\tau$ is given by:
$$
\wp(n, \tau; R)=\sum\limits_{(((\vartriangle_{il})^\circ),
((\blacktriangle_{jh})^\bullet))\in
\Upsilon(\tau)}\left(\underset{\underset{i\in\mathcal{N}_q}{i\,|\,n}}{\prod}\prod\limits_{l=1}^{\gamma(i;q)}\psi_s(\vartriangle_{il})\right)\left(\underset{\underset{j\not\in\mathcal{N}_q}{j\,|\,n}}{\prod}\prod\limits_{h=1}^{\beta(j;q)}\rho_s(\blacktriangle_{jh})\right),
$$
where
$$\psi_s(\vartriangle_{il})=|\{\textbf{x}\in\mathcal{E}_s\;:\;\vartriangle(\textbf{x})=\vartriangle_{il}\}|,\;
\rho_s(\blacktriangle_{jh})=|\{(\textbf{y},\textbf{z})\in\mathcal{E}_s\times\mathcal{E}_s\;:\;\blacktriangle(\textbf{y},\textbf{z})=\blacktriangle_{jh}\}|,$$
and
$$\Upsilon(\tau)=\left\{(((\vartriangle_{il})^\circ),
((\blacktriangle_{jh})^\bullet))\,:\,
\underset{\underset{i\in\mathcal{N}_q}{i\,|\,n}}{\sum}\texttt{ord}_i(q)\left(\sum\limits_{l=1}^{\gamma(i;q)}
\vartriangle_{il}\right)+\underset{\underset{j\not\in\mathcal{N}_q}{j\,|\,n}}{\sum}\texttt{ord}_j(q)\left(\sum\limits_{h=1}^{\beta(j;q)}\blacktriangle_{jh}\right)
=\tau\right\}.$$
\end{Proposition}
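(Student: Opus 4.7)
The plan is to leverage the product structure of $\mathcal{E}_n(q,s)$ together with Corollary~\ref{c*} to decompose the counting into independent contributions coming from each $q$-cyclotomic coset (or pair of asymmetric cosets).

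First I would invoke Lemma~\ref{lcs}, which gives a bijection between $\mathcal{C}(n;R)$ and
\[\mathcal{E}_n(q,s)=\mathcal{I}_n(q,s)\times\bigl(\mathcal{J}_n(q,s)\bigr)^2,\]
so that each cyclic serial code $C$ corresponds uniquely to a triple-sequence $(\textbf{x}^\circ,\textbf{y}^\bullet,\textbf{z}^\bullet)$ whose entries are indexed by the symmetric cosets $G_{il}$ and by the asymmetric pairs $(F_{jh},-F_{jh})$. Next, by Corollary~\ref{c*} (together with the notation (\ref{n2})),
\[\texttt{dim}_q(\mathcal{H}_0(C))=\underset{\underset{i\in\mathcal{N}_q}{i\,|\,n}}{\sum}\texttt{ord}_i(q)\sum_{l=1}^{\gamma(i;q)}\vartriangle_{il}+\underset{\underset{j\not\in\mathcal{N}_q}{j\,|\,n}}{\sum}\texttt{ord}_j(q)\sum_{h=1}^{\beta(j;q)}\blacktriangle_{jh},\]
where each $\vartriangle_{il}$ is a function of only the block $(x^{(a)}_{il})_{0\leq a<s}\in\mathcal{E}_s$ through the map $\vartriangle$ defined in~(\ref{var1}), and each $\blacktriangle_{jh}$ is a function of only the block-pair $\bigl((y^{(a)}_{jh}),(z^{(a)}_{jh})\bigr)\in\mathcal{E}_s\times\mathcal{E}_s$ through the map $\blacktriangle$ defined in~(\ref{var2}).

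The heart of the argument is then a straightforward double counting. Partition $\mathcal{C}(n;R)$ according to the value of the tuple $(((\vartriangle_{il})^\circ),((\blacktriangle_{jh})^\bullet))$; by the displayed formula, the condition $\texttt{dim}_q(\mathcal{H}_0(C))=\tau$ is equivalent to this tuple lying in $\Upsilon(\tau)$. Since the blocks indexed by distinct pairs $(i,l)$ or $(j,h)$ vary independently inside the Cartesian product $\mathcal{E}_n(q,s)$, the number of triple-sequences realising a fixed admissible tuple factorises as
\[\left(\underset{\underset{i\in\mathcal{N}_q}{i\,|\,n}}{\prod}\prod_{l=1}^{\gamma(i;q)}\psi_s(\vartriangle_{il})\right)\left(\underset{\underset{j\not\in\mathcal{N}_q}{j\,|\,n}}{\prod}\prod_{h=1}^{\beta(j;q)}\rho_s(\blacktriangle_{jh})\right),\]
by the very definition of $\psi_s$ and $\rho_s$ as the fibre cardinalities of $\vartriangle$ and $\blacktriangle$. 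Summing this product over all tuples in $\Upsilon(\tau)$ yields the stated expression for $\wp(n,\tau;R)$.

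The only delicate point to verify is that one need not separately impose the admissibility constraints appearing in Theorem~\ref{thm2} and Corollary~\ref{c*} (for instance $\vartriangle_{il}\leq s-\lceil s/2\rceil$); those constraints are already built into the weight enumerators $\psi_s$ and $\rho_s$, which vanish outside the achievable range, so no spurious tuples contribute. Apart from this, the argument is essentially a bookkeeping exercise resting on the block-wise nature of $\vartriangle$ and $\blacktriangle$ and the product decomposition of $\mathcal{E}_n(q,s)$.
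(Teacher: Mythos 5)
Your argument is correct: the paper states this proposition without any proof, and your block-wise counting via the bijection of Lemma~\ref{lcs}, the additive decomposition of $\texttt{dim}_q(\mathcal{H}_0(C))$ from Corollary~\ref{c*}, and the product structure of $\mathcal{E}_n(q,s)$ is exactly the argument the surrounding text presupposes. Your remark that $\psi_s$ and $\rho_s$ vanish outside the achievable ranges, so no admissibility constraints need to be imposed on $\Upsilon(\tau)$, correctly handles the only delicate point.
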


The above expresion of
$\texttt{E}_R(n)=\sum\limits_{\tau\in\aleph(n,s,q)}\frac{\tau\cdot\wp(n,
\tau; R)}{|C\in\mathcal{C}(n; R)|},$  might lead to a tedious and
lengthy computation. The remainder of the section will show an
alternative simpler expresion for the expected value.

\begin{Lemma}\label{l*} Consider the  random variable
    $\vartriangle$ defined in (\ref{var1})  with uniform probability. The expected value $\texttt{E}(\vartriangle)$  is given by:
$$\texttt{E}(\vartriangle)=\frac{\left\lceil\frac{s}{2}\right\rceil\left(s-\left\lceil\frac{s}{2}\right\rceil\right)}{s+1}=\left\{%
\begin{array}{ll}
    \frac{s^2}{4(s+1)}, & \hbox{if $s$ even;} \\
    \frac{s-1}{4}, & \hbox{if $s$ odd.}
\end{array}%
\right.    $$
\end{Lemma}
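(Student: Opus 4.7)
The plan is to compute the expected value directly by enumerating the very small set $\mathcal{E}_s$. By its definition (\ref{imn2}), $\mathcal{E}_s$ consists exactly of the zero vector $\mathbf{0}$ together with the $s$ standard basis vectors $e_b=(0,\dots,1,\dots,0)$ (with the single $1$ in position $b$, for $0\le b\le s-1$). Hence $|\mathcal{E}_s|=s+1$, which already explains the denominator in the claimed formula.

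Next, I would evaluate $\vartriangle$ on each element. Clearly $\vartriangle(\mathbf{0})=0$, because every prefix sum of $\mathbf{0}$ vanishes, so each $\min\{0,1\}=0$. For $\mathbf{x}=e_b$, observe that
\[
\sum_{a=0}^{t}x^{(a)}=\mathbf{1}_{\{t\ge b\}}, \qquad 1-\sum_{a=0}^{s-t-1}x^{(a)}=\mathbf{1}_{\{t> s-1-b\}}.
\]
Therefore the $t$-th summand in $\vartriangle(e_b)$ equals $1$ if and only if both indicators are $1$, i.e.\ $t\ge\max\{b,s-b\}$, and equals $0$ otherwise. Counting such $t\in\{0,\dots,s-1\}$ gives
\[
\vartriangle(e_b)=s-\max\{b,s-b\}=\min\{b,s-b\}.
\]

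Combining the two cases and using the uniform distribution on $\mathcal{E}_s$, one obtains
\[
\texttt{E}(\vartriangle)=\frac{1}{s+1}\sum_{b=0}^{s-1}\min\{b,s-b\}.
\]
The final step is to close the sum. Writing $m=\lceil s/2\rceil$, the values $\min\{b,s-b\}$ for $b=0,1,\dots,s-1$ are $0,1,\dots,m-1,\,s-m,s-m-1,\dots,1$ (with a repeated middle term $m$ when $s$ is even). A short casework on the parity of $s$ gives
\[
\sum_{b=0}^{s-1}\min\{b,s-b\}=\lceil s/2\rceil\,(s-\lceil s/2\rceil),
\]
which yields $s^2/4$ when $s$ is even and $(s^2-1)/4$ when $s$ is odd. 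Dividing by $s+1$ produces the two cases in the statement.

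The only mildly delicate point is the parity split at the end; everything else is a direct evaluation from the definitions of $\mathcal{E}_s$ and $\vartriangle$. I do not foresee any real obstacle: since $\mathcal{E}_s$ has only $s+1$ elements and $\vartriangle$ reduces to $\min\{b,s-b\}$ on them, the computation is essentially a one-line counting exercise.
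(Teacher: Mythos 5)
Your proposal is correct and follows essentially the same route as the paper: both arguments reduce to a direct count over the $(s+1)$-element set $\mathcal{E}_s$, the only difference being that you evaluate $\vartriangle(e_b)=\min\{b,s-b\}$ element by element and then sum over $b$, whereas the paper sums over $t$ first, counting $2t-s+1$ vectors with $\vartriangle^{(t)}=1$ for each $t\ge\left\lceil\frac{s}{2}\right\rceil$ --- the same double sum in transposed order. One trivial slip: the repeated middle value in your list $0,1,\dots,m-1,s-m,\dots,1$ occurs when $s$ is odd, not even, but this does not affect the identity $\sum_{b=0}^{s-1}\min\{b,s-b\}=\left\lceil\frac{s}{2}\right\rceil\left(s-\left\lceil\frac{s}{2}\right\rceil\right)$ or the final result.
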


\begin{Proof} Let $t\in\{0;1;\cdots; s-1\}$ and $\textbf{x}=(x^{(0)}, \cdots,
x^{(s-1)})\in\mathcal{E}_s$. Set
$$\vartriangle^{(t)}_{(\textbf{x})}=\min\left\{
\sum\limits_{a=0}^{t}x^{(a)};
1-\sum\limits_{a=0}^{s-t-1}x^{(a)}\right\}\in\{0; 1\}.$$ Then
$\vartriangle^{(t)}_{(\textbf{x})} =1$ if and only if  $2t\geq s
\text{ and }, \sum\limits_{a=s-t}^{t}x^{(a)}_{il}=1$. Thus for all
$\eta\in\mathbb{N},$ we have
$|\{\textbf{x}\in\mathcal{E}_s\;:\;\vartriangle^{(t)}_{(\textbf{x})}
=\eta\}|=\left\{%
\begin{array}{ll}
    2t-s+1, & \hbox{if $t\geq\left\lceil\frac{s}{2}\right\rceil$ and $\eta=1$;} \\
    0, & \hbox{otherwise.}
\end{array}%
\right.    $

Therefore,
\begin{eqnarray*}
  |\{\textbf{x}\in\mathcal{E}_s\;:\;\vartriangle(\textbf{x})=\eta\}| &=& \left\{%
\begin{array}{ll}
   \sum\limits_{t=\left\lceil\frac{s}{2}\right\rceil}^{s-1}(2t-s+1), & \hbox{ if $\eta=s-\left\lceil\frac{s}{2}\right\rceil$; } \\
    0, & \hbox{ otherwise. }
\end{array}%
\right. \\
    &=& \left\{%
\begin{array}{ll}
   \left\lceil\frac{s}{2}\right\rceil\left(s-\left\lceil\frac{s}{2}\right\rceil\right), & \hbox{ if $\eta=s-\left\lceil\frac{s}{2}\right\rceil$; } \\
    0, & \hbox{ otherwise; }
\end{array}%
\right.
\end{eqnarray*}

Since $|\mathcal{E}_s|=s+1$ and
$\texttt{P}(\{\textbf{x}\in\mathcal{E}_s\;:\;\vartriangle(\textbf{x})=\eta\})=\frac{|\{\vartriangle(\textbf{x})=\eta\}|}{|\mathcal{E}_s|},$
it follows that, $$ \texttt{E}(\vartriangle) =
\sum\limits_{\eta\in\mathbb{N}}\eta\texttt{P}(\{\textbf{x}\in\mathcal{E}_s\;:\;\vartriangle(\textbf{x})=\eta\})=
\frac{\left\lceil\frac{s}{2}\right\rceil\left(s-\left\lceil\frac{s}{2}\right\rceil\right)}{s+1}.$$
\end{Proof}

\begin{Lemma}\label{l**} Consider the random variable
    $\blacktriangle: \mathcal{E}_s\times\mathcal{E}_s\rightarrow\mathbb{N}$ defined in
    (\ref{var2}) with uniform distribution. The expected value $\texttt{E}(\blacktriangle)$ is
 is given by
$$\texttt{E}(\blacktriangle)= \frac{s(2s+1)}{3(s+1)}.$$
\end{Lemma}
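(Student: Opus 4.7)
The plan is to decompose $\blacktriangle$ into a sum of indicator random variables and exploit the independence and symmetry of $\mathbf{y}$ and $\mathbf{z}$.

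First, I would observe that for uniformly distributed $\mathbf{y}=(y^{(0)},\ldots,y^{(s-1)})\in\mathcal{E}_s$, each partial sum $\sum_{a=0}^{t}y^{(a)}$ takes only the values $0$ or $1$. Since $\mathcal{E}_s$ consists of the zero vector together with the $s$ standard unit vectors $e_0,\ldots,e_{s-1}$, we have $\sum_{a=0}^{t}y^{(a)}=1$ precisely when $\mathbf{y}\in\{e_0,\ldots,e_t\}$, giving
\[
\mathtt{P}\!\left(\sum\nolimits_{a=0}^{t}y^{(a)}=1\right)=\frac{t+1}{s+1},\qquad \mathtt{P}\!\left(\sum\nolimits_{a=0}^{s-t-1}z^{(a)}=0\right)=1-\frac{s-t}{s+1}=\frac{t+1}{s+1}.
\]

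Next, since each summand in $\blacktriangle(\mathbf{y},\mathbf{z})$ is the minimum of two $\{0,1\}$-valued quantities, it is itself $\{0,1\}$-valued, and equals $1$ exactly when both its arguments equal $1$. Using independence of $\mathbf{y}$ and $\mathbf{z}$ (since the product distribution on $\mathcal{E}_s\times\mathcal{E}_s$ is uniform), I would compute for each $0\leq t\leq s-1$:
\[
\mathtt{E}\!\left(\min\!\left\{\sum\nolimits_{a=0}^{t}y^{(a)};\,1-\sum\nolimits_{a=0}^{s-t-1}z^{(a)}\right\}\right)=\frac{(t+1)}{s+1}\cdot\frac{(t+1)}{s+1}=\frac{(t+1)^2}{(s+1)^2},
\]
and by the obvious $\mathbf{y}\leftrightarrow\mathbf{z}$ symmetry the second min in each summand contributes the same amount.

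Finally, I would sum over $t$ by linearity of expectation:
\[
\mathtt{E}(\blacktriangle)=2\sum_{t=0}^{s-1}\frac{(t+1)^2}{(s+1)^2}=\frac{2}{(s+1)^2}\sum_{k=1}^{s}k^2=\frac{2}{(s+1)^2}\cdot\frac{s(s+1)(2s+1)}{6}=\frac{s(2s+1)}{3(s+1)},
\]
which is the desired formula. There is really no serious obstacle here once the indicator structure is recognized; the only mildly delicate point, and the one I would write out carefully, is the justification that $\sum_{a=0}^{t}y^{(a)}\in\{0,1\}$ together with the counting $|\{e_0,\ldots,e_t\}|=t+1$ out of $|\mathcal{E}_s|=s+1$, since this is what reduces the apparently complicated minimum to a simple product of independent Bernoulli probabilities.
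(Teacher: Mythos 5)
Your proof is correct, and it takes a genuinely different route from the paper. You decompose $\blacktriangle$ into $2s$ summands, note that each $\min$ of two $\{0,1\}$-valued quantities is the product of the corresponding indicators, and use the independence of $\mathbf{y}$ and $\mathbf{z}$ under the product-uniform distribution together with the count $\mathtt{P}\bigl(\sum_{a=0}^{t}y^{(a)}=1\bigr)=\mathtt{P}\bigl(\sum_{a=0}^{s-t-1}z^{(a)}=0\bigr)=\frac{t+1}{s+1}$ to get each summand's expectation as $\frac{(t+1)^2}{(s+1)^2}$; linearity of expectation then finishes the computation. The paper instead determines the full distribution of $\blacktriangle$, asserting that $|\{(\mathbf{y},\mathbf{z}):\blacktriangle(\mathbf{y},\mathbf{z})=\eta\}|$ equals $2(\eta+1)$ for $0\leq\eta\leq s-1$ and $s+1$ for $\eta=s$, and then sums $\eta\cdot\mathtt{P}(\blacktriangle=\eta)$. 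The paper's route yields strictly more information (the distribution itself, which could be reused elsewhere), but that counting claim is stated without justification and is the only nontrivial step; your route bypasses it entirely, every step being immediately checkable, at the cost of not exhibiting the distribution. The two computations agree, so your argument also serves as an independent verification of the paper's counting formula. One small presentational point: you should state explicitly that the marginal of each coordinate under the uniform distribution on $\mathcal{E}_s\times\mathcal{E}_s$ is uniform on $\mathcal{E}_s$ and that the two coordinates are independent, since that is precisely what licenses replacing the expectation of the product of indicators by the product of the two probabilities.
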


\begin{Proof} From Corollary \ref{c*}, for any $(\textbf{y}, \textbf{z})\in \mathcal{E}_s\times\mathcal{E}_s,$ $0\leq \blacktriangle(\textbf{y}, \textbf{z})\leq s.$
Let $$\mathcal{E}_s(\eta)=\{(\textbf{y}, \textbf{z})\in
\mathcal{E}_s\times\mathcal{E}_s\;:\; \blacktriangle(\textbf{y},
\textbf{z})=\eta\},$$ for $0\leq \eta\leq s$. Now, $$
|\mathcal{E}_s(\eta)|=\left\{%
\begin{array}{ll}
    2(\eta+1), & \hbox{if $0\leq\eta\leq s-1$;} \\
    s+1, & \hbox{if $\eta=s$.}
\end{array}%
\right.
$$
 Thus
\begin{eqnarray*}
\texttt{E}(\blacktriangle) &=& \frac{1}{(s+1)^2}\sum\limits_{\eta=0}^{ s}\eta|\mathcal{E}_s(\eta)|;  \\
    &=& \frac{1}{(s+1)^2}\left(\sum\limits_{\eta=1}^{s-1}2\eta(\eta+1)+s(s+1)\right); \\
    &=& \frac{s(2s^2+3s+1)}{3(s+1)^2}.
\end{eqnarray*}
\end{Proof}

\begin{Theorem} The average $q$-dimension of the Euclidean hull of cyclic
serial codes from $\mathcal{C}(n; R)$ is
$$\texttt{E}_R(n)=\left\{%
\begin{array}{ll}
    \left(\frac{(2s+1)s}{6(s+1)}\right)n-\left(\frac{\left(s+2\right)s}{12\left(s+1\right)}\right)\texttt{B}_{n,q}, & \hbox{if $s$ even;} \\
    \left(\frac{(2s+1)s}{6(s+1)}\right) n-\left(\frac{s^{2}+2s+3}{12\left( s+1\right)}\right)\texttt{B}_{n,q}, & \hbox{if $s$ odd.}
\end{array}%
\right.$$ where
$\texttt{B}_{n,q}=\underset{\underset{i\in\mathcal{N}_q}{i\,|\,n}}{\sum}\phi(i).$
\end{Theorem}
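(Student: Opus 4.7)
\medskip

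\noindent\textbf{Proof proposal.} The plan is to exploit the bijection of Lemma \ref{lcs} to realize a uniformly chosen cyclic serial code as an independent collection of ``small'' random variables indexed by the symmetric cyclotomic cosets and the asymmetric pairs, and then apply linearity of expectation together with Lemmas \ref{l*} and \ref{l**}.

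First, I would fix notation by recalling that by Lemma \ref{lcs} the correspondence $C \leftrightarrow (\textbf{x}^{\circ},\textbf{y}^{\bullet},\textbf{z}^{\bullet})$ identifies $\mathcal{C}(n;R)$ with $\mathcal{E}_n(q,s)=\mathcal{I}_n(q,s)\times(\mathcal{J}_n(q,s))^2$. Under the uniform distribution on $\mathcal{C}(n;R)$, the blocks $(x^{(a)}_{il})_{0\le a<s}$ for each $(i,l)$ with $i\mid n$, $i\in\mathcal{N}_q$, $1\le l\le\gamma(i;q)$ become independent uniform elements of $\mathcal{E}_s$, and the blocks $\bigl((y^{(a)}_{jh})_a,(z^{(a)}_{jh})_a\bigr)$ for each $(j,h)$ with $j\not\in\mathcal{N}_q$ become independent uniform elements of $\mathcal{E}_s\times\mathcal{E}_s$. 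From Corollary \ref{c*} the random variable $\texttt{dim}_q(\mathcal{H}_0(C))$ decomposes as
\[
\texttt{dim}_q(\mathcal{H}_0(C))=\sum_{\substack{i\mid n\\ i\in\mathcal{N}_q}}\texttt{ord}_i(q)\sum_{l=1}^{\gamma(i;q)}\vartriangle_{il}\;+\;\sum_{\substack{j\mid n\\ j\not\in\mathcal{N}_q}}\texttt{ord}_j(q)\sum_{h=1}^{\beta(j;q)}\blacktriangle_{jh},
\]
where, by definition \eqref{var1}--\eqref{var2}, each $\vartriangle_{il}$ has the same law as the random variable $\vartriangle$ of Lemma \ref{l*} and each $\blacktriangle_{jh}$ has the same law as $\blacktriangle$ of Lemma \ref{l**}.

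Applying linearity of expectation and the two lemmas, and using the identities $\texttt{ord}_i(q)\,\gamma(i;q)=\phi(i)$ and $\texttt{ord}_j(q)\,\beta(j;q)=\phi(j)/2$, I would obtain
\[
\texttt{E}_R(n)=\texttt{E}(\vartriangle)\!\!\sum_{\substack{i\mid n\\ i\in\mathcal{N}_q}}\!\!\phi(i)\;+\;\tfrac12\,\texttt{E}(\blacktriangle)\!\!\sum_{\substack{j\mid n\\ j\not\in\mathcal{N}_q}}\!\!\phi(j).
\]
Next I would invoke the classical identity $\sum_{d\mid n}\phi(d)=n$ to rewrite the second sum as $n-\texttt{B}_{n,q}$, giving
\[
\texttt{E}_R(n)=\texttt{E}(\vartriangle)\,\texttt{B}_{n,q}+\tfrac{n-\texttt{B}_{n,q}}{2}\,\texttt{E}(\blacktriangle).
\]

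Finally I would substitute the closed forms $\texttt{E}(\blacktriangle)=\frac{s(2s+1)}{3(s+1)}$ and the two cases of $\texttt{E}(\vartriangle)$ from Lemma \ref{l*}, then collect the coefficient of $n$ and of $\texttt{B}_{n,q}$. For $s$ even, substituting $\texttt{E}(\vartriangle)=\frac{s^2}{4(s+1)}$ yields a $\texttt{B}_{n,q}$-coefficient equal to $\frac{3s^2-2s(2s+1)}{12(s+1)}=-\frac{s(s+2)}{12(s+1)}$; for $s$ odd, substituting $\texttt{E}(\vartriangle)=\frac{s-1}{4}$ gives $\frac{3(s-1)(s+1)-2s(2s+1)}{12(s+1)}=-\frac{s^2+2s+3}{12(s+1)}$, matching the two cases in the statement. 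The only genuine obstacle is justifying that the counting measure on $\mathcal{C}(n;R)$ really pushes forward, via the bijection of Lemma \ref{lcs}, to the uniform product measure on $\mathcal{E}_n(q,s)$; once this is in place the rest is linearity of expectation and the polynomial manipulation sketched above.
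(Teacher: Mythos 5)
Your proposal is correct and follows essentially the same route as the paper's own proof: identify $\mathcal{C}(n;R)$ with $\mathcal{E}_n(q,s)$ via Lemma~\ref{lcs}, decompose $\texttt{dim}_q(\mathcal{H}_0(C))$ by Corollary~\ref{c*}, apply linearity of expectation with Lemmas~\ref{l*} and~\ref{l**}, and use $\texttt{ord}_i(q)\gamma(i;q)=\phi(i)$, $\texttt{ord}_j(q)\beta(j;q)=\phi(j)/2$ and $\sum_{d\mid n}\phi(d)=n$ to reach $\texttt{E}_R(n)=\texttt{E}(\vartriangle)\texttt{B}_{n,q}+\frac{n-\texttt{B}_{n,q}}{2}\texttt{E}(\blacktriangle)$. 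Your final algebraic simplification in both parity cases is also correct, and the uniformity of the pushforward measure that you flag as the only obstacle is exactly the point the paper disposes of with the same one-line appeal to the bijection.
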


\begin{Proof} Let $Y$ be the random variable that takes as value
$\texttt{dim}_q(\mathcal{H}_0(C))$ when we choose at random a
cyclic serial code from $\mathcal{C}(n; R)$ with uniform
probability. Then $\texttt{E}(Y)=\texttt{E}_R(n).$ By
Lemma\,\ref{lcs}, there exists an one-to-one correspondence
between $\mathcal{C}(n; R)$, and $\mathcal{E}_n(q,s)$. Therefore,
choosing a cyclic serial code $C$ from $\mathcal{C}(n,R)$ their
probabilities are identical. By Corollary \ref{c*}, we obtain
$$Y=\underset{\underset{i\in\mathcal{N}_q}{i\,|\,n}}{\sum}\texttt{ord}_i(q)\left(\sum\limits_{l=1}^{\gamma(i;q)}
\vartriangle_{il}\right)+\underset{\underset{j\not\in\mathcal{N}_q}{j\,|\,n}}{\sum}\texttt{ord}_j(q)\left(\sum\limits_{h=1}^{\beta(j;q)}\blacktriangle_{jh}\right).$$
For all $i$ and $j$ dividing $n$ such that $i\in\mathcal{N}_q$ and
$j\not\in\mathcal{N}_q,$ from Lemmas \ref{l*} and \ref{l**}, we
note that $\texttt{E}(\vartriangle_{il})=\texttt{E}(\vartriangle)$
and $\texttt{E}(\blacktriangle_{jh})=\texttt{E}(\blacktriangle).$
So, we get
\begin{eqnarray*}
  \texttt{E}(Y) &=& \underset{\underset{i\in\mathcal{N}_q}{i\,|\,n}}{\sum}\texttt{ord}_i(q)\left(\sum\limits_{l=1}^{\gamma(i;q)}\texttt{E}(\vartriangle)\right)+\underset{\underset{j\not\in\mathcal{N}_q}{j\,|\,n}}{\sum}\texttt{ord}_j(q)\left(\sum\limits_{h=1}^{\beta(j;q)}\texttt{E}(\blacktriangle)\right); \\
    &=& \underset{\underset{i\in\mathcal{N}_q}{i\,|\,n}}{\sum}\phi(i)\texttt{E}(\vartriangle_{il})+\underset{\underset{j\not\in\mathcal{N}_q}{j\,|\,n}}{\sum}\frac{\phi(j)}{2}\texttt{E}(\blacktriangle_{jh});\\
    &=& \texttt{B}_{n,q}\texttt{E}(\vartriangle)+\left(\frac{n-\texttt{B}_{n,q}}{2}\right)\texttt{E}(\blacktriangle);\\
    &=& \frac{n}{2}\texttt{E}(\blacktriangle)-\texttt{B}_{n,q}\cdot\left(\frac{1}{2}\texttt{E}(\blacktriangle)-\texttt{E}(\vartriangle)\right).
   \end{eqnarray*}
From Lemmas \ref{l*} and \ref{l**}, we have
$$\texttt{E}_R(n)=\left\{%
\begin{array}{ll}
    \left(\frac{(2s+1)s}{6(s+1)}\right)n-\left(\frac{\left(s+2\right)s}{12\left(s+1\right)}\right)\texttt{B}_{n,q}, & \hbox{if $s$ even;} \\
    \left(\frac{(2s+1)s}{6(s+1)}\right) n-\left(\frac{s^{2}+2s+3}{12\left( s+1\right)}\right)\texttt{B}_{n,q}, & \hbox{if $s$ odd.}
\end{array}%
\right.$$
\end{Proof}

From \cite{Ske03}, we have $\texttt{B}_{n,q}=n$ if
$n\in\mathcal{N}_q$ and $1\leq \texttt{B}_{n,q}\leq \frac{2n}{3}$
if $n\not\in\mathcal{N}_q.$ Thus
 \begin{itemize}
    \item If $n\in\mathcal{N}_q$, then $$\texttt{E}_R(n)=\left\{%
\begin{array}{ll}
    \frac{s^2n}{4(s+1)}, & \hbox{if $s$ even;} \\
    \frac{n(s-1)}{4}, & \hbox{if $s$ odd.}
\end{array}%
\right.$$
    \item If $n\not\in\mathcal{N}_q$, then $$\left\{%
\begin{array}{ll}
   \frac{(5s+1)sn}{18(s+1)}\leq \texttt{E}_R(n) \leq\frac{2n(2s+1)s-(s+2)s}{12(s+1)}, & \hbox{if $s$ even;} \\
    \frac{(5s^{2}+s-3)n}{18(s+1)}\leq \texttt{E}_R(n) \leq\frac{2ns(2s+1)-(s^2+2s+3)}{12(s+1)}, & \hbox{if $s$ odd.}
\end{array}%
\right.$$
 \end{itemize}

\begin{Remark} Note that 
    $\texttt{E}_R(n)$ grows at the
    same rate as $ns$ when $s$ and $n$ tend to
    infinity. Thus, the upper limit of the sequence
    $\left(\frac{\texttt{E}_R(n)}{sn}\right)_{\underset{\texttt{gcd}(p,n)=1}{(s,n)\in(\mathbb{N}\backslash\{0\})^2}}$
    is at most $\frac{1}{3}$ and its its lower limit is at least
    $\frac{5}{18}.$
\end{Remark}

\section{Conclusion}

The Galois hulls of cyclic serial codes of length $n$ over an
arbitrary finite chain ring with parameters $(p,r,a,e,s)$ have
been investigated. Especially, the parameters and the average of
the $q$-dimension of the Euclidean hull of cyclic codes are
studied in terms of triple-sequences.  The parameters and the
average $p^r$-dimensions of the Euclidean hulls of cyclic serial
codes of arbitrary length have been determined as well.
Asymptotically, it has been shown that the average of
$p^r$-dimension of the Euclidean hull of cyclic serial codes of
length over $R$ grows the same rate as the length of the codes. An
extension of this paper to the case of the hulls of cyclic or
constacyclic codes over finite chain rings is an interesting
research problem as well. It would be interesting to study the
properties of Euclidean hulls of negacyclic serial codes.

\bibliographystyle{elsarticle-num}

\end{document}